\theoremstyle{plain}
\newtheorem{theorem}{Theorem}[section]
\newtheorem{definition}[theorem]{Definition}
\theoremstyle{remark}
\newtheorem{corollary}[theorem]{Corollary}
\newtheorem{proposition}[theorem]{Proposition}
\newtheorem*{claim}{Claim}
\newtheorem{subclaim}{Subclaim}
\newtheorem{fact}[theorem]{Fact}
\newtheorem*{example}{Example}
\DeclareMathOperator{\UCT}{UCT}
\DeclareMathOperator{\proj}{proj}
\DeclareMathOperator{\CSP}{CSP}
\newcommand{\pq}{\ensuremath{pq}\xspace}
\newcommand{\SDm}{\textrm{SD}($\wedge$)\xspace}
\newcommand{\tempA}{{\mathbf A}}
\newcommand{\alg}[1]{{\mathbb #1}}
\newcommand{\algA}{{\mathbb A}}
\newcommand{\algB}{{\mathbb B}}
\newcommand{\algC}{{\mathbb C}}
\newcommand{\algD}{{\mathbb D}}
\newcommand{\algE}{{\mathbb E}}
\newcommand{\algF}{{\mathbb F}}
\newcommand{\algG}{{\mathbb G}}
\newcommand{\algP}{{\mathbb P}}
\newcommand{\algR}{{\mathbb R}}
\newcommand{\algS}{{\mathbb S}}
\newcommand{\inst}[1]{{\mathcal #1}}
\newcommand{\instI}{{\mathcal I}}
\newcommand{\instJ}{{\mathcal J}}
\newcommand{\subd}{\leq_{\text{sub}}}
\newcommand{\tuple}[1]{{\mathbf #1}}
\newcommand{\compose}[1]{{(#1)}}
\newcommand{\cover}{\varphi}
\newcommand{\congr}[1]{\alpha_{#1}}
\newcommand{\congrrel}[1]{\mathrel{\congr{#1}}}
\newcommand{\prop}{\sqsubseteq}
\newcommand{\abs}{\unlhd}
\newcommand{\lastcomponent}{{\mathcal R}}
\newcommand{\lastsets}{{\mathcal B}}
\newcommand{\sol}{e}
\begin{document}

\setlength{\pdfpageheight}{\paperheight}
\setlength{\pdfpagewidth}{\paperwidth}

\title{Weak consistency notions\\for all the CSP{s} of bounded width%
\thanks{Research partially supported by National Science Center under grant DEC-2011/01/B/ST6/01006.}} 

\author{Marcin Kozik}
\affil{Theoretical Computer Science Department\\The Faculty of Mathematics and Computer Science\\Jagiellonian University}
\renewcommand\Affilfont{\small}

\maketitle
\begin{abstract}
  The characterization of all the Constraint Satisfaction Problems of bounded width, 
  proposed by Feder and Vardi [SICOMP'98], was confirmed in [Bulatov'09] and
  independently in [FOCS'09, JACM'14]. 
  Both proofs are based on the (2,3)-consistency~%
  (using Prague consistency in [FOCS'09], directly in [Bulatov'09])
  which is costly to verify.

  We introduce a new consistency notion, Singleton Linear Arc Consistency~(SLAC),
  and show that it solves the same family of problems. 
  SLAC is weaker than Singleton Arc Consistency~(SAC) and thus the result answers the question from [JLC'13]
  by showing that SAC solves all the problems of bounded width.
  At the same time the problem of verifying weaker consistency~(even SAC) offers significant computational advantages 
  over the problem of verifying (2,3)-consistency which improves the algorithms 
  solving the CSPs of bounded width.
\end{abstract}

\section{Introduction}
  An instance of the Constraint Satisfaction Problem consists of variables and constraints.
  In the decision version of CSP the question is whether the variables can be evaluated in such a way
  that all the constraints, 
  often described as a relation constraining a sequence of variables,
  are satisfied.

  In a seminal paper~\cite{FV98} Feder and Vardi proposed to parametrize the problem by restricting the constraining relations
  allowed in instances. 
  More formally, for every finite relational structure $\tempA$~(called in this context {\em a template} or {\em a language})
  the CSP parametrized by $\tempA$, $\CSP(\tempA)$, is the CSP 
  restricted to instances with all the constraint relations taken from $\tempA$.
  
  Clearly, for any $\tempA$, the problem $\CSP(\tempA)$ is in NP and it is quite easy to construct
  relational structures which define NP-complete CSPs or CSPs solvable in polynomial time. 
  One of the main problems in the area is {\em the CSP Dichotomy Conjecture}~\cite{FV98}, which postulates 
  that for every $\tempA$ the problem $\CSP(\tempA)$ is NP-complete or solvable in a polynomial time.
  The CSP Dichotomy Conjecture remains open.

  The class of problems which can be expressed as a $\CSP(\tempA)$ is very rich; 
  it is easy to construct relational structure $\tempA$ such that $\CSP(\tempA)$ is
  2-colorability of graphs, 3-SAT, 3-Horn-SAT, or a problem of solving systems of linear equations
  in $\mathbb Z_2$. 
  The last problem on this list is a canonical example of a CSP with an {\em ability to count}.
  Feder and Vardi conjectured that all the CSPs which 
  do not have the ability to count are solvable by local consistency checking.
  Local consistency checking algorithms operate by 
  constructing a family of local solutions of an instance
  and enforcing some form of consistency on it.
  The CSPs solvable by such algorithms are {\em CSPs of bounded width}% 
  \footnote{In this paper we use phrases "CSPs solvable by local consistency checking" and "CSPs of bounded width" interchangeably.},
  and the conjecture of Feder and Vardi is the {\em the bounded width conjecture}.%
  \footnote{Not to be confused with the dichotomy conjecture of Feder and Vardi.}

  A breakthrough in the research on the parametrized CSPs appeared with
  an introduction of the {\em algebraic approach}~\cite{duality1,duality2}.
  This approach is based on the Galois correspondence
  between relational structures and algebras~\cite{GaloisOLD}.
  At it's heart lies a method of associating 
  algebras to templates in such a way that the computational properties of CSP~(parametrized by the template)
  correspond to high-level algebraic properties of the algebra.

  An algebraic approach allowed to restate and formalize~\cite{algebraicBW} the bounded width conjecture.
  The restated conjecture postulated that 
  a template has bounded with if and only if the associated algebra generates a variety which
  is congruence meet semi-distributive.
  This conjecture, and consequently the bounded width conjecture, 
  was confirmed by two independent algebraic proofs~\cite{FOCSBW,BW,BulBW}.

  CSPs of bounded width form a big class of problems and appear naturally in many areas.
  To name one such connection: Guruswami and Zhou conjectured~\cite{RobustConj} that the 
  class of CSPs which admit {\em robust approximating algorithm}~(i.e. a poly-time algorithm
  that, if $1-\varepsilon$ fraction of constraints can be satisfied in an instance,
  provides a ``solution'' satisfying $1-g(\varepsilon)$ fraction of constraints 
  and $g(\varepsilon)\rightarrow 0$ as $\varepsilon\rightarrow 0$) coincides with the 
  class of CSPs of bounded width.
  This conjecture was confirmed in~\cite{DKRobust,RobustSTOC}.

  The study of consistency notions for CSP predates 
  the parametrized approach of Feder and Vardi.
  In many practical applications a consistency checking algorithm is used
  to quickly rule out some instances with no solutions with less regard 
  for the existence of solution if the instance cannot be ruled out.
  Considered instances are usually large and very often sparse 
  and thus require algorithms which work fast and do not disturb the structure of the instance.
  For an overview of consistency notions and algorithms we refer the reader to e.g.~\cite{Debruyne97somepracticable, Bodirsky2010, SAC}. 
  
  The proof~\cite{FOCSBW,BW} of the bounded width conjecture required $(k,l)$-con\-sis\-ten\-cy~%
  (compare Definition 3.1 in~\cite{BWColapse}) with $k,l$ dependent on the maximal arity of relations in the template.
  The main theorem of~\cite{BWColapse} shows that $(2,3)$ minimality suffices for all the templates.
  On the other hand both proofs rely on a technical notion of Prague instance, which can be viewed as another
  consistency notion. 
  The proof of~\cite{RobustSTOC} required another consistency definition: a definition of a weak Prague instance
  for templates with binary constraints~(Definition 3.3 in~\cite{RobustSTOC}).
  All these results, as well as~\cite{BulBW}, even in restriction to templates with binary constraints
  require some version of $(2,3)$-consistency which is often discredited in practical applications.

  Singleton arc consistency is, on the other hand, a  well established consistency notion 
  easier to verify than $(2,3)$-consistency.
  Moreover an algorithm can verify SAC without distorting the structure of the instance~%
  (unlike the algorithms verifying $(2,3)$-consistency).
  In~\cite{SAC} authors discuss applicability of SAC to CSPs of bounded width 
  and ask if every CSP of bounded width is solvable by SAC.

  We answer this question in positive by introducing a new consistency notion,
  Singleton Linear Arc Consistency, which is 
  weaker than SAC and showing that it solves all the CSPs solvable by local consistency checking.
  The notion is not weaker than the one used in~\cite{RobustSTOC}
  to establish the conjecture of Guruswami and Zhou, but, at the cost of complicating proofs, 
  it could be adjusted to imply that result as well.
  The definition of SLAC is motivated by the work in~\cite{majorityLinDATA,Bodirsky2010}
  and provides a new insight into bounded width templates 
  which can be important when attempting to construct algorithms 
  solving wider classes of CSPs~(with the dichotomy conjecture as a long term goal);
  or attempting a fine-grained classification of CSPs along the lines of 
  e.g.~\cite{majorityLinDATA,nuLinDATA}.

  The paper is organized as follows.
  In the next section we describe a number of consistency notions and define the Singleton Linear Arc Consistency.
  In section~\ref{sect:basicinstances} we 
  state the main theorem of the paper and provide a rough sketch of the proof.
  Section~\ref{sect:algebra} contains definitions of algebraic notions and collects established tools in the algebraic approach to CSP.
  In section \ref{sect:patterns} we rephrase the consistency notions from section~\ref{sect:basicinstances} into a language
  which allows us to work with the instances of CSP.
  In section~\ref{sect:result} we return to the main theorem of the paper and present a more in-depth sketch of the proof
  deferring most of the actual reasoning to the very last sections of the paper.
  In sections~\ref{sect:toolsabs},~\ref{sect:toolsnoabs} we derive tols 
  responsible for the two cases in the proof of the main theorem.
  Finally in sections~\ref{sect:noabs},~\ref{sect:abs} we finsh the proof of the main result.
  The last section contains acknowledgments.

\section{Local consistency notions for CSPs} \label{sect:consistency}
  The purpose of local consistency checking is to eliminate,
  at a low computational cost,
  instances with no solutions. 
  However if an algorithm checking a local consistency notion stops without deriving a contradiction
  the instance does not need to have a solution.
  In the parametrized approach to  CSP the template 
  has bounded width if some local consistency notion derives a contradiction for every non-solvable instance. 
  
  Formally 
  {\em a template} of a CSP is a fixed, finite relational structure usually denoted by $\tempA$, and
  {\em an instance of  CSP over template $\tempA$} consists of a set of variables and a set of constraints
  which are pairs: $((x_1,\dotsc,x_n),R)$ where $x_i$ is a variable and $R$ is an $n$-ary relation in $\tempA$.
  {\em A solution of an instance} is a function $f$ sending variables to the universe of $\tempA$,
  usually denoted by $A$,
  in such a way that for every constraint $((x_1,\dotsc,x_n),R)$ we have $(f(x_1),\dotsc,f(x_n))\in R$.

  There exists an extensive literature on various local consistency notions
  used in CSP~\cite{Debruyne97somepracticable,Bodirsky2010,SAC} etc.
  In the remaining part of this section we present only these 
  which are directly pertinent to our approach.
  This is done using the notion of a DATALOG program.

  \subsection{DATALOG programs}
    A DATALOG program derives new facts about a relational structure using a set of rules.
    It operates in the languages of this structure~%
    (the relations/predicates%
    \footnote{We abuse the notation by identifying predicates with the relations in the structure.} 
    of that language are called {\em extensional database} or EDBs) 
    enhanced by a number of auxiliary predicates/relations~%
    (called {\em intensional database} or IDBs).

    A DATALOG rule has a {\em head} which is a single IDB on an appropriate number of  variables
    and the {\em body} which is a sequence on IDBs and EDBs.
    During the execution of the program the head IDB is updated
    whenever the body of the rule is satisfied. 
    The computation ends when no relation can be updated, 
    or when the goal predicate is reached.

    \begin{example}
      The following DATALOG program operates on digraphs.
      The edge relation of a digraph~(denoted by $E$) is the single binary EDB in the program.
      The program uses two IDBs: $ODD$ and $GOAL$~(where $GOAL$ is the goal predicate) and 
      verifies whether the digraph has a directed cycle of odd length.
      \begin{align*}
        ODD(x,y) &\coloneq E(x,y) \\
        ODD(x,v) &\coloneq E(x,y) \wedge E(y,z) \wedge ODD(z,v) \\
        GOAL &\coloneq ODD(x,x)
      \end{align*}
      The program computes the relation $ODD$ and fires the $GOAL$ predicate
      when a directed circle of odd length is found.
    \end{example}
    
  \subsection{Arc consistency~($1$-consistency)}\label{sect:ac}
    Arc consistency is one of the most basic local consistency notions. 
    For the purpose of this paper we present it in two ways: using a DATALOG program
    and defining an iterative procedure.
    The notion is sometimes~\cite{DalmauWidth1} called {\em generalized arc consistency} 
    as it is adjusted to work with relations of all arities.
    
    A DATALOG program verifying arc consistency for a template $\tempA$ has
    an IDB for each subset of the universe of $\tempA$. 
    Whenever $B(x)$~(for a variable $x$ and $B\subseteq A$) is derived 
    we understand that the program established that 
    in every potential solution the variable $x$ is evaluated into $B$.
    
    The derivation itself proceeds in steps: 
    in each step we focus a single constraint $((x_1,\dotsc,x_n),R)$  
    and use already established facts~(i.e. IDBs) about the variables
    \begin{equation*}
      A'_1(x_1), A''_1(x_1),\dotsc,A'_2(x_2),\dotsc
    \end{equation*}
    to derive that $x_i$ needs to be evaluated in the appropriate projection of%
    \footnote{Note that a variable need not appear in any of the IDBs, to address this purely technical  problem 
      we set the intersection of the empty family of sets to be $A$.}
    \begin{equation*}
      R\cap \big(\bigcap_jA_1^{(j)}\times\dotsb\times\bigcap_jA_n^{(j)}\big). \tag{\ddag}
    \end{equation*}
    A DATALOG rule realizing this derivation is 
    \begin{align*}
      B(x_i) &\coloneq R(x_1,\dotsc,x_n) \wedge \\
      &\qquad A_1'(x_1) \wedge A_1''(x_1)\wedge \dotsc \wedge \\
      &\qquad A_2'(x_2) \wedge A_2''(x_2)\wedge \dotsc \wedge \\
      &\qquad \dotsc \\
      &\qquad A_n'(x_n)\wedge A_n''(x_n) \wedge\dotsc
    \end{align*}
    where $B$ is a projection of $(\ddag)$ to the $i$--th coordinate.
    To construct a DATALOG program verifying arc consistency for a fixed template $\tempA$
    we collect all such rules, 
    and set the goal predicate to be the predicate associated with an empty set.
    Whenever this goal predicate is reached we will say that the program {\em derived a contradiction}.

    Alternatively it can be described be a procedure updating, 
    for every variable $x$, a special unary constraint $C_x$.
    The idea behind these constraints is to store information derived about the variable. 
    More formally:
    \newcommand{\pushcode}[1][1]{\hskip\dimexpr#1\algorithmicindent\relax}
    \begin{algorithmic}
      \For{every variable $x$ in $\instI$}
        \State add to $\instI$ a new special constraint $C_x := (x,A)$
      \EndFor
      \Repeat
        \For{every variable $x$ in $\instI$}
          \State $B$ := $A$
          \For{every value $a\in A$}
            \For{each constraint $((x_1,\dotsc,x_n),R)$ with $x_i=x$}
              \State let $R'\subseteq R$ contain all tuples respecting $C_{x_j}$'s
              \If{no $(a_1,\dotsc,a_n)$ in $R'$ has $a_i=a$}
                \State remove $a$ from $B$
              \EndIf
            \EndFor
          \EndFor
          \State $C_x$ := $(x,B)$
        \EndFor
      \Until{none of the $C_x$ changed}
    \end{algorithmic}
    For a constraint $((x_1,\dotsc,x_n),R)$ a tuple $(a_1,\dotsc,a_n)\in R$ respects $C_{x_j} = (x_j,A_{x_j})$ if
    $a_j\in A_{x_j}$.
    The algorithm {\em derives a contradiction} when one of the special constraints is of the form 
    $C_x = (x,\emptyset)$ i.e. has an empty constraint relation.

    Note that if the instance has a solution sending variable $x$ to $a\in A$
    and the Datalog program derives $B(x)$ then we have $a\in B$,
    and in the procedural version if $C_x = (x,B)$ at some step, then $a\in B$.
    Therefore deriving an empty predicate on a variable~(or a special constraint with an empty set)
    implies that the variable has no possible values
    and that the instance has no solution.
    
    \begin{example}
      Let the template $\tempA$ have universe $A = \{0,1\}$ and a single binary relation $\neq$.
      To avoid confusion we denote the IDB associated with $\{0\}$ subset~(resp. $\{1\}$ subset) by
      $\cdot = 0$~(resp. $\cdot = 1$).
      The following DATALOG program verifies $1$-consistency for $\tempA$:
      \begin{align*}
        \emptyset(x) &\coloneq  x=0 \wedge x=1, \\
        x=0 &\coloneq x\neq y \wedge y=1, \\ 
        y=0 &\coloneq x\neq y \wedge x=1, \\ 
        x=1 &\coloneq x\neq y \wedge y=0, \\
        y=1 &\coloneq x\neq y \wedge x=0.
      \end{align*}
      Note that all the rules are necessary as a Datalog program disregards 
      commutativity of $\neq$.
      
      On the instance with variables $x,y,z$ and constraints $x\neq y, y\neq z, z\neq x$,
      this program derives no 
      contradiction despite the fact that the instance has no solution.
    \end{example}
    This example shows that the template $(\{0,1\},\neq)$ is not solvable by arc consistency.
    As it is solvable by a different local consistency notion we conclude that
    the arc consistency is not strong enough to verify all the CSPs of bounded width.
    In fact all the CSPs solvable by arc consistency are characterized in~\cite{DalmauWidth1}.

    Note that, for  a fixed template,
    arc-consistency can be computed quite quickly i.e. in the time linear with respect to the number of constrains.
  
  \subsection{Path consistency~($(2,3)$-consistency)}
    The results establishing~\cite{BulBW,FOCSBW,BW} the bounded width conjecture of Feder and Vardi~\cite{FV98}
    were using higher consistency notions. 
    The paper of Barto~\cite{BWColapse} improves these results using slightly different concepts.
    We will not define these notions since they are not directly connected to 
    this paper. 
    We just mention that the result~\cite{BWColapse} showed that
    if the template has bounded width then every $(2,3)$-minimal instance has a solution.

    When restricted to templates with binary relations all the above results~\cite{BulBW, FOCSBW, BW, BWColapse}
    require  some form of $(2,3)$-consistency also known as path-consistency. 
    A DATALOG program verifying $(2,3)$-consistency, for a binary template,
    is constructed in a way similar to the program for arc consistency, 
    except for two differences:
    \begin{itemize}
      \item it introduces a binary IDB for each subset of $A^2$~(instead of unary ones for subsets of $A$), and
      \item it allows derivation rules with body on three variables~%
        (arc-consistency on binary relations would have at most two variables in the body of every DATALOG rule).
    \end{itemize}
    Even when restricted to binary constraints SLAC, or SAC, is strictly weaker than $(2,3)$-consistency.
    For constraints of arbitrary arities both notions are strictly weaker than $(2,3)$ minimality.

    Verifying $(2,3)$-consistency is inefficient for many reasons: the computational complexity is cubic 
    with respect to the number of variables~\cite{Debruyne97somepracticable}. 
    More importantly the instance of the constraint satisfaction program is loosing its structure.
    In many practical applications the CSP instances are sparse --- 
    establishing arc consistency do not change the structure of an instance, 
    however establishing $(2,3)$-consistency on any~(even sparse) instance would essentially
    add a constraint for every pair of variables loosing a significant computational advantage.

    The proofs in~\cite{FOCSBW, BW, BWColapse} were using yet another, technical, consistency notion:
    Prague consistency. 
    This notion, although interesting theoretically, is not easy to verify and 
    did not lead to algorithms more efficient then those verifying $(2,3)$-consistency.

  \subsection{Singleton Arc Consistency~(SAC)}\label{sect:sac}
    Singleton arc-consistency is a notion stronger than arc consistency, but weaker than path consistency.
    That means that an instance which is path consistent is necessarily SAC,
    and that every SAC instance is arc consistent. 
    The reverse implications do not hold.

    Let us fix an instance $\instI$. 
    SAC is defined using a procedure similar to the one defining AC. 
    This time the constraints $C_x$ are updated by running arc consistency with {\em the value of $x$ fixed to an arbitrary $a$}.
    Formally:
    \begin{algorithmic}
      \For{every variable $x$ in $\instI$}
        \State add to $\instI$ a new special constraint $C_x := (x,A)$
      \EndFor
      \Repeat
        \For{every variable $x$ in $\instI$}
          \State $B$ := $A$
          \For{every value $a\in A$}
            \State run AC on $\instI$ restricted by $C_y$'s and  $(x,\{a\})$
            \If{AC derived a contradiction}
              \State remove $a$ from $B$
            \EndIf
          \EndFor
          \State $C_x$ := $(x,B)$
        \EndFor
      \Until{none of the $C_x$ changed}
    \end{algorithmic}
    Restricting $\instI$ by $C_y$'s means substituting, in every constraint $((x_1,\dotsc,x_n),R)$, 
    the relation $R$ with 
    $R\cap\prod_{i=1}^n A_{x_i}$ where the $A_{x_i}$'s are taken from special constraints i.e. $C_{x_i}=(x_i,A_{x_i})$. 
    The additional constraint $(x.\{a\})$ fixes value of $x$ to $a$.
    The algorithm derives a contradiction in the same way the algorithm for AC did.

    The algorithm above, while conceptually simple, does not provide the best possible running time.
    Using the fact that the arc consistency, in the inner most loop, is computed multiple times on more or less the same instance
    it is possible to verify  SAC in time bounded
    by a constant times the number of constraints times the number of variables~\cite{SACCompl}.
    More importantly SAC does not alter the structure of the instance and therefore runs efficiently on 
    sparse instances with many variables.

    SAC has been studied~\cite{SAC} but there was no characterization of the set of templates
    solvable by SAC. 
    As Singleton Linear Arc Consistency is weaker then SAC the main theorem of this paper
    shows that SAC solves all the CSPs of bounded width.
    SLAC offers some computational advantages over SAC, however it is not clear if it allows 
    to construct an algorithm with better worst-case time complexity than the algorithm mentioned above.

  \subsection{Linear Arc Consistency~(LAC)}
    This consistency notion is a weaker version of arc consistency and originates in~\cite{Bodirsky2010,majorityLinDATA}.

    In order to define the notion we take the DATALOG program for arc consistency from section~\ref{sect:ac}
    and remove from it all the rules with more than one IDB in the body.
    Linear Arc Consistency is the consistency verified by this program.
    As the  program has fewer derivation rules than the original one 
    it computes less information about the instance.
    
    On the other hand
    LAC can be verified in NL~(as it essentially reduces to reachability for directed graphs),
    while AC solves some of the P-complete CSPs~(e.g. Horn-SAT) 
    and we cannot expect to put it into a low complexity class.

  \subsection{Singleton Linear Arc Consistency~(SLAC)}\label{sect:slac}
    This is the main consistency notion of this paper.
    It is stronger than AC, but weaker than SAC.
    In Theorem~\ref{thm:mainsimple}
    we show that SLAC solves all the constraint satisfaction problems of bounded width.

    The Singleton Linear Arc Consistency 
    is defined by an algorithm almost identical to the algorithm for SAC in section~\ref{sect:sac}; 
    then only difference lies in the inner most loop where~(for SLAC) we evaluate LAC instead of AC.
    Note that it is important~(unlike in the case of SAC) that the LAC is evaluated in a restricted instance.
    Still the cost of excluding a single candidate for a variable is~(unlike for SAC) in NL.

\section{Parametrized CSP and the main theorem}\label{sect:basicinstances}
  
  The following corollary, of a more technical Theorem~\ref{thm:mainsimple},
  is the easiest way to state the main result of the paper.
  \begin{corollary}\label{cor:simple}
    SLAC derives a contradiction {\em on every} unsolvable instance over a bounded width template.
  \end{corollary}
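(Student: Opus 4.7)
The plan is to prove the contrapositive: if SLAC does not derive a contradiction on an instance $\instI$ over a bounded width template $\tempA$, then $\instI$ has a solution. Running SLAC to stability produces, for each variable $x$, a nonempty set $A_x\subseteq A$ such that fixing $x$ to any $a\in A_x$ leaves the instance (restricted by all the $C_y$'s) LAC-consistent. After replacing every constraint relation $R$ on variables $(x_1,\dotsc,x_n)$ by $R\cap\prod_i A_{x_i}$, one obtains an equivalent SLAC instance in which every value in every $A_x$ is singleton-LAC-consistent; this is the starting point for an induction.

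I would then argue by induction on a natural size measure, such as the sum of the sizes of the (restricted) constraint relations. The algebraic characterization of bounded width \cite{algebraicBW, FOCSBW, BW, BulBW} supplies the structural input: the polymorphism algebra of $\tempA$ generates a congruence meet semi-distributive variety, which both forbids affine-type behavior in subdirectly irreducible subquotients and guarantees a well-behaved theory of absorbing subuniverses.

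The inductive step splits into two cases. In the \emph{absorption case}, some $A_x$ or some constraint relation has a proper absorbing subuniverse $B$; restricting to $B$ and propagating via arc consistency produces a strictly smaller instance which, one must verify, is still SLAC, and induction then yields a solution. In the \emph{no-absorption case}, one analyses the ``linkedness'' relation on each $A_x$ that LAC can establish, and shows, using SD($\wedge$) together with the absence of absorption, that this relation is a congruence whose quotient is again a bounded width template. The quotient instance is SLAC and strictly simpler, so induction closes this case as well, and the solution lifts back to $\instI$.

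The main obstacle, I expect, is the no-absorption case. Because LAC disallows rule bodies containing more than one IDB, it is strictly weaker than AC, and the standard ``linked plus no absorption implies congruence'' arguments used for $(2,3)$-consistency or Prague instances do not transfer directly: those arguments routinely exploit the ability to conjoin several derived facts about a single variable. The delicate point is to show that the single-IDB derivations allowed by LAC nevertheless carry enough information to force a congruence on each $A_x$; this will likely require introducing a suitable notion of derivation ``pattern'' and a careful combinatorial analysis of how such patterns can be composed in an SD($\wedge$) template, exploiting the singleton fixation step of SLAC to compensate for the weakness of plain LAC.
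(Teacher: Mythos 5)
Your proposal sketches a full proof of the main theorem rather than the corollary itself. In the paper, Corollary~\ref{cor:simple} has a very short proof: take the SLAC instance obtained after the algorithm stabilizes, map it via a core homomorphism $h$ into a rigid core $\tempA'$ of $\tempA$, invoke Theorem~\ref{thm:mainsimple} on the image to get a solution, and pull that solution back (which works because $h$ is an endomorphism of $\tempA$). Your proposal never performs this rigid-core reduction, and this is not an optional simplification: the entire algebraic toolbox (absorption, Theorem~\ref{thm:abs}, Proposition~\ref{prop:absconst}, Corollary~\ref{cor:noabs}, etc.) requires the associated algebra to be idempotent, which holds for rigid cores but not for arbitrary templates. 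As written, your induction would be operating with algebras for which none of the cited absorption results are known to hold.

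Setting aside the core reduction and comparing your plan for the theorem itself, your no-absorption case is not what the paper does and I don't think it works as stated. You propose to find a congruence on each $A_x$, pass to the quotient instance, solve it by induction, and ``lift back.'' But a solution to a quotient instance only tells you which congruence block each variable should be sent into; lifting requires solving a restricted instance inside those blocks, and that is precisely the original difficulty, now on a smaller domain --- so you have not closed the loop. The paper instead fixes a single variable $x$ with $\algA_x/\congr{x}$ simple, picks a \emph{block} $\algA_x'$, and propagates this choice via path-patterns (``proper'' patterns induce isomorphic simple quotients on other variables and force a compatible block $\algA_y'$; ``non-proper'' ones force full products and impose no restriction). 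The key claims --- that the block assignment is independent of the chosen proper pattern, that the restricted instance is $1$-consistent, and that it is again SLAC --- are exactly what makes the shrinking step go through, and none of them follow from a quotienting argument. Your absorption case also compresses two distinct steps (finding an absorbing $1$-consistent subinstance, then using Theorem~\ref{thm:propagatetoabs} to push down to a SLAC subinstance), and the second step needs the universal covering tree machinery and Proposition~\ref{prop:absconst}; ``restricting and propagating via arc consistency'' alone does not obviously preserve SLAC. Finally, you correctly flag that LAC is strictly weaker than AC, but you leave the resolution as an unspecified ``combinatorial analysis''; the paper's actual resolution is the pattern calculus of Section~\ref{sect:patterns}, where SLAC is characterized via solvability of path-patterns that are cycles, and the singleton fixation is what supplies the constant tuples feeding Proposition~\ref{prop:absconst}. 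Concretely, then: the missing rigid-core reduction and the unjustified lifting from the quotient are genuine gaps; the rest is an incomplete but roughly correctly-oriented sketch.
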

  \noindent As SLAC cannot derive a contradiction on a solvable instances we immediately get
  that SLAC solves the CSP for every template of bounded width.

  However, in order to prove Theorem~\ref{thm:mainsimple} and thus the corollary, we use algebraic characterizations which 
  require a number of standard reductions.
  First reduction, already present in~\cite{FV98}, allows us to add to $\tempA$ relations pp-definable in $\tempA$.
  A relation is {\em pp-definable} in $\tempA$ if it can be defined using relations in $\tempA$,
  conjunction and existential quantifiers. 
  Adding a pp-definable relation to a template does not increase the computation complexity of the corresponding CSP.
  More importantly other properties, 
  like being solvable by local consistency checking, 
  are preserved under this construction.
  Throughout the paper we assume that all the pp-definable unary relations are already present in $\tempA$.

  Further standard reductions allow us to focus on $\tempA$ which are {\em cores} that is
  relational structures such that every endomorphism of $\tempA$~(i.e. a homomorphism from $\tempA$ to $\tempA$)
  is a bijection. 
  The final standard reduction allows us  consider only {\em rigid cores} i.e. relational structures 
  for which identity is the only endomorphism~\cite{FV98,duality1,duality2} .
  
  For technical reasons the instances we consider throughout the paper have, 
  for every variable $x$, a constraint of the form $(x,A_x)$ 
  where $A_x$ is a subset of $A$ pp-definable in $\tempA$.
  For historical reasons the sets $A_x$ are sometimes called potatoes.
  Note that every instance can be turned into such an instance by adding dummy constraints of the form $(x,A)$.
  
  \subsection{$1$-consistent instances}
    \begin{definition}
      An instance is $1$-consistent if, for every constraint $((x_1,\dotsc,x_n),R)$ in the instance,
      the projection of $R$ to the $i$-th coordinate is equal to $A_{x_i}$.
    \end{definition}
    Note that the Datalog program for arc consistency run on a $1$-consistent instance
    would immediately derive $A_x(x)$ for every variable $x$ 
    and would not derive any further information.
    Moreover, for any instance $\instI$, we can use AC to construct a $1$-consistent instance.
    If AC stops without deriving a contradiction then, for every variable $x$ of $\instI$ we have 
    a special constraint of the form $C_x = (x,A_x)$ for some non-empty $A_x$.

    After restricting the instance by $C_x$'s~(equivalently to $A_x$'s), 
    as in section~\ref{sect:sac}, we obtain 
    a new instance $\instI'$.
    The instance $\instI'$ has exactly the same set of solutions as $\instI$ and is $1$-consistent~%
    (since the derivation of AC stopped).
    Moreover the new relations appearing in the constraints are pp-definable in the relational structure.
    We sometimes call such $\instI'$ {\em a $1$-consistent subinstance of $\instI$}.

  \subsection{SLAC instances and the main theorem}\label{sect:mainsimple}
    Comparing the procedural version of AC and SLAC we immediately 
    obtain that the SLAC verifies~(among other things) arc consistency of the instance.
    If the SLAC algorithm stops, on an instance $\instI$,  with special constraints of the form $C_x=(x,A_x)$
    we can restrict $\instI$ by $C_x$'s --  a result of such a restriction is a {\em SLAC instance} .

    This instance has the same set of solutions as the original instance, 
    it is $1$-consistent and SLAC run on such an instance would immediately derive $C_x=(x,A_x)$
    and would not derive any further information~%
    (this property is sometimes used as an equivalent definition of a SLAC instance).

    The main theorem of this paper claims solutions for SLAC instances:
    \begin{theorem}\label{thm:mainsimple}
      Let $\tempA$ be a template which is a rigid core. 
      If $\CSP(\tempA)$ has bounded width,
      then every SLAC instance in $\CSP(\tempA)$ has a solution.
    \end{theorem}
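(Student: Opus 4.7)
The plan is to prove Theorem~\ref{thm:mainsimple} by induction on $\sum_x |A_x|$ and, in the inductive step, to produce a strictly smaller SLAC instance whose every solution extends to a solution of the original. The base case, when each potato is a singleton, is immediate: a $1$-consistent instance over singleton potatoes simply reads off as an assignment, and $1$-consistency forces every constraint to be satisfied. For the inductive step I would first recast SLAC into the pattern language promised in Section~\ref{sect:patterns}, so that the algorithmic definition becomes a statement about which sequences of constraints can be traversed starting from a chosen pair $(x,a)$. This pattern formulation is the natural setting on which polymorphisms of $\tempA$ act coordinatewise, and it is the interface between the combinatorial side of SLAC and the algebraic machinery.

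The next step is to invoke the algebraic characterization of bounded width: the polymorphism clone of $\tempA$ generates a congruence meet-semi\-dis\-tri\-bu\-tive variety, and in particular contains weak near-unanimity operations of almost all arities. The argument then divides according to whether some potato $A_x$ has a proper \emph{absorbing} subuniverse. In the absorption case one picks such a potato with proper absorbing $B\subsetneq A_x$, replaces $A_x$ by $B$, passes to the resulting $1$-consistent subinstance, and argues---using the absorbing term applied coordinatewise to iterated patterns---that SLAC survives; the tools developed in Section~\ref{sect:toolsabs} should be tailored precisely to push linear patterns out of $A_x\setminus B$ and into $B$. In the no-absorption case, every potato is absorption-free; the tools of Section~\ref{sect:toolsnoabs} should exploit the \SDm structure to locate, for example, a proper linked congruence or a centre in some $A_x$, yielding a pp-definable subset that can serve as a smaller potato.

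The main obstacle is the compatibility lemma guaranteeing that restricting some $A_x$ to a proper subset does not destroy SLAC at other variables. SLAC interrogates every variable in turn by fixing a singleton there and running LAC on the whole instance, so shrinking the potato at $x$ could in principle kill LAC-witnesses that live far away from $x$. The heart of the proof must therefore be a rerouting statement: any linear pattern that previously witnessed a value $a\in A_y$ can be modified so that every visit to the variable $x$ lands inside the chosen subset of $A_x$. In the absorption case this should follow by concatenating sufficiently many copies of the pattern and applying the absorbing term to them coordinatewise; in the no-absorption case the weak near-unanimity terms supplied by \SDm, together with the centre- or congruence-based structure on $A_x$, must play the analogous role and prevent any counting-type obstruction from appearing. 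Once this rerouting is established the induction closes and Corollary~\ref{cor:simple} follows immediately.
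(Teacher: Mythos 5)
Your high-level decomposition is correct and matches the paper's strategy: induct by shrinking potatoes, phrase SLAC via linear patterns, and split according to whether some $\algA_x$ has a proper absorbing subuniverse. But the parts you flag as ``the heart of the proof'' are precisely the parts you leave as hopes rather than arguments, and the actual mechanism in the paper is substantially different from what you sketch.

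In the absorption case, simply ``replacing $A_x$ by $B$ and passing to the $1$-consistent subinstance'' does not obviously work: running AC after restricting one potato can shrink potatoes elsewhere by intersecting absorbing subuniverses coming from different tree patterns, and nothing guarantees those intersections stay nonempty (the original instance is SLAC, but the restricted one need not be, so the SLAC guarantee on AC derivations no longer applies). The paper instead builds a preorder $\prop$ on pairs $(y,\algB)$ with $\algB\abs\algA_y$ reachable by tree patterns, passes to a maximal equivalence class $\lastcomponent$, and proves a nontrivial uniqueness/minimality claim to extract a coherent absorbing $1$-consistent system $\algA'_x\abs\algA_x$. More importantly, that is only step one. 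The step you describe as ``arguing that SLAC survives'' is not done by applying an absorbing term to iterated patterns; it is done by a separate theorem (Theorem~\ref{thm:propagatetoabs}), which uses universal covering tree instances, Proposition~\ref{prop:absconst}, and a compactness argument to show that if a prime solution exists anywhere in $\UCT(\instI)$ then one exists that is constant on each fibre of $\cover$. One then applies this theorem to an increasing chain of auxiliary instances $\instJ_i$ obtained by gluing together cycles at variables, and takes a limit. None of this is recoverable from your sketch.

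In the no-absorption case your ``proper linked congruence or centre'' is closer, but again the actual work is missing: for each path pattern $p$ from $x$ to $w$, the quotient $\algR'$ either is linked (in which case, by Theorem~\ref{thm:abs}, it is a full product and imposes nothing), or defines a congruence $\alpha_{\cover(w)}$ and an isomorphism $\algA_x/\alpha_x\cong\algA_{\cover(w)}/\alpha_{\cover(w)}$. The crucial lemma, which you do not address, is that the resulting congruence and the chosen block $\algA'_{\cover(w)}$ are independent of which proper pattern $p$ was used; this is where the SLAC assumption (not merely $1$-consistency) is actually exercised. You also invoke weak near-unanimity operations, but the paper never touches them; the tools in play are simple quotients, absorption, Fact~\ref{fact:linkness}, Theorem~\ref{thm:abs}, and Propositions~\ref{prop:old}--\ref{prop:absconst}. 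Centres, similarly, do not appear. So while your structure is right, the compatibility lemma you correctly identify as the obstacle is left as an appeal to ``coordinatewise application of terms,'' and that gap is exactly where the paper's genuinely new content sits.
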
       
    \begin{proof}[Proof of Corollary~\ref{cor:simple}]
      Let $\instI$ be an instance over $\tempA$~%
      (of bounded width) 
      such that the SLAC algorithm does not derive a contradiction on $\instI$.

      Restricting by $C_x$'s, derived by SLAC, we obtain a SLAC instance over $\tempA$.
      Let $\tempA'$ be a core of $\tempA$ via a homomorphism $h$.
      Applying $h$ to $\instI'$ we obtain a SLAC instance over $\tempA'$,
      which is also an instance over a rigid core of $\tempA$.

      Since $\tempA$ had bounded width so does the rigid core of $\tempA$ and 
      we can use Theorem~\ref{thm:mainsimple} to establish a solution to the image of $\instI'$ under $h$.
      Since $h$ is an endomorphism of $\tempA$ this solution works for $\instI$ as well.
    \end{proof}

    We present a short sketch of the proof of Theorem~\ref{thm:mainsimple};
    this sketch is extended in section~\ref{sect:result}
    and the full proof can be found in the last two sections of the paper.
    
    The general idea of the proof is to start with an arbitrary SLAC instance $\instI$
    and show that, choosing appropriate pp-definable $A'_x\subseteq A_x$,
    the restriction of $\instI$ to $A'_x$ is a proper SLAC subinstance of $\instI$. 
    If this can be accomplished then, in a finite number of steps, we arrive at a SLAC instance such that every $A_x$ has
    one element, and such an instance clearly defines a solution.

    Finding proper $A'_x$ for an instance splits into two cases depending on the algebraic structure of the instance.
    Either we can find $A'_x$ which {\em absorbs}~(compare section~\ref{sect:absdef}) $A_x$,
    or the lack of absorption implies enough "rectangularity" of the constraints that $A'_x$ can be chosen 
    more arbitrarily.
    
    The next section contains definitions of algebraic concepts necessary to make these statements precise.

\section{Algebraic notions and tools}\label{sect:algebra}
  First we introduce general algebraic notions which allow
  us to define the Galois correspondence.
  \subsection{Basic algebraic notions}
    {\em An algebra}, usually denoted by $\algA$,  is a set $A$~(the {\em universe} of the algebra)
    together with a list of functions of arbitrary~(but finite) arity from $A$ to $A$~(the {\em operations} of the algebra).
    {\em A signature} of an algebra is a list of arities of its operations.

    Let $\algA$ be an algebra, a set $B\subseteq A$ is {\em a subuniverse} of $\algA$ if it
    is closed under every operation of $\algA$. 
    In such a case the algebra with universe $B$ and operations obtained from the operations of $\algA$ 
    by restricting to arguments from $B$ is a {\em  subalgebra} of $\algA$ denoted $\algB\leq\algA$.

    Let $\algA_i$ be a list~(or, more generally, an indexed family) of algebras in the same signature, the algebra $\prod_i\algA_i$
    is the algebra in the same signature; 
    its universe is a cartesian product of
    universes of $\algA_i$'s and the operations are evaluated coordinatewise.
    If $\algA$ is an algebra then $\algA^n$ is the $n$-th cartesian power of $\algA$ i.e.
    a product of $n$ copies of $\algA$.

    An equivalence relation $\alpha$ on the universe of an algebra $\algA$ is {\em a congruence}
    if it is a subalgebra of $\algA^2$. 
    In such a case  the algebra $\algA/\alpha$ is well defined i.e. the operations in the quotient do not depend 
    on the choice of representatives of the equivalence class.
    For any $a\in\algA$ by $a/\alpha$ we mean the equivalence block of $\alpha$ containing $a$.

    An algebra $\algA$ is {\em simple} if its only congruences are $0_{\algA} = \{(a,a) : a\in\algA\}$
    and $1_{\algA} = \{(a,b) : a,b\in\algA\}$.
    If $\alpha$ and $\beta$ are congruences on $\algB$ then the largest
    equivalence contained in both is a congruence denoted by $\alpha\wedge\beta$ and 
    the smallest congruence containing $\alpha$ and $\beta$ is denoted by $\alpha\vee\beta$.
    In general, for any binary relation on an algebra, the congruence {\em generated} by this relation
    is the unique smallest congruence containing this relation.

    If $\algA\leq\prod_{i\in I}\algA_i$ then $\pi_i$ is the congruence identifying tuples of $\algA$ 
    with the same element on the $i$-th coordinate;
    and for any $J\subseteq I$ the algebra $\proj_J(\algA)$ is obtained from $\algA$ by taking tuples in $\algA$  
    and projecting 
    them to the coordinates from $J$.
    If $\algA\leq\prod_{i}\algA_i$ and for every $i$ $\proj_i(\algA)=\algA_i$ then $\algA$ is {\em subdirect}
    and we denote this fact by $\algA\subd\prod_i\algA_i$.
    
    If $\algC\leq\algA^2$ then by $\algC^\compose{n}$ we denote the subalgebra of $\algA^2$ 
    which is obtained by composing $\algC$ with itself $n$-times.
    If $\algC\leq\algA\times\algB$ and $\algA'\leq\algA$ then $\algA'+\algC$ is 
    a subalgebra of $\algB$ containing $b$ iff $\exists\ a\in\algA' : (a,b)\in\algC$.
    For $\algB'\leq\algB$ we define $\algB'-\algC$ analogously. 
    
    A homomorphism from $\algA$ to $\algB$~(where $\algA$ and $\algB$ have the same signature)  is a map from $A$ to $B$ 
    which commutes with operations of $\algA$ and $\algB$. 
    A bijective homomorphism is {\em an isomorphism}.
    
    {\em A term} in a given signature is a formal description of a composition of operations of an algebra in
    this signature. 
    For every algebra in this signature a term defines a {\em term operation}. 

    A class of algebras in the same signature closed under taking products, homomorphic images and subalgebras 
    is called {\em a variety}.
    The smallest variety containing an algebra $\algA$ is the variety {\em generated} by $\algA$.
    
    An algebra $\algA$ is idempotent if every one-element subset of its universe is a subalgebra.
    In such an algebra a congruence block is a subalgebra;
    moreover every algebra in a variety generated by an idempotent algebra is idempotent.
    \subsubsection{Absorption}\label{sect:absdef}
      The notion of {\em absorption} and {\em absorbing subalgebras} appeared first in~\cite{FOCSBW}
      and plays a crucial role in many recent developments in the algebraic approach to CSP.

      Let $\algB\leq\algA$ be idempotent algebras, we say that $\algB$ {\em absorbs} $\algA$~(denoted $\algB\abs\algA$)
      if there exists a term operation $t$ in $\algA$ such that
      \begin{equation*}
        t(a_1,\dotsc,a_n)\in\algB \text{ whenever } |\{i:a_i\notin\algB\}|\leq 1.
      \end{equation*}
      The following easy consequences of the definition will be useful:
      \begin{itemize}
        \item if $\algB\abs\algA$ and $\algC\leq\algA$ then $(\algB\cap\algC)\abs\algC$;
        \item if $\algC\subd\algA\times\algB$ and $\algA'\abs\algA$ then $\algA'+\algC\abs\algB$;
        \item  if $\algB\abs\algA, \algB'\abs\algA'$ are algebras in the same signature,
          then both absorptions can be witnessed by a common term.
      \end{itemize}

  \subsection{The Galois correspondence and basic reductions}
    At the heart of the algebraic approach to CSP lies a correspondence~\cite{GaloisOLD,duality1,duality2} 
    between relational structures and algebras.
    
    To each template $\tempA$ we associate an algebra $\algA$: 
    a $k$-ary function $f$ is an operation of $\algA$ if and only if every relation in $\tempA$
    is a subuniverse of appropriate power of the algebra $(A,f)$.
    Such an operation is called a {\em polymorphism} of $\tempA$.

    For such an $\algA$ all the finite subpowers of $\algA$ are exactly the relations pp-definable in $\tempA$
    which indicates why $\algA$ captures the complexity of CSP defined by $\tempA$.
    In the remaining part of the paper, all the relations
    appearing in constraints are, at the same time, subalgebras of powers of $\algA$.
    Note that, for any instance, the set of solutions of this instance also forms a
    subalgebra of a power of $\algA$.
  
    By standard reductions listed in section~\ref{sect:basicinstances} we restrict our reasoning to templates which 
    are rigid cores. 
    The algebras associated to such templates by the Galois correspondence are idempotent. 

    By results of~\cite{FOCSBW,BW,BulBW} a rigid core template $\tempA$ defines a CSP solvable by local consistency checking
    if and only if
    the associated algebra $\algA$ generates a variety such that for any algebra $\algB$ in this variety
    and any $\alpha,\beta,\gamma$ congruences of $\algB$ if $\alpha\wedge\beta = \alpha\wedge\gamma$ then
    $\alpha\wedge\beta = \alpha\wedge(\beta\vee\gamma)$.
    We call such an algebra $\algA$ an \SDm algebra, and we call a rigid core template an \SDm template 
    if the algebra associated to it is \SDm.

    A notion of a Taylor algebra~(which appears in Theorem~\ref{thm:abs}) is not required in this paper,
    all that we need to know is that every \SDm algebra is Taylor.

  \subsection{Algebraic tools}
    In this section we cite some established algebraic tools which were developed
    with the algebraic approach to CSP in mind.

    From this point on we assume that all the algebras are idempotent and all the templates 
    are rigid cores.
    \begin{fact}\label{fact:linkness}
      Let $\algC\subd\algA\times\algB$:
      \begin{itemize}
        \item if $\alpha$ is a transitive closure of the relation on $\algA$ containing all pairs 
          \begin{equation*}
            (a,a'): \exists\, b\in \algB\, (a,b),(a',b)\in\algC
          \end{equation*}
          then $\alpha$ is a congruence on $\algA$;
        \item similarly for $\beta$ defined dually on $\algB$;
        \item if $\algC'\subd\algA\times\algB$ and $\algC'\abs\algC\subd\algA\times\algB$ then $\alpha$'s and $\beta$'s 
          defined by $\algC$ and $\algC'$ coincide.%
          \footnote{The property is proved by a standard absorption argument found in e.g.~\cite{FOCSBW}.}
      \end{itemize}
    \end{fact}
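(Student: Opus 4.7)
The plan is to dispatch the three bullets in order.

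For the first bullet, observe that the one-step relation
\begin{equation*}
R = \{(a,a') \in \algA^2 : \exists\, b \in \algB,\ (a,b) \in \algC \wedge (a',b) \in \algC\}
\end{equation*}
is pp-defined from $\algC$, hence a subuniverse of $\algA^2$. Its $k$-fold composition $R^{(k)}$ is likewise pp-definable and so a subuniverse. Since $\algA$ is finite, the transitive closure $\alpha$ coincides with $R^{(k)}$ for some bounded $k$. Reflexivity follows from subdirectness of $\algC$ on $\algA$, symmetry from the definition of $R$, and transitivity from closure. Thus $\alpha$ is an equivalence relation that is a subalgebra of $\algA^2$, i.e., a congruence of $\algA$. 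The second bullet is entirely dual.

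For the third bullet, the inclusions $\alpha_{\algC'} \subseteq \alpha_{\algC}$ and $\beta_{\algC'} \subseteq \beta_{\algC}$ follow immediately from $\algC' \subseteq \algC$. For the reverse, it suffices to show that a single one-step $\beta_{\algC}$-relation $(a,b), (a,b') \in \algC$ already places $b$ and $b'$ in the same $\beta_{\algC'}$-block. Let $t$ be the $n$-ary term witnessing $\algC' \abs \algC$, and use subdirectness of $\algC'$ on $\algB$ to pick $(c,b) \in \algC'$. Apply $t$ coordinatewise to the two tuples
\begin{equation*}
\bigl((c,b), \ldots, (c,b), (a,b)\bigr) \qquad \text{and} \qquad \bigl((c,b), \ldots, (c,b), (a,b')\bigr)
\end{equation*}
in $\algC^n$, each having only its last entry possibly outside $\algC'$. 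Absorption yields two elements of $\algC'$ sharing first coordinate $c^* := t(c,\ldots,c,a)$, with second coordinates $b$ and $t(b,\ldots,b,b')$ respectively; hence $b \,\beta_{\algC'}\, t(b,\ldots,b,b')$. A dual computation with a witness $(c',b') \in \algC'$ gives $b' \,\beta_{\algC'}\, t(b',\ldots,b',b)$.

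The main obstacle is then to collapse the two ``$t$-composites'' modulo $\beta_{\algC'}$. Since $\beta_{\algC'}$ is already a congruence on $\algB$ (by the first two bullets applied to $\algC'$), it is closed under $t$; combined with idempotence $t(x,\ldots,x)=x$ and with repeated applications of the same absorption trick on the newly produced $\beta_{\algC}$-related pairs, the composites can be identified modulo $\beta_{\algC'}$ in finitely many steps, yielding $b \,\beta_{\algC'}\, b'$. This iterative phase is the ``standard absorption argument'' referenced in the footnote to the fact. The $\alpha$-case is entirely dual.
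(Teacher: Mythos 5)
Your first two bullets are correct: the one-step link relation is pp-definable from $\algC$, hence a subuniverse of $\algA^2$, and its transitive closure (a bounded composition, by finiteness) is a reflexive, symmetric subuniverse, hence a congruence.

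The third bullet has a genuine gap. The reduction to a single one-step pair and the two computations giving $b \mathrel{\beta_{\algC'}} t(b,\dotsc,b,b')$ and $b' \mathrel{\beta_{\algC'}} t(b',\dotsc,b',b)$ are fine, but the final sentence is not an argument: closure of $\beta_{\algC'}$ under $t$ plus idempotence adds nothing here, and iterating your trick on newly produced pairs only walks along a sequence $b,\ t(b,\dotsc,b,b'),\ t(t(b,\dotsc,b,b'),\dotsc,b'),\ \dotsc$ inside $a+\algC$ that eventually cycles with no reason to reach $b'$. The missing idea is to push the absorption up to the one-step link relation on $\algB$ itself. Let $S=\{(u,v)\in\algB^2: \exists\, d\ (d,u),(d,v)\in\algC\}$ and $S'$ the analogous relation for $\algC'$; both are subuniverses of $\algB^2$, reflexive by subdirectness, and $S'\abs S$ via the same term $t$: given $(u_1,v_1),\dotsc,(u_n,v_n)\in S$ with at most one outside $S'$, pick for each a common $\algA$-witness $d_j$ (lying in $\algC'$ for the good coordinates) and apply $t$ coordinatewise to see that $(t(d_1,\dotsc,d_n),t(u_1,\dotsc,u_n))$ and $(t(d_1,\dotsc,d_n),t(v_1,\dotsc,v_n))$ are both in $\algC'$ by absorption, whence $(t(u_1,\dotsc,u_n),t(v_1,\dotsc,v_n))\in S'$. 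Now for $(b,b')\in S$, feed $t$ the tuple with $i-1$ copies of $(b,b)$, one $(b,b')$, and $n-i$ copies of $(b',b')$ (exactly one slot outside $S'$); absorption places the result in $S'$, showing that $t(b,\dotsc,b,b',\dotsc,b')$ with $i$ leading $b$'s is $\beta_{\algC'}$-related to the same expression with $i-1$ leading $b$'s. Chaining $i=n,\dotsc,1$ and using idempotence at both ends gives $b\mathrel{\beta_{\algC'}}b'$. Your two computations recover only the $i=n$ step and a mirror image of it; the intermediate links are precisely what the phrase ``in finitely many steps'' would need to produce, and they require the $S'\abs S$ observation above — which is the content of the standard absorption argument the paper cites.
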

    In fact, we call $\algC\subd\algA\times\algB$ {\em linked} if 
    $\pi_1\vee\pi_2 = 1_{\algC}$ or, equivalently
    $\alpha$~(or $\beta$) defined
    for $\algC$ in Fact~\ref{fact:linkness} are $1_\algA$~($1_\algB$ respectively).

    Note that if $\algC\subd\algA\times\algB$ is a graph of a bijection~(equivalently $\alpha,\beta$ in Fact~\ref{fact:linkness}
    are $0_{\algA},0_{\algB}$~respectively) then $\algC$ is actually a graph of an isomorphism between $\algA$ and $\algB$.

    The following theorem first appeared as Theorem III.6~\cite{LICScyclic}
    \begin{theorem}\label{thm:abs}
      Let $\algC\subd\algA\times\algB$ be a Taylor algebra.
      If $\algC$ is linked  then 
      \begin{itemize}
        \item $\algC = \algA\times\algB$, or
        \item $\algA$ has a proper absorbing subalgebra or
        \item $\algB$ has a proper absorbing subalgebra.
      \end{itemize}
    \end{theorem}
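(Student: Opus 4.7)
The plan is to argue by contradiction and by induction on $|A|+|B|$. Suppose $\algC\subsetneq\algA\times\algB$ is linked and neither $\algA$ nor $\algB$ has a proper absorbing subalgebra. Since $\algA$ and $\algB$ are homomorphic images of the Taylor algebra $\algC$ via the two projections, they are themselves Taylor; by the Barto--Kozik theorem on cyclic terms in Taylor varieties I may assume the signature contains a cyclic term $c$ of a large prime arity $p > |C|$.

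The key object is the ``full neighborhood'' set
\begin{equation*}
\algA_0 \;=\; \{\,a\in\algA : (a,b)\in\algC \text{ for every } b\in\algB\,\},
\end{equation*}
and its symmetric counterpart $\algB_0$. A short idempotency calculation shows each is a subalgebra: for $a,a'\in\algA_0$, a term $t$ and arbitrary $b\in\algB$, the pairs $(a,b),(a',b)\in\algC$ give $(t(a,a'),b)=(t(a,a'),t(b,b))\in\algC$. If $\algC\neq\algA\times\algB$ then $\algA_0\subsetneq\algA$ and $\algB_0\subsetneq\algB$.

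The technical core is to show that if $\algA_0$ is nonempty then $\algA_0\abs\algA$ with witness $c$, contradicting the assumption. Fix $a_0\in\algA_0$, $a\in\algA$ and an arbitrary target $b\in\algB$; using linkedness I would construct a zigzag in $\algC$ of length divisible by $p$ with suitable endpoints, and then apply $c$ coordinate-wise to a $p$-tuple that samples this zigzag in the ``$a$-slot'' of a cyclic arrangement $(a_0,\ldots,a_0,a,a_0,\ldots,a_0)$ and uses $a_0$ in the other slots. The cyclic identity $c(x_1,\ldots,x_p)=c(x_2,\ldots,x_p,x_1)$ combined with a pigeonhole driven by $p>|C|$ should force $(c(a_0,\ldots,a,\ldots,a_0),b)\in\algC$ for every $b$, giving the required absorption.

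The main obstacle is the rigorous execution of this cyclic-term-plus-zigzag computation, which is the heart of the argument. Two further subtleties must also be handled: first, the degenerate case $\algA_0=\emptyset=\algB_0$, where I would quotient by a nontrivial congruence (available unless both factors are simple) and invoke the induction hypothesis on the still-linked $\algC/\alpha\subd\algA/\alpha\times\algB$, lifting any proper absorbing subalgebra back to $\algA$ by preimage; and second, the residual simple case, where I would apply $c$ directly to a full $p$-tuple in $\algC^p$ and use simplicity of the factors to propagate equalities between coordinates and conclude $\algC=\algA\times\algB$ outright.
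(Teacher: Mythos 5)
The paper does not prove this statement: it is quoted as Theorem~III.6 of~\cite{LICScyclic} (the Barto--Kozik ``Absorption Theorem''), so there is no internal argument to compare with. Assessed on its own, your plan has a circularity problem at the very first step: the cyclic-term theorem for Taylor varieties, which you invoke as a black box, is in that same reference \emph{derived from} the Absorption Theorem you are trying to prove. Unless you supply an independent route to cyclic terms (none is given, and I do not know of one in the literature), the argument is circular.

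Even granting cyclic terms, the two steps that do all the work are missing and, as sketched, do not close. For the ``technical core'' --- that a nonempty $\algA_0=\{a:(a,b)\in\algC\text{ for all }b\in\algB\}$ absorbs $\algA$ with witness $c$ --- applying $c$ coordinatewise to $p$ elements of $\algC$, one with first coordinate $a$ and the rest with first coordinates in $\algA_0$, only yields pairs $\bigl(c(a_0,\dotsc,a,\dotsc,a_0),\,c(b_1,\dotsc,b',\dotsc,b_{p-1})\bigr)$ in which $b'$ is forced to lie in $a+\algC$ while the other $b_j$ are free; the attainable second coordinates are therefore $c(B,\dotsc,a{+}\algC,\dotsc,B)$, which in general is a proper subset of $B$, so you cannot conclude $(c(\dotsc),b)\in\algC$ for \emph{every} $b$. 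Linkedness gives zigzags living in compositions $\algC\circ\algC^{-1}\circ\dotsb$, not a $p$-tuple of elements of $\algC$ itself, and it is not explained how a zigzag feeds into a single cyclic-term application, nor how the $p>|C|$ pigeonhole is used. Finally, the residual case $\algA_0=\algB_0=\emptyset$ with both factors simple and absorption-free is exactly where the difficulty of the theorem concentrates; ``apply $c$ to a full $p$-tuple and use simplicity to propagate equalities'' names no mechanism. The real proof occupies a substantial part of~\cite{LICScyclic}, and your outline, as you yourself note, omits all of its hard steps.
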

    \noindent And the following corollary easily follows from it:
    \begin{corollary}\label{cor:MAS}
      Under the assumptions of Theorem~\ref{thm:abs} there exist $\algA'\abs\algA$, $\algB'\abs\algB$
      such that $\algA'\times\algB'\leq\algC$.
    \end{corollary}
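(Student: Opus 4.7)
The plan is to prove the Corollary by induction on $|\algA|+|\algB|$, using Theorem~\ref{thm:abs} at each step. In the base case (or when $\algC=\algA\times\algB$) we simply take $\algA'=\algA$ and $\algB'=\algB$. Otherwise Theorem~\ref{thm:abs} gives (up to the symmetric case) a proper absorbing subalgebra $\algA^*\abs\algA$. I then set $\algB_1:=\algA^*+\algC$ (which absorbs $\algB$ by the second bullet of Section~\ref{sect:absdef}) and $\algC_1:=\algC\cap(\algA^*\times\algB_1)$. Subdirectness of $\algC$ forces $\proj_1(\algC_1)=\algA^*$, so $\algC_1\subd\algA^*\times\algB_1$, and $\algC_1$ is Taylor as a subalgebra of a Taylor algebra.

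The crucial sub-claim is that $\algC_1$ is linked in $\algA^*\times\algB_1$; if this holds, the inductive hypothesis supplies $\algA'\abs\algA^*$ and $\algB'\abs\algB_1$ with $\algA'\times\algB'\leq\algC_1\leq\algC$, and transitivity of absorption (via composition of absorbing terms) lifts these to $\algA'\abs\algA$ and $\algB'\abs\algB$. To establish the sub-claim I would fix a common absorbing term $t$ witnessing both $\algA^*\abs\algA$ and $\algB_1\abs\algB$, which exists by the third bullet of Section~\ref{sect:absdef}. Given two points of $\algC_1$, I take a linkness path in $\algC$ joining them (available because $\algC$ is linked) and transport it coordinate-wise into $\algC_1$: applying $t$ to each vertex together with a fixed anchor from $\algC_1$ keeps the image inside $\algA^*\times\algB_1$ (only one coordinate lies possibly outside the absorbing set) and preserves the coincidence of coordinates between consecutive vertices, so it is again a linkness path.

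The delicate point is that the endpoints of the transported path need not equal the originals $\sigma,\sigma'\in\algC_1$ but rather the images $t(\sigma,\ldots,\sigma,\sigma')$ etc. To recover the true endpoints I would run the transport twice with two different anchors (one from each endpoint) and glue the resulting paths; idempotence of $t$ anchors one end of each piece correctly, and the remaining ``shifted'' pair of midpoints can be connected either by iterating the transport or by invoking a cyclic/symmetric refinement of $t$ (available in the Taylor setting), which identifies the two shifted values.

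I expect the main obstacle to be precisely this endpoint-matching step in the sub-claim: the naive one-anchor transport gives a path inside $\algC_1$ but with an offset at one end, and reconciling this offset is where all the absorption bookkeeping and the Taylor hypothesis must cooperate. Once linkness of $\algC_1$ is secured the induction closes cleanly, because $|\algA^*|+|\algB_1|<|\algA|+|\algB|$ and absorption composes transitively along the chain $\algA'\abs\algA^*\abs\algA$ and $\algB'\abs\algB_1\abs\algB$.
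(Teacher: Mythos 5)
Your plan is the standard argument and is surely what the paper has in mind (the paper offers no proof, calling the corollary an easy consequence of Theorem~\ref{thm:abs}). The induction on $|\algA|+|\algB|$, the case split via Theorem~\ref{thm:abs}, the construction $\algB_1=\algA^*+\algC$, $\algC_1=\algC\cap(\algA^*\times\algB_1)$, the subdirectness of $\algC_1$ in $\algA^*\times\algB_1$, the Taylor property of $\algC_1$, the transitivity of $\abs$ along $\algA'\abs\algA^*\abs\algA$ and $\algB'\abs\algB_1\abs\algB$, and the reduction to proving $\algC_1$ linked, are all correct, and using a common absorbing term $t$ (third bullet of Section~\ref{sect:absdef}) to transport a linkness walk of $\algC$ into $\algC_1$ coordinatewise is the right tool.

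On the endpoint-matching, you are right that this is the only delicate point, and the correct resolution is the one you call ``iterating the transport''; let me spell it out so you see that two transports do not suffice but $k$ do (where $k$ is the arity of $t$). Given endpoints $(a,b),(a',b')\in\algC_1$ and a linkness walk $W$ in $\algC$ between them, set $c^{(i)}=t\bigl(\underbrace{(a',b'),\dotsc,(a',b')}_{i},\underbrace{(a,b),\dotsc,(a,b)}_{k-i}\bigr)$ for $i=0,\dotsc,k$. Idempotence gives $c^{(0)}=(a,b)$, $c^{(k)}=(a',b')$, and every $c^{(i)}\in\algC_1$ since all arguments are. The $i$-th transport keeps the first $i$ slots anchored at $(a',b')$, the last $k-1-i$ slots anchored at $(a,b)$, and lets the remaining slot range over $W$; absorption keeps the image in $\algC_1$, the coordinatewise action of $t$ preserves shared coordinates, and the two ends of the resulting walk are $c^{(i)}$ and $c^{(i+1)}$. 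Concatenating these $k$ walks links $(a,b)$ to $(a',b')$ inside $\algC_1$. Your other suggested fix --- a cyclic or symmetric refinement of $t$ --- is not a move available here: absorbing terms need not be (and in general cannot be replaced by) symmetric or cyclic operations, and the Taylor hypothesis is already fully consumed by Theorem~\ref{thm:abs}; it buys you nothing extra at this step. Dropping that aside and writing out the $k$-step bridge turns your sketch into a complete proof.
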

    \noindent The following proposition is, in essence, Theorem 6 of~\cite{nuLinDATA}.
    \begin{proposition}\label{prop:absconst}
      Let $\algA$ be an algebra and $\algR,\algS\subd\algA^n$ such that $\algR\abs\algS$, 
      and for every $a\in \algA$ the constant tuple $(a,\dotsc,a)$ belongs to $\algS$.
      Then $\algR$ contains at least one constant tuple.
    \end{proposition}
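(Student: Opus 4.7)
The plan is to proceed by induction on $n$, with the base case $n = 1$ being immediate: a subdirect subalgebra of $\algA$ equals $\algA$, so every singleton is a ``constant tuple''.

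For the inductive step $n \geq 2$, I would let $t$ be the $k$-ary term witnessing $\algR \abs \algS$ and attempt to build a constant tuple in $\algR$ from an arbitrary starting tuple $r = (r_1, \ldots, r_n) \in \algR$ by iterated applications of $t$ to $r$ and the constant tuples in $\algS$. Concretely, since $\bar a := (a, \ldots, a) \in \algS$ for every $a \in \algA$, the tuple
$$ t(r, r, \ldots, r, \bar a) \in \algR $$
(with $k - 1$ copies of $r$ and one copy of $\bar a$) lies in $\algR$ by absorption, and its $j$-th coordinate is $h(r_j, a)$, where $h(x, y) := t(x, \ldots, x, y)$ is an idempotent binary derived operation. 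Iterating this construction with a carefully chosen sequence of auxiliary values $a_m$ --- for instance, alternating among the distinct values that currently appear as coordinates of the running tuple --- produces a sequence in $\algR$ whose coordinates should eventually collapse to a single value in $\algA$, giving the desired constant tuple.

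The main obstacle is the convergence step: showing that the iteration genuinely terminates with all coordinates equal, rather than stabilizing at some non-constant fixed point of $h$. A cleaner route is to first reduce to the case $n = 2$ by projection (which preserves subdirectness, absorption, and the all-diagonals hypothesis), and then in the binary case to argue as follows. From $\Delta \subseteq \algS$ together with subdirectness of $\algS$, one verifies via a chain argument that the linkness congruence $\alpha$ of Fact~\ref{fact:linkness} equals $1_\algA$ on $\algS$; by the third bullet of that fact this transfers to $\algR$, so $\algR \subd \algA \times \algA$ is linked. Then Corollary~\ref{cor:MAS} yields absorbing subalgebras $\algA_1, \algA_2 \abs \algA$ with $\algA_1 \times \algA_2 \leq \algR$, and invoking the standard absorption-theoretic fact that two absorbing subalgebras of an idempotent algebra have non-empty intersection, one picks $b \in \algA_1 \cap \algA_2$ to obtain the constant tuple $(b, b) \in \algR$. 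The delicate points in this second route are verifying linkness (which is not automatic from $\Delta \subseteq \algS$ in isolation but follows from subdirectness of $\algS$) and verifying the non-empty intersection of absorbing subalgebras.
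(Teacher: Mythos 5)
The paper states this proposition as an import from the literature (it cites Theorem~6 of~\cite{nuLinDATA}) and gives no proof, so the only question is whether your argument stands on its own --- and your second route, the one you treat as the ``cleaner'' one, has three independent fatal gaps. First, projecting to two coordinates does not reduce the problem: a constant pair $(b,b)\in\proj_{1,2}(\algR)$ only gives \emph{some} tuple $(b,b,r_3,\dotsc,r_n)\in\algR$, and the induced relation $\{(r_3,\dotsc,r_n):(b,b,r_3,\dotsc,r_n)\in\algR\}$ on the remaining coordinates has no reason to contain all constant tuples or to absorb anything diagonal-containing, so there is no inductive handle to get back up to $n$. Second, the claimed chain argument showing $\algS$ is linked is false: the diagonal $\Delta=\{(a,a):a\in A\}$ is itself a subdirect subalgebra of $\algA^2$ that contains every constant pair, yet its linkness congruences are the equality congruences, so subdirectness plus ``contains the diagonal'' certainly do not force $\algS$ to be linked. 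Third, the ``standard fact'' that two absorbing subalgebras of an idempotent algebra must intersect is also false: in the two-element lattice $(\{0,1\};\wedge,\vee)$ both $\{0\}$ (witnessed by $\wedge$) and $\{1\}$ (witnessed by $\vee$) absorb the whole algebra, and they are disjoint; so even after Corollary~\ref{cor:MAS} hands you $\algA_1\times\algA_2\leq\algR$ you cannot simply pick a common element to get $(b,b)\in\algR$. Your first route (iterating the absorption term against constant tuples, pinning the first coordinate by idempotence) does, as you yourself observe, stall at a non-constant fixed point of the derived binary operation $h(x,y)=t(x,\dotsc,x,y)$, and I do not see a local repair; a correct proof of this proposition genuinely needs a different idea than either of the two you sketch.
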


\section{Instances and patterns}\label{sect:patterns}
  In order to apply the algebraic tools to SLAC, or for that matter $1$-consistent, instances we need a better way 
  to handle such instances.
  In the following section we introduce patterns~(which are essentially CSP instances with added information)
  and a natural notion of a homomorphism between instances in order to capture these consistency notions. 

  Throughout the paper we work with instances of CSP over a fixed, finite template.
  Every instance of CSP over such a template can be equivalently viewed as a relational structure 
  in the language of the template.
  This was already the case when we considered a DATALOG program computing on an instance of 
  a CSP. 

  Having two instances of the same CSP, say $\instI$ and $\instJ$, and $\cover$ 
  a function mapping variables of $\instI$ 
  to variables of $\instJ$ we say $\cover$  is a homomorphism if 
  for every constraint $((x_1,\dotsc,x_n),\algC)$ in $\instI$ there is a constraint
  $((\cover(x_1),\dotsc,\cover(x_n)),\algC)$ in $\instJ$.
  This is exactly the same as saying that $\cover$ is a homomorphism between relational structures 
  which $\instI$ and $\instJ$ essentially are.

  \subsection{Path patterns}
    We begin with a basic definition.
    \begin{definition}
      {\em A step} is an instance of a CSP with a single constraint~%
      (in which every variable appears once) with two selected variables. 
      We denote a step by $(i,((x_1,\dotsc,x_n),\algC),j)$ where $((x_1,\dotsc,x_n),\algC)$ is the constraint and 
      the selected variables are $x_i$ {\em the beginning} of the step and $x_j$ {\em the end}.
    \end{definition}
    \noindent We define path-patterns next.
    \begin{definition}
      {\em A path-pattern} is an instance of CSP constructed from a sequence of steps~(on pairwise disjoint sets of variables)
      by identifying each step's end variable with next step's beginning variable.
      The beginning variable of the path-pattern is the beginning variable of the first step,
      and the end variable of the path-pattern is the end variable of the last step.

      {\em A subpattern of a path-pattern} is a path-pattern defined by a substring of the sequence of steps.

      A path-pattern $p$ is {\em in the instance $\instI$} if there exists a homomorphism~%
      (usually denoted by $\cover$) from the instance of $p$ to $\instI$.
      
      A path-pattern $p$ is {\em a cycle~(at $x$) in $\instI$} if this homomorphism can be chosen so that 
      the beginning and end variables are mapped to the same element~(to $x$) by $\cover$.
    \end{definition}
    Note that the path-patterns allow us to capture the notions of
    linear arc consistency and singleton linear arc consistency:
    \begin{itemize}
      \item LAC algorithm does not derive a contradiction on the instance $\instI$ 
        if and only if every path-pattern in $\instI$ has a solution;%
        \footnote{By a solution of a pattern we mean, of course, a solution of the underlying instance.}
      \item an instance $\instI$ is a SLAC instance if and only if
        for every variable $x$, every $a\in A_x$ and any path pattern $p$ which is a cycle at $x$ in $\instI$
        $p$ has a solution sending beginning and end variables of $p$ to $a$.
    \end{itemize}
    Indeed, a LAC algorithm derivation defines a path pattern~(and vice versa),
    and the path pattern has a solution if the derived set is non-empty.
    In the case of SLAC the algorithm does not shrink $A_x$ if fixing $x$ to $a\in A_x$ 
    and running LAC does not produce an empty set. 
    This is equivalent to the condition about path-patterns which are cycles at $x$.

    Throughout the paper we often create patterns by merging other patterns~(just like in the definition above).
    In such situations we assume that the variable sets of these patterns
    are disjoint and all the identifications of the variables are explicitly stated.
    E.g. if $p$ and $q$ are path-patterns on  disjoint sets of variables
    we define $p+q$ as a path-pattern created by identifying the end variable of $p$ with the beginning variable of $q$.
    The beginning variable of the sum is the beginning variable of $p$ and the end variable of the sum
    is the end variable of $q$.
    A pattern $-p$ is obtained from pattern $p$ by exchanging the functions of beginning and end variables.
  \subsection{Tree patterns}
    To define a tree-pattern we introduce a notion of an adjacency multigraph of an instance. 
    The vertex set of the adjacency multigraph of an instance consists of all the variables 
    and all the constraints of this instance,
    and we introduce one edge between a variable vertex and a constraint vertex 
    for every time the variable appears in the constraint.
    \begin{definition}
      An instance is {\em a tree~(forest) instance} if it's adjacency multigraph is a tree~(forest) 
      i.e. has no multiple edges and no cycles.

      {\em A tree-pattern} is a tree-instance with a selected variable called {\em the root}, 
      and a selected set of variables of degree one in the adjacency multigraph called {\em leaves}.
      
      A tree-pattern $p$ is {\em in the instance $\instI$} if there exists a homomorphism~%
      (usually denoted by $\cover$) from the instance of $p$ to $\instI$.
    \end{definition}
    Tree patterns are to arc consistency as path patterns to linear arc consistency i.e.
    \begin{itemize}
      \item AC checking algorithm does not derive a contradiction on an instance if and only if
        every tree pattern in this instance has a solution.
    \end{itemize}

  \subsection{Propagation via patterns}
    For a path-pattern~(tree-pattern) $p$ and a set $B\subseteq A$ we put $B+p$ to be the set consisting of all the values given
    to the end of $p$~(resp. root of $p$) by solutions of $p$ sending the beginning~(resp. leaf) variables  into $B$.
    
    Additionally, for path-patterns, by $B+p+q$ we mean $(B+p)+q$~(equivalently $B + (p+q)$) and by $B-p$ we mean $B+ (-p)$.

  \subsection{Universal covering tree instances}
      In order to capture the property ``AC does not derive a contradiction on $\instI$'' we 
      construct an auxiliary instance $\UCT(\instI)$. 
      We define it for connected instances first.
      
      We say that an instance is connected if it's adjacency multigraph is connected.
      For a connected instance $\instI$ the tree instances in $\instI$ form a Fra\"\i ss\'e family
      and we define the instance $\UCT(\instI)$ be its limit and $\cover$ to be the natural homomorphism
      from $\UCT(\instI)$ to $\instI$.
      
      Equivalently we can define $\UCT(\instI)$ to be the smallest~(usually countably infinite)
      instance such that:
      \begin{itemize}
        \item it is a tree instance, 
        \item it maps homomorphically onto $\instI$ via a map denoted by $\cover$,
        \item for every $(\tuple{x},\algC)$ constraint of $\instI$ there is $(\tuple{v},\algC)$ constraint of $\UCT(\instI)$
          such that $\cover(v_i)=x_i$ for every $i$,
        \item for every two variables $u,v$ of $\UCT(\instI)$ 
          if $\cover(u)=\cover(v)$ there is an automorphism $\psi$ of $\UCT(\instI)$ sending $u$ to $v$
          and such that $\cover(\psi(w))=\cover(w)$ for every $w$.
      \end{itemize}
      Note that each tree-pattern in $\instI$ maps homomorphically to $\UCT(\instI)$ and 
      the following conditions are equivalent:
      \begin{itemize}
        \item every tree-pattern in $\instI$ has a solution;
        \item $\UCT(\instI)$ has a solution;
        \item AC checking algorithm does not derive a contradiction on the instance $\instI$.
      \end{itemize}
      If $\instI$ is disconnected we take $\UCT(\instI)$ to be the union of $\UCT$'s of its connected components.

\section{Main theorem and sketch of the proof}\label{sect:result}
  In this section we restate Theorem~\ref{thm:mainsimple} and provide a more in-depth sketch
  of its proof using the algebraic notions from the previous sections.
  For a precise proofs of all the statements we refer the reader to the two last sections of the paper.

  Recall that all the templates are rigid cores, and that all the instances 
  have associated special constraints of the form $(x,A_x)$.
  In order to use the algebraic properties of the template  we restate Theorem~\ref{thm:mainsimple} as follows:
  \begin{theorem}\label{thm:main}
    Every SLAC instance over an \SDm template has a solution.
  \end{theorem}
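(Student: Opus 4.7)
The plan is to proceed by induction on the quantity $\sum_{x} |A_x|$, where the base case has every $A_x$ a singleton: then the map $x \mapsto$ (the unique element of $A_x$) is a solution, since every constraint $((x_1,\dots,x_n),\algC)$ has a solution-respecting tuple (by $1$-consistency of a SLAC instance) and that tuple is forced. For the inductive step we start with a SLAC instance $\instI$ in which some $|A_x|>1$, and we construct pp-definable subsets $A'_x \subseteq A_x$, with at least one proper inclusion, so that after restricting every constraint relation by $\prod_i A'_{x_i}$ the resulting instance $\instI'$ is still a SLAC instance. Then $\instI'$ has a solution by induction, and this solution also solves $\instI$.

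The construction of the $A'_x$'s is the core of the proof and splits along the dichotomy given by Theorem~\ref{thm:abs}. In the \emph{absorption case} we assume that some $A_y$ has a proper pp-definable absorbing subalgebra $B \abs A_y$. We then define $A'_x$ by propagating $B$ through the instance, namely
\begin{equation*}
  A'_x := \{a \in A_x : \text{every path-pattern cycle at } x \text{ through } a \text{ admits a solution meeting } B \text{ at } y\},
\end{equation*}
or a closely related pp-definable variant. The verification that each $A'_x$ is nonempty and that restriction preserves SLAC rests on the standard absorption trick: given several solutions of the same cycle, the common absorbing term $t$ (guaranteed by the third bullet of \S\ref{sect:absdef}) applied coordinatewise produces a new solution whose value at $y$ lies inside $B$, and the SLAC hypothesis provides enough distinct solutions to feed into $t$. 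One must also check that at least one $A'_x$ is strictly smaller than $A_x$; this follows from $B \subsetneq A_y$ using Fact~\ref{fact:linkness} and Corollary~\ref{cor:MAS}.

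In the \emph{no-absorption case} no variable admits a proper absorbing subalgebra obtainable by the above propagation. Applied to each constraint viewed as a subdirect product, Theorem~\ref{thm:abs} then forces every linked projection to be a full product, i.e.\ the instance is ``rectangular'' along links. Combined with the \SDm hypothesis this yields that the transitive closures $\alpha$ defined by cycles (as in Fact~\ref{fact:linkness}) are congruences on the $A_x$'s, and these congruences are non-trivial because some $|A_x|>1$ and the instance is already $1$-consistent. We let $A'_x$ be a single $\alpha$-block at each $x$, chosen compatibly along constraints. Rectangularity ensures compatibility can be achieved, and SLAC is preserved because cycles at $x$ project into the chosen blocks (applying Proposition~\ref{prop:absconst} to the diagonal subalgebra of the solution set of a cycle converts ``a constant tuple exists'' into ``the desired solution through $a$ exists'').

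The principal obstacle is the second case: translating the algebraic rectangularity into a genuine pp-definable congruence on each $A_x$ that is \emph{globally consistent} across all constraints and all cycles of the instance, and then verifying that restricting to a single block of this congruence preserves the SLAC property rather than merely $1$-consistency. This is where the pattern calculus of Section~\ref{sect:patterns} — in particular the identities $B+p+q=(B+p)+q$ and $B-p$ — must be combined with Proposition~\ref{prop:absconst} and the \SDm hypothesis to show that the congruence chosen at one variable induces the correct choice at every other variable reachable by a path-pattern.
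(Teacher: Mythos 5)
Your high-level plan (induct on $\sum_x |A_x|$, find a proper $1$-consistent restriction that is still SLAC, split according to whether some $\algA_x$ has a proper absorbing subuniverse) matches the paper's strategy, but both branches of your dichotomy have genuine gaps that the paper spends its last two sections filling.

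In the no-absorption case, the congruence you propose to use is the wrong one. You take ``the transitive closures $\alpha$ defined by cycles (as in Fact~\ref{fact:linkness})'' and assert they are non-trivial because $|A_x|>1$ and the instance is $1$-consistent. That does not follow: the linkness congruence of the solution set of a cycle can perfectly well be $1_{\algA_x}$ (the cycle is linked), in which case it gives you no information, and there is no a priori reason every cycle-congruence is proper. What the paper does instead is fix one variable $x$ with $|A_x|>1$, pick a \emph{maximal} proper congruence $\alpha_x$ (so that $\algA_x/\alpha_x$ is simple), pick an arbitrary block $\algA'_x$, and then propagate: for each path-pattern $p$ from $x$ to $w$, quotient the projected solution set by $\alpha_x$; either the result is linked (then, by Theorem~\ref{thm:abs}, it is a full product and $w$ is left unrestricted) or it is a graph of an isomorphism onto a simple quotient of $\algA_{\cover(w)}$, which forces the block at $w$. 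Your ``principal obstacle'' --- that the block chosen at one variable must be globally compatible --- is exactly the claim the paper proves (independence of the choice of proper pattern), and it is not a formality: it uses that $\algA_x/\alpha_x$ is simple, the SLAC hypothesis applied to the pattern $p-p$, and Theorem~\ref{thm:abs} in the linked subcase. You have named the obstacle but have not supplied the argument, and the congruence you start from is not the one that makes the argument go through. Similarly, verifying that the restricted instance is still SLAC (not just $1$-consistent) needs Corollary~\ref{cor:newloops} and Proposition~\ref{prop:absconst} applied to the solution set of a cycle with a minimal absorbing subuniverse on the first two coordinates; the quick invocation of ``rectangularity'' in your proposal does not capture this.

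In the absorption case, your one-step definition of $A'_x$ via path-pattern cycles that ``meet $B$ at $y$'' is both vague (``or a closely related pp-definable variant'') and, as far as I can see, insufficient. The paper's proof here is genuinely two-step. First, it uses \emph{tree} patterns (not path patterns), defines the preorder $(x,\algB)\prop(x',\algB')$ by propagating along tree patterns, restricts above $(\algA',x')$, takes a maximal equivalence class $\lastcomponent$, and proves the nontrivial claim that for each proper variable there is a \emph{unique minimal} absorbing candidate in $\lastcomponent$; these unique minima define a $1$-consistent absorbing subinstance. Tree patterns are essential here because $1$-consistency is about arc consistency, which is a tree notion, not a path notion, so path-pattern cycles alone will not deliver $1$-consistency of the restriction. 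Second, the paper invokes Theorem~\ref{thm:propagatetoabs} --- a result about universal covering trees, prime solutions, and constant tuples via Proposition~\ref{prop:absconst} --- applied to a limit of instances obtained by gluing path-pattern cycles, to turn the $1$-consistent absorbing subinstance into a SLAC subinstance. Your proposal never mentions Theorem~\ref{thm:propagatetoabs}, which is the workhorse of this branch, nor the tree/path distinction, nor the two-step structure, and the ``standard absorption trick'' you gesture at does not by itself give that the restriction is SLAC rather than merely nonempty.
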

  
  We start with a SLAC instance 
  $\instI$ over an \SDm template $\tempA$.
  As described in section~\ref{sect:mainsimple} our goal is to find algebras $\algA_x'\leq\algA_x$ for every
  $x$ and show that $\instI$ restricted to $\algA_x'$ is a SLAC instance.

  \subsection{There is no absorption in the instance}
    The first case we consider is when  none of the algebras $\algA_x$ has an absorbing subuniverse.
    In this case we choose an arbitrary $x$ with a non-trivial $\algA_x$ and choose a congruence $\alpha_x$ 
    such that $\algA_x/\alpha_x$ is simple.
    We pick an arbitrary block of this congruence to be $\algA'_x$.

    Take any path-pattern $p$ in $\instI$ such that $\cover$ maps beginning of $p$ to $x$ and let $w$ be the end variable of $p$.
    All the solutions to $p$, projected on the beginning and the end of $p$, produce an algebra 
    $\algR\subd\algA_x\times\algA_{\cover(w)}$ and we can quote $\algR$ on the first coordinate by $\alpha_x$ to obtain $\algR'$.
    If $\algR'$ is not-linked then it defines congruence $\alpha_{\cover(w)}$ on $\algA_{\cover(w)}$ and an isomorphism
    between $\algA_x/\alpha_x$ and $\algA_{\cover(w)}/\alpha_{\cover(w)}$ which forces the equivalence class of $\alpha_{\cover(w)}$
    that needs to be chosen as $\algA'_{\cover(w)}$ to be $\algA'_x +p$.

    One needs to show that the congruence and the isomorphism is independent on the choice of $p$; 
    and define $\algA'_y = \algA_y$ for $y$'s such that every $p$ defines a linked $\algR'$~%
    (note that in this case, by Theorem~\ref{thm:abs}, $\algR'=\algA_x/\alpha_x\times\algA_{\cover(w)}$).
    It remains to show that the restriction of $\instI$ to $\algA'_x$ is $1$-consistent and finally  a SLAC instance.
    This facts are proved in section~\ref{sect:noabs} using tools from section~\ref{sect:toolsnoabs}. 
  \subsection{There is absorption in the instance}
    In this case some algebra $\algA_{x'}$ has a non-trivial absorbing subuniverse, say $\algA'$.
    For every tree pattern $p$ in $\instI$ with all the leaf variables sent to $x'$ the algebra $\algA'+p$ is either empty
    or an absorbing subuniverse of $\algA_y$ where $y$ is the variable of $\instI$ to which $\cover$ sends the root of $p$.
    Using appropriate tree patterns we are able to define $\algA'_x\abs\algA_x$ for every $x$
    such that the restriction of $\instI$ to $\algA_x'$ is $1$-consistent.

    Next, using Theorem~\ref{thm:propagatetoabs}, taking for $\instI$ instances witnessing 
    big chunks of singleton linear arc consistency  we construct even smaller $\algA_x'$ such that the original instance 
    restricted to these algebras is a SLAC instance.
    The details of this proof are given in section~\ref{sect:abs}
    and use the tools established in the next section.

\section{Tools: absorbing subinstances}\label{sect:toolsabs}
  In this section we prove a theorem which plays a crucial role in the reduction in 
  the case with absorption.
  It does not require the template to be \SDm,
  in fact it requires no algebraic structure apart from the explicitly stated absorption.
  In our opinion this theorem is of independent interest and should find applications not connected to this paper.  

  In essence the theorem states that if a consistency can be found everywhere in the instance,
  then it can be found in any $1$-consistent absorbing subinstance.

  \begin{theorem}\label{thm:propagatetoabs}
    Let $\instI$ be an instance such that any $a\in\algA_x$ extends to a solution of $\instI$.
    Let, for every variable $x$,  $\algA'_x$ be an algebra such that
    \begin{enumerate}
      \item $\algA'_x\abs\algA_x$;
      \item $\UCT(\instI)$ can be solved with each variable $v$ in $\algA'_{\cover(v)}$.
    \end{enumerate}
    Then there exists a solution to $\instI$ with every variable $x$ evaluated into $\algA'_x$.
  \end{theorem}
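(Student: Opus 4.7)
The plan is an iterative absorption argument driven by the tree-pattern solutions furnished by the $\UCT$ hypothesis.

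First, by the common-term property of absorption from section~\ref{sect:absdef}, I would fix a single $n$-ary term $t$ that simultaneously witnesses $\algA'_x \abs \algA_x$ for every variable $x$ of $\instI$. The set $S$ of solutions of $\instI$ is a subalgebra of $\prod_x \algA_x$ closed under $t$, and by the first hypothesis $S$ surjects onto each $\algA_x$; in particular $S$ already contains, for each coordinate $x$ separately, a tuple whose $x$-value lies in $\algA'_x$.

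Second, I would unpack the $\UCT$ hypothesis into a supply of solutions of $\instI$ good on prescribed sets of coordinates. For every finite tree-pattern $p$ in $\instI$, a subtree of $\UCT(\instI)$ mapping onto $p$ via $\cover$ carries an $\algA'$-valued partial assignment obtained by restricting the given $\UCT$-solution; since any value at any variable extends to a full solution by the first hypothesis, this partial assignment glues (constraint by constraint at the leaves) to a solution of $\instI$ whose value at every variable $x$ of $\instI$ whose entire $\cover$-fibre in $p$ is covered by $p$ is forced into $\algA'_x$.

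Third, I would argue by downward induction on the number of bad coordinates of a current solution, where $x$ is bad for $s$ when $s(x) \notin \algA'_x$. Given a solution $s$ with a bad coordinate $y$, pick a deep tree-pattern $p$ rooted at $y$ and use step two to construct $n-1$ auxiliary solutions $s_1, \dotsc, s_{n-1}$ each good at $y$; the $t$-combination $t(s, s_1, \dotsc, s_{n-1}) \in S$ is then good at $y$ as well, and finiteness of $\instI$ yields termination once the inductive step goes through.

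The main obstacle, and the technical heart of the argument, is ensuring that the third step preserves every previously good coordinate. Because the absorbing term tolerates at most one bad input per coordinate, the $s_i$ must be jointly good on the entire previously-good set, not only at $y$. I would try to achieve this by choosing $p$ so that every variable of $\instI$ has sufficiently many preimages in $p$---forcing the $\UCT$-derived partial solutions into $\algA'$ at every image variable---or, more elegantly, by building a suitable subdirect subpower of $\algA$ whose structure encodes ``good on a prescribed set of coordinates'' and applying Proposition~\ref{prop:absconst} to convert the available tree-pattern absorption into a simultaneous witness. Verifying that such a $p$ (or the corresponding subdirect subpower) exists and achieves the needed simultaneous control is where I expect the bulk of the combinatorial work to lie.
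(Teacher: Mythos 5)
Your step~2 is where the argument breaks down. Restricting a prime solution of $\UCT(\instI)$ to a finite subtree $p$ gives a solution of the tree instance $p$, not a partial solution of $\instI$: since $\cover$ is generally many-to-one on the variables of $p$ (the universal covering tree unravels cycles), distinct preimages of the same variable $x$ of $\instI$ can receive different values, so the restricted assignment does not descend along $\cover$ to a function on variables of $\instI$. The phrase ``glues (constraint by constraint at the leaves) to a solution of $\instI$'' is therefore unjustified, and the claim that such a glued solution would be forced into $\algA'_x$ at every $x$ whose $p$-fibre is covered is false even in cases where gluing is possible: extending from the root by hypothesis~1 yields no control at any other variable. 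This is not a technicality you can repair by choosing $p$ ``deep enough''; it is the central difficulty the theorem has to surmount.

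The paper's proof therefore never descends from $\UCT(\instI)$ to $\instI$ until the very end. It works with the algebras $\algE\supseteq\algF$ of all, respectively prime, solutions of $\UCT(\instI)$, and shows by induction on variables $x_1,\dotsc,x_k$ of $\instI$ that $\algF$ contains a prime solution constant on each fibre $\cover^{-1}(x_i)$; only once this holds for all variables does it read off a solution of $\instI$. The two ingredients that fill the gap you flagged in step~3 are: (i) automorphism symmetry of $\UCT(\instI)$, which makes $\proj_v(\algF')$ independent of the choice of $v$ in a fibre, so for a finite $W\subseteq\cover^{-1}(x)$ the projection $\proj_W(\algF')$ is subdirect in $\algB^W$ and absorbs a set containing all constant tuples; (ii) Proposition~\ref{prop:absconst} applied to that finite window to produce a constant tuple, followed by a compactness argument to extend to the whole (infinite) fibre. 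You correctly identified Proposition~\ref{prop:absconst} and simultaneity as the crux, but the proposition must be applied to fibre-projections of the prime-solution algebra inside $\UCT(\instI)$, not to a collection of $\instI$-solutions obtained by gluing --- that collection does not exist in the form your plan requires.
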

  We proceed to prove  Theorem~\ref{thm:propagatetoabs}.
  Let a solution to $\UCT(\instI)$ be called {\em prime} if it sends every variable $v$ into $\algA'_{\cover(v)}$.
  For every variable $x$ of $\instI$ we choose an arbitrary $v$ of $\UCT(\instI)$ such that $\cover(v)=x$ and redefine $\algA'_x$
  to consist of all the values $v$ can take in prime solutions.
  Note that:
  \begin{itemize}
    \item the choice of $v$~(for a fixed  $x$)  does not matter~%
      (as $\UCT(\instI)$ has automorphisms moving potential candidates to each other) and
    \item the assumptions of the theorem still hold:
    \begin{enumerate}
      \item Fix an $x$ and $v$ such that $\cover(v)=x$. 
        For every $a\in \algA_x$ there is a solution $\sol$ of $\instI$ sending $x$ to $a$,
        therefore $\sol\circ\cover$ is a solution of $\UCT(\instI)$ sending $v$ to $a$.
        The algebra of prime solutions of $\UCT(\instI)$
        absorbs the algebra of all solutions of $\UCT(\instI)$ and therefore the new $\algA_x'$ absorbs $\algA_x$.
      \item Fix any prime solution of $\UCT(\instI)$; if a variable $v$ is mapped to $a$ then $a$ belongs to the new $\algA'_{\cover(v)}$
        which implies that the solution is, at every coordinate, in the new $\algA_x'$'s.
    \end{enumerate}
  \end{itemize}
  After such a change the restriction of $\instI$ to $\algA'_x$ is $1$-consistent. 

  Note that to prove the theorem it suffices to show that there exists 
  a prime solution to $\UCT(\instI)$ constant on $\cover^{-1}(x)$ for every variable $x$ of $\instI$.
  For induction assume that there is a prime solution to $\UCT(\instI)$ constant on each of the sets
  $\cover^{-1}(x_1),\dotsc,\cover^{-1}(x_k)$ and let $x$ be another variable.
  
  Let $\algE\leq\prod_{v\in\UCT(\instI)}\algA_{\cover(v)}$ be the algebra of all the solutions to $\UCT(\instI)$;
  for every $x, a\in\algA_x$ there is a solution $\sol$ such that $\sol(x) =a$;
  then $\sol\circ\cover$ sends $\cover^{-1}(x)$ to $a$ and we proved that 
  $\algE$ is subdirect in $\prod_{v\in\UCT(\instI)}\algA_{\cover(v)}$.

  Let $\algF\subd\prod_{v\in\UCT(\instI)}\algA'_{\cover(v)}$ be the algebra of prime solutions to $\UCT(\instI)$
  which is non-empty by the assumptions of our theorem.
  Clearly $\algF\abs\algE$ and we define $\algE', \algF'$ to be the subalgebras of $\algE, \algF$ respectively 
  consisting of solutions constant on each of $\cover^{-1}(x_1),\dotsc,\cover^{-1}(x_k)$.
  
  Since all the solutions to $\UCT(\instI)$ of the form $\sol\circ\cover$~(where $\sol$ is a solution to $\instI$)
  are constant on all the $\cover^{-1}(x)$'s the algebra $\algE'$ is still subdirect in $\prod_{v\in\UCT(\instI)}\algA_{\cover(v)}$.
  The algebra $\algF'$ is non-empty by the inductive assumption and, clearly $\algF'\abs\algE'$.

  \begin{claim}
    For any two variable $v,w$ of $\UCT(\instI)$ if $\cover(v) = \cover(w) = x$ then $\proj_v(\algF') = \proj_w(\algF')$.
  \end{claim}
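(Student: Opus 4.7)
The plan is to exploit the homogeneity of $\UCT(\instI)$ recorded in the fourth bullet of its definition. Since $\cover(v) = \cover(w) = x$, there is an automorphism $\psi$ of the instance $\UCT(\instI)$ with $\psi(v) = w$ and $\cover \circ \psi = \cover$. I will show that the map $\sol \mapsto \sol \circ \psi^{-1}$ sends $\algF'$ to itself, after which the equality of projections at $v$ and $w$ is immediate.

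First, since $\psi$ is an automorphism of the instance $\UCT(\instI)$, composing any solution with $\psi^{-1}$ yields another solution, so this map acts on $\algE$. To see that it restricts to $\algF'$, I need two preservation checks. For primality: if $\sol \in \algF$ then
\begin{equation*}
  (\sol \circ \psi^{-1})(u) \;=\; \sol(\psi^{-1}(u)) \;\in\; \algA'_{\cover(\psi^{-1}(u))} \;=\; \algA'_{\cover(u)},
\end{equation*}
where the last equality uses $\cover \circ \psi^{-1} = \cover$. For the constantness condition, $\psi^{-1}$ also preserves $\cover$, so it permutes each fiber $\cover^{-1}(x_i)$; hence if $\sol$ is constant on $\cover^{-1}(x_i)$ then so is $\sol \circ \psi^{-1}$.

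With this in hand, for every $\sol \in \algF'$ the solution $\sol \circ \psi^{-1}$ lies in $\algF'$ and satisfies $(\sol \circ \psi^{-1})(w) = \sol(\psi^{-1}(w)) = \sol(v)$, yielding $\proj_v(\algF') \subseteq \proj_w(\algF')$. The reverse inclusion is symmetric: apply the same argument with $\psi^{-1}$ (which carries $w$ to $v$ and still preserves $\cover$) in place of $\psi$. There is no real obstacle here; the entire argument is transport of structure along the automorphism supplied by the definition of $\UCT$, and once the preservation of primality and of constantness on each $\cover^{-1}(x_i)$ is noted, the claim follows at once.
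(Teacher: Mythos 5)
Your proof is correct and uses exactly the same idea as the paper: invoke the $\cover$-preserving automorphism of $\UCT(\instI)$ guaranteed by its definition to transport a witnessing prime solution from $v$ to $w$. The paper compresses the verification that primality and constantness on the fibers $\cover^{-1}(x_i)$ are preserved into a single phrase, whereas you spell these checks out, but the argument is the same.
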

  \begin{proof}
    Take any $a\in\proj_v(\algF')$ given by a solution $\sol$ and let $\psi$ be the automorphism of $\UCT(\instI)$ mapping $w$ to $v$.
    The solution $\sol\circ\psi$ is constant on $\cover^{-1}(x_i)$ by properties of $\psi$ and guarantees that $a\in\proj_w(\algF')$. 
  \end{proof}
  \begin{claim}
    Let $W$ be a finite set of variables of $\UCT(\instI)$ such that $\cover(W) = \{x\}$.
    There exists a tuple in $\algF'$ constant on $W$;
    equivalently there exists a prime solution to $\UCT(\instI)$ 
    constant on $\cover^{-1}(x_1),\dotsc,\cover^{-1}(x_k)$ and $W$.
  \end{claim}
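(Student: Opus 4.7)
The plan is to apply Proposition~\ref{prop:absconst} to the projections of $\algE'$ and $\algF'$ onto the coordinates indexed by $W$. Set $\algS=\proj_W(\algE')$ and $\algR=\proj_W(\algF')$; a constant tuple in $\algR$ lifts to a member of $\algF'$ constant on $W$, which is the equivalent form of the claim.

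First I would verify that $\algS$ contains every constant tuple $(a,\dotsc,a)$ with $a\in\algA_x$. For such an $a$, the hypothesis of Theorem~\ref{thm:propagatetoabs} provides a solution $\sol$ of $\instI$ with $\sol(x)=a$, and then $\sol\circ\cover$ is a solution of $\UCT(\instI)$ constant on every fiber $\cover^{-1}(y)$; in particular it lies in $\algE'$ and projects to $(a,\dotsc,a)$ on $W$. This also shows $\algS\subd\algA_x^{|W|}$. Next, by the previous claim, $\algD:=\proj_v(\algF')$ is the same subalgebra of $\algA_x$ for every $v\in\cover^{-1}(x)$, so $\algR\subd\algD^{|W|}$. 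Finally, the absorption $\algR\abs\algS$ is inherited from $\algF'\abs\algE'$: absorbing subuniverses are preserved under surjective homomorphisms (lift preimages of elements of the absorbing subuniverse to elements of the absorbing subuniverse, apply the witnessing term, push forward), and projection to the $W$-coordinates is such a homomorphism.

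The main obstacle is the mismatch of ambient products: $\algR$ lives in $\algD^{|W|}$ while $\algS$ lives in $\algA_x^{|W|}$, so Proposition~\ref{prop:absconst} is not directly applicable. I would fix this by passing to $\algS'=\algS\cap\algD^{|W|}$. All diagonal tuples $(a,\dotsc,a)$ with $a\in\algD$ lie in $\algS'$ (from the previous step combined with $\algD\subseteq\algA_x$), so $\algS'$ is subdirect in $\algD^{|W|}$ and contains every constant tuple of that power. Since $\algR\subseteq\algS'$, the absorption $\algR\abs\algS'$ is immediate from $\algR\abs\algS$ (the same witnessing term still pushes tuples from $\algS'\subseteq\algS$ with at most one coordinate outside $\algR$ into $\algR$).

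At this point Proposition~\ref{prop:absconst}, applied to $\algR\abs\algS'$ with underlying algebra $\algD$ and $n=|W|$, supplies a constant tuple $(a,\dotsc,a)\in\algR$. Any preimage of this tuple in $\algF'$ is a prime solution of $\UCT(\instI)$ that is constant on each of $\cover^{-1}(x_1),\dotsc,\cover^{-1}(x_k)$ (by membership in $\algF'$) and constant on $W$, which is exactly what the claim asks for.
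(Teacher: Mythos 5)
Your proof is correct and follows essentially the same route as the paper's: identify $\algD=\proj_v(\algF')$ (well-defined by the preceding claim), observe $\proj_W(\algF')\abs\bigl(\proj_W(\algE')\cap\algD^{|W|}\bigr)$ with the latter containing all constant tuples over $\algD$, and invoke Proposition~\ref{prop:absconst}. You merely spell out in more detail the intermediate steps (preservation of absorption under the projection homomorphism and the restriction to $\algS'$) that the paper leaves implicit.
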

  \begin{proof}
    By the previous claim $\algF'$ projects to the same algebra~(call it $\algB$) on every variable in $W$.
    Therefore $\proj_W(\algF')\subd\algB^{W}$ and $\proj_W(\algF')\abs(\proj_W(\algE')\cap\algB^W)$
    and the last algebra contains all the constants.
    By Proposition~\ref{prop:absconst} $\algF'$ contains a constant tuple.
  \end{proof}
  Since, for every finite $W\subseteq\cover^{-1}(x)$, we have a prime solution constant on  
  $\cover^{-1}(x_1),\dotsc,\cover^{-1}(x_k)$ as well as $W$, 
  the compactness argument provides a prime solution constant on all of the sets 
  $\cover^{-1}(x_1),\dotsc,\cover^{-1}(x_k),\cover^{-1}(x)$. 
  This finishes the proof of an inductive step and of Theorem~\ref{thm:propagatetoabs} as well.

\section{Tools: algebras with no absorption}\label{sect:toolsnoabs}
  In this section most of the algebras are simple, have no absorption and lie in an \SDm variety. 
  
  From~\cite{BW} we know the following basic fact concerning simple, absorption free algebras in \SDm varieties.
  \begin{proposition}\label{prop:old}
    Let $\algA_1,\dotsc,\algA_k$ be simple algebras with no absorbing subuniverses which lie in an \SDm variety.
    If $\algR\subd\prod_i\algA_i$ is such that $\pi_i\vee\pi_j = 1_{\algR}$ then $\algR =\prod_i\algA_i$.
  \end{proposition}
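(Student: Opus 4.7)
The plan is induction on $k$. For $k=2$ the hypothesis $\pi_1\vee\pi_2=1_\algR$ is exactly the linkedness of $\algR\subd\algA_1\times\algA_2$, so Theorem~\ref{thm:abs} (together with the fact that \SDm algebras are Taylor and the assumption that neither factor has a proper absorbing subuniverse) immediately gives $\algR=\algA_1\times\algA_2$.

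For the inductive step with $k\geq 3$, I would first project away the last coordinate, setting $\algR':=\proj_{\{1,\dotsc,k-1\}}(\algR)\subd\prod_{i<k}\algA_i$. The pairwise joins transfer along the surjective quotient $q:\algR\to\algR'$ -- each $\pi_i$ with $i<k$ descends to the corresponding projection congruence $\rho_i$ on $\algR'$, and $1_\algR$ descends to $1_{\algR'}$ -- so $\algR'$ satisfies the same hypotheses with $k-1$ factors and the inductive hypothesis yields $\algR'=\prod_{i<k}\algA_i$. Viewing $\algR\subd\algR'\times\algA_k$, the proof reduces to one further application of Theorem~\ref{thm:abs}, which needs (i)~that neither $\algR'$ nor $\algA_k$ has a proper absorbing subuniverse, and (ii)~that $\algR$ is linked in this two-factor decomposition.

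For (i) I would run a short separate induction on the number of factors: if $\algB\abs\algA\times\algA'$ with $\algA,\algA'$ absorption-free, then absorption descends to projections, giving $\proj_\algA(\algB)=\algA$ and $\proj_{\algA'}(\algB)=\algA'$, so $\algB$ is subdirect; the third bullet of Fact~\ref{fact:linkness}, applied to $\algB\abs\algA\times\algA'$, shows that $\algB$ inherits the trivial linkedness of $\algA\times\algA'$, and Theorem~\ref{thm:abs} then forces $\algB=\algA\times\algA'$, contradicting properness.

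Item (ii), the claim that $\pi_k\vee\bigwedge_{i<k}\pi_i=1_\algR$, is the step I expect to be the main obstacle. If this join $\theta$ were strictly below $1_\algR$, then $\algR/\theta$ would be a nontrivial quotient of $\algR/\pi_k\cong\algA_k$; simplicity of $\algA_k$ would force $\algR/\theta\cong\algA_k$, and pushing down to $\algR'=\algR/\bigwedge_{i<k}\pi_i$ would produce a surjection $g:\algR'\to\algA_k$ whose kernel $\kappa$ is a maximal congruence on $\prod_{i<k}\algA_i$ necessarily distinct from every projection congruence $\rho_i$ (otherwise the transported hypothesis $\rho_i\vee\kappa=1_{\algR'}$ would collapse to $\rho_i=1_{\algR'}$). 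Ruling out such a ``skew'' maximal congruence on a product of simple absorption-free algebras in an \SDm variety is the crux; it hinges on the \SDm structure theory for such products, essentially the neutrality of the commutator, which forces every congruence of $\prod_{i<k}\algA_i$ to be a product congruence and hence the only maximal congruences to be the projection kernels. With this obstruction removed, Theorem~\ref{thm:abs} applied to the now-linked $\algR\subd\algR'\times\algA_k$ yields $\algR=\prod_{i=1}^k\algA_i$, finishing the induction.
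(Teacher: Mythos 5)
The paper does not actually prove this proposition: it is quoted verbatim from~\cite{BW} (``From~\cite{BW} we know the following basic fact\dots''), so there is no in-paper proof to compare yours against. Judged on its own merits, your inductive scaffolding is sound: the base case is a direct application of Theorem~\ref{thm:abs}; the transfer of the pairwise-join hypotheses to $\algR'=\proj_{\{1,\dotsc,k-1\}}(\algR)$ is correct since joins in the interval $[\bigwedge_{i<k}\pi_i,\,1_\algR]$ agree with joins in $\Con(\algR)$; and the argument that a product of absorption-free algebras is absorption-free (absorption passes to projections, then Fact~\ref{fact:linkness} plus Theorem~\ref{thm:abs}) is a legitimate auxiliary step. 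The reduction of the whole problem to establishing $\pi_k\vee\bigwedge_{i<k}\pi_i=1_\algR$ is exactly right, and so is your observation that failure of that join would produce a maximal congruence $\kappa$ on $\prod_{i<k}\algA_i$ with $\kappa\vee\rho_i=1$ for all $i$ yet $\kappa\neq\rho_i$ for all $i$.

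The genuine gap is that you then dispose of this configuration by invoking, without proof or a precise citation, that \SDm{} structure theory / neutrality of the commutator forces every congruence on $\prod_{i<k}\algA_i$ to be a product congruence. This is not among the tools the paper sets up in Section~4, and it is not a one-line consequence of Theorem~\ref{thm:abs} or Fact~\ref{fact:linkness}; it is itself a substantive theorem (roughly: products of simple non-abelian algebras in a congruence-neutral variety are skew-free), and proving it is where the real content of the proposition lives. So you have correctly located the hard point but have not actually crossed it; the proof as written is incomplete precisely there. A slightly different way to close the same hole is to note that the linkedness congruence $\alpha$ on $\algR'$ satisfies $\alpha\vee\rho_i=1_{\algR'}$ for every $i$ (because each $\proj_{i,k}(\algR)$ is linked) and then invoke congruence \emph{join}-semidistributivity to conclude $\alpha\vee\bigwedge_i\rho_i=1_{\algR'}$, hence $\alpha=1_{\algR'}$; but that too rests on a nontrivial fact (that congruence \SDm{} varieties are also congruence SD$(\vee)$) that would need to be justified. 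Either way, the crux requires an argument or an explicit reference that you have not supplied.
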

  \noindent The following corollary is an easy consequence of previous proposition.
  \begin{corollary}\label{cor:noabs}
    Let $\algA_1,\dotsc,\algA_k$ be simple algebras with no absorbing subuniverses which lie in an \SDm variety.
    If $\algR\subd\prod_i\algA_i$ and $\algR'\abs\algR$ then $\algR'=\algR$. 
  \end{corollary}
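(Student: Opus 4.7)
The plan is to apply Fact~\ref{fact:linkness} and Proposition~\ref{prop:old} at the level of pairwise projections and then reconstruct the global algebras from this information.

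First I would check that $\algR'$ is subdirect in $\prod_i\algA_i$: indeed $\proj_i(\algR')\abs\proj_i(\algR)=\algA_i$, and since $\algA_i$ has no proper absorbing subuniverse we get $\proj_i(\algR')=\algA_i$.

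Next I would show $\proj_{i,j}(\algR')=\proj_{i,j}(\algR)$ for every pair of coordinates. Both are subdirect in $\algA_i\times\algA_j$ and the former absorbs the latter, so Fact~\ref{fact:linkness} supplies identical linking congruences on each side. In the linked case, Proposition~\ref{prop:old} applied with two factors forces both projections to coincide with $\algA_i\times\algA_j$. In the unlinked case, simplicity of $\algA_i$ and $\algA_j$ forces those linking congruences to be $0$, so each projection is the graph of an isomorphism $\algA_i\to\algA_j$; the inclusion $\proj_{i,j}(\algR')\subseteq\proj_{i,j}(\algR)$ together with subdirectness of both forces these graphs to agree.

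To promote pairwise equality to global equality, I would introduce the equivalence $i\equiv j$ iff $\proj_{i,j}(\algR)$ is not linked. Transitivity follows because the composition of two isomorphism graphs $\algA_i\to\algA_j\to\algA_l$ is again the graph of an isomorphism containing $\proj_{i,l}(\algR)$, and subdirectness forces equality, so the third projection is unlinked as well. Picking representatives $r_1,\dotsc,r_m$ of the equivalence classes, all pairwise projections of $\proj_{r_1,\dotsc,r_m}(\algR)$ are full products (coordinates in distinct classes are linked), so Proposition~\ref{prop:old} yields $\proj_{r_1,\dotsc,r_m}(\algR)=\prod_s\algA_{r_s}$; the remaining coordinates of every tuple in $\algR$ are then uniquely determined by the intra-class isomorphism graphs. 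The identical reconstruction applied to $\algR'$ (which shares the equivalence classes, the isomorphism graphs, and, again by Proposition~\ref{prop:old}, the full product on representatives) gives $\algR'=\algR$.

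The main obstacle is this last global reconstruction: verifying transitivity of $\equiv$ carefully and checking that both $\algR$ and $\algR'$ are indeed identified with the unique maximal subalgebra of $\prod_i\algA_i$ compatible with the recorded pairwise data, which requires combining intra-class rigidity from simplicity with inter-class freeness from Proposition~\ref{prop:old}.
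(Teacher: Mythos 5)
Your proposal is correct and follows essentially the same route as the paper: establish equality of pairwise projections via Fact~\ref{fact:linkness}, Theorem~\ref{thm:abs}, and simplicity, then reduce to a subset of coordinates on which all pairwise projections are full products and invoke Proposition~\ref{prop:old}, recovering the dropped coordinates by unique extension along the bijection graphs. The only difference is cosmetic: you organize the bijection-linked coordinates into an explicit equivalence relation (and prove its transitivity) before picking representatives, whereas the paper simply drops one coordinate of each bijection pair iteratively until none remain.
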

  \begin{proof}
    Consider $\proj_{i,j}(\algR)$. 
    If it is a graph of a bijection then, as $\proj_{i,j}(\algR')\subd\algA_i\times\algA_j$ is 
    a subalgebra of $\proj_{i,j}(\algR)$, the algebras $\proj_{i,j}(\algR')$ and $ \proj_{i,j}(\algR)$ have to coincide.
    
    If, on the other hand, $\proj_{i,j}(\algR)$ is not a graph of bijection then, in $\proj_{i,j}(\algR)$
    we have $\pi_1\vee\pi_2 = 1_{\proj_{i,j}(\algR)}$ and, by Theorem~\ref{thm:abs}, $\proj_{i,j}(\algR) = \algA_i\times\algA_j$.
    By Fact~\ref{fact:linkness} in $\proj_{i,j}(\algR')$ also $\pi_1\vee\pi_2 = 1_{\proj_{i,j}(\algR')}$
    and by the reasoning from the previous sentence $\proj_{i,j}(\algR') = \algA_i\times\algA_j$. 
    We conclude that in this case $\proj_{i,j}(\algR) = \proj_{i,j}(\algR')$ as well.
    
    If, for some $i,j$, the projection $\pi_{i,j}(\algR)$ is a graph of a bijection we can drop the $j$-th coordinate,
    i.e. substitute $\algR,\algR'$ with their projections to all but the $j$-th coordinate.
    Note that every tuple in such a projection extends uniquely to the $j$-th coordinate and thus to an
    element of the original algebra.

    After repeating the procedure sufficiently any times we obtain new $\algR,\algR'$ 
    such that $\pi_{i,j}(\algR) = \pi_{i,j}(\algR')= \algA_i\times\algA_j$ for every $i,j$. 
    These new algebras are full products~(by Proposition~\ref{prop:old}). 
    
    As every element of the new $\algR,\algR'$ uniquely extends to an element 
    of the original $\algR, \algR'$ respectively, the corollary is proved.
    Indeed every tuple $\tuple{a}$ of the projected $\algR'$ extends to a tuple 
    in the original $\algR'$, this extension needs to be the unique extension of $\tuple{a}$ to a tuple in $\algR$.
  \end{proof}
  \noindent The next corollary also follows from Proposition~\ref{prop:old}.
  \begin{corollary}\label{cor:nocon}
    Let $\algA_1,\dotsc,\algA_k$ be simple algebras with no absorbing subuniverses which lie in an \SDm variety.
    If $\alpha$ is a congruence on $\algR\subd\prod_i\algA_i$ such that for every $\tuple{a}\in\algR$  and every $i$
    $\proj_i(\tuple{a}/\alpha) = \algA_i$ then $\alpha=1_{\algR}$.
  \end{corollary}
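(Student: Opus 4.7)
The plan is to argue by contradiction, passing to a simple quotient and then applying Proposition~\ref{prop:old} to derive a size contradiction.

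Suppose $\alpha \neq 1_\algR$. First, choose a congruence $\theta^*$ with $\alpha \leq \theta^* < 1_\algR$ such that $\algQ := \algR/\theta^*$ is simple, and write $\eta\colon\algR\to\algQ$ for the quotient map. Every $\theta^*$-block is a union of $\alpha$-blocks, so the projection-surjectivity hypothesis still holds for $\theta^*$. A first intermediate claim is that $\algQ$ has no proper absorbing subuniverse: any such $\algQ'\abs\algQ$ pulls back to $\algR' := \eta^{-1}(\algQ')$, which absorbs $\algR$ (absorption lifts along surjective homomorphisms) and is subdirect in $\prod_i\algA_i$ (because $\algR'$ is a union of $\theta^*$-blocks, each projecting onto every $\algA_i$); Corollary~\ref{cor:noabs} then forces $\algR'=\algR$, contradicting properness of $\algQ'$. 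Note $\algQ$ lies in the same \SDm variety as the $\algA_i$, as a homomorphic image of $\algR$.

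Next, I reduce to the case $\algR = \prod_i\algA_i$ by dropping bijection-linked coordinates exactly as in the proof of Corollary~\ref{cor:noabs}; the bijective projections underlying the reduction carry $\theta^*$ across, and the projection hypothesis survives. On this reduced $\algR$, the projection hypothesis immediately yields $\pi_i\vee\theta^*=1_\algR$ for each $i$: for $\tuple{a},\tuple{b}\in\algR$, pick $\tuple{c}\in\tuple{a}/\theta^*$ with $\proj_i(\tuple{c})=\proj_i(\tuple{b})$, so that $\tuple{a}\,\theta^*\,\tuple{c}\,\pi_i\,\tuple{b}$.

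To conclude, form the graph $\tilde\algR := \{(\tuple{a},\eta(\tuple{a})):\tuple{a}\in\algR\}$, viewed as a subalgebra of $\algA_1\times\dotsb\times\algA_k\times\algQ$. It is subdirect ($\algR$ is now the full product of the $\algA_i$ and $\eta$ is surjective), and all $k+1$ factors are simple, have no proper absorbing subuniverse, and lie in the same \SDm variety. For each pair of distinct coordinates $a,b$, $\pi_a\vee\pi_b=1_{\tilde\algR}$: two $\algA$-coordinates are handled by the standard zigzag in a full product, and a pair involving the $\algQ$-coordinate reduces through the natural isomorphism $\tilde\algR\cong\algR$ to $\pi_i\vee\theta^*=1_\algR$. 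Proposition~\ref{prop:old} then forces $\tilde\algR=\algA_1\times\dotsb\times\algA_k\times\algQ$; but $|\tilde\algR|=|\algR|=\prod_i|\algA_i|$, so $|\algQ|=1$, contradicting nontriviality of $\algQ$. Hence $\alpha=1_\algR$.

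The most delicate step is the absorption-lift argument showing $\algQ$ has no proper absorbing subuniverse: one must verify that the pullback $\algR'$ of a proper $\algQ'\abs\algQ$ is both proper and genuinely subdirect in $\prod_i\algA_i$, so that Corollary~\ref{cor:noabs} applies and sends the contradiction to the right place. Checking that the reduction to the full-product case preserves the projection-surjectivity hypothesis on $\theta^*$-blocks is the other point requiring care.
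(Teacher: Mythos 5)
Your proof is correct, but it takes a genuinely different route from the paper's. The paper treats the congruence $\alpha$ directly as a subalgebra of $\prod_i\algA_i\times\prod_i\algA_i$; after dropping coordinates on which $\algR$ has bijective pairwise projections, it checks that $\alpha$ projects fully onto every pair of the $2k$ coordinates (same-copy pairs via the reduction, cross pairs via the block-surjectivity hypothesis) and closes with a single application of Proposition~\ref{prop:old}. You instead pass to a simple quotient $\algQ=\algR/\theta^*$ above $\alpha$, prove that $\algQ$ has no proper absorbing subuniverse by pulling back along $\eta$ and invoking Corollary~\ref{cor:noabs}, reduce $\algR$ to a full product, and then apply Proposition~\ref{prop:old} to the $(k+1)$-ary graph $\tilde\algR\subd\prod_i\algA_i\times\algQ$, extracting the contradiction from the cardinality $|\tilde\algR|=\prod_i|\algA_i|$. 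Both paths funnel through Proposition~\ref{prop:old}, but the paper's $2k$-ary viewpoint is more economical: it avoids the quotient step and the absorption-lift lemma entirely. Your version has the merit of isolating the simple quotient $\algQ$ and showing explicitly that it inherits absorption-freeness (a fact of some independent interest), at the price of a longer chain of reductions.
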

  \begin{proof}
    Repeating a trick from the proof of Corollary~\ref{cor:noabs}
    we consider $\proj_{i,j}(\algR)$ and drop coordinate $j$~%
    (i.e. project on all coordinates different than $j$)
    whenever $\proj_{i,j}(\algR)$ is a graph of bijection.
    As every tuple on the coordinates other than $j$ extends uniquely to the $j$-th coordinate 
    the relation $\alpha$ inherited to the new $\algR$ remains transitive i.e. still is a congruence.

    After removing all such coordinates $\algR$ projects fully on every pair of coordinates by Theorem~\ref{thm:abs}. 
    Consider $\alpha$ as subuniverse of $\prod_i\algA_i\times\prod_i\algA_i$. 
    It projects fully on every pair of coordinates~(by assumption about $\alpha$)
    and thus by Proposition~\ref{prop:old} it is a full power which means that $\alpha$ is the full congruence.
  \end{proof}
  \begin{corollary}\label{cor:subd}
    Let $\algB, \algA_1,\dotsc,\algA_k$ be  algebras with no absorbing subuniverses which lie in an \SDm variety,
    such that for every $i$ the algebra $\algA_i$ is simple.
    If $\algR\subd\algB\times\prod_i\algA_i$~(i.e. $\algR$ is subdirect in $k+1$-ary product)
    and the congruence $\pi_1\vee\pi_i = 1_{\algR}$ for every $i$,
    then $\algR =\algB\times\proj_{2,\dotsc,k+1}(\algR)$.
  \end{corollary}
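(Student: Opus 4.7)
The plan is to reinterpret $\algR$ as a subdirect product of $\algB$ and $\algS := \proj_{2,\dotsc,k+1}(\algR)$, prove that this subdirect product is linked, and then invoke Theorem~\ref{thm:abs} together with Corollary~\ref{cor:noabs} to conclude $\algR = \algB\times\algS$.

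First I would establish the auxiliary fact that $\proj_{1,i}(\algR) = \algB\times\algA_{i-1}$ for each $i\in\{2,\dotsc,k+1\}$. The projection $\proj_{1,i}(\algR)\subd\algB\times\algA_{i-1}$ inherits linkedness $\pi_1\vee\pi_2 = 1$ from $\pi_1\vee\pi_i = 1_\algR$, and Theorem~\ref{thm:abs}, combined with the absorption-free hypothesis on $\algB$ and $\algA_{i-1}$, forces the projection to be the full product.

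The main step is to show that the link congruence $\alpha$ on $\algS$ (in the sense of Fact~\ref{fact:linkness}) equals $1_\algS$. For this I would apply Corollary~\ref{cor:nocon} to $\algS\subd\prod_i\algA_i$; the hypothesis to verify is that $\proj_i(\sigma/\alpha) = \algA_i$ for every $\sigma\in\algS$ and every coordinate $i$. Given $\sigma$, pick any $b^{*}\in\algB$ with $(b^{*},\sigma)\in\algR$, and for any $a\in\algA_i$ use $\proj_{1,i+1}(\algR) = \algB\times\algA_i$ from the previous step to produce $\sigma'\in\algS$ with $(b^{*},\sigma')\in\algR$ and $\sigma'_i = a$. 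Then $(b^{*},\sigma)$ and $(b^{*},\sigma')$ witness a single link-step between $\sigma$ and $\sigma'$, so $a\in\proj_i(\sigma/\alpha)$; since $a$ was arbitrary, the hypothesis of Corollary~\ref{cor:nocon} holds and $\alpha = 1_\algS$.

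With $\alpha = 1_\algS$, the subdirect product $\algR\subd\algB\times\algS$ is linked. Since $\algS$ has no proper absorbing subuniverse by Corollary~\ref{cor:noabs} and $\algB$ has none by hypothesis, Theorem~\ref{thm:abs} yields $\algR = \algB\times\algS$. The main obstacle is conceptual rather than technical: one must resist the temptation to keep $\algS$ decomposed as $\prod_i\algA_i$ throughout, and instead treat it as a single factor so that the absorption-free link argument of Theorem~\ref{thm:abs} becomes applicable. Once set up this way, the hypothesis of Corollary~\ref{cor:nocon} follows quickly from the coordinate-wise fullness established in the first step.
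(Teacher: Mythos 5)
Your proof is correct and follows essentially the same route as the paper: establish $\proj_{1,i}(\algR)=\algB\times\algA_{i-1}$ via Theorem~\ref{thm:abs}, use that to verify the hypothesis of Corollary~\ref{cor:nocon} for the link congruence on $\proj_{2,\dotsc,k+1}(\algR)$, conclude $\algR$ is linked as a subproduct of $\algB$ and $\proj_{2,\dotsc,k+1}(\algR)$, and finish with Theorem~\ref{thm:abs} and Corollary~\ref{cor:noabs}. Your explicit one-step link argument merely spells out the verification that the paper leaves implicit, so this is the same proof with a little more detail.
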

  \begin{proof}
    View $\algR$ as a subproduct of $\algB$ and $\proj_{2,\dotsc,k+1}(\algR)$.
    Since, for any $i$ $\pi_1\vee\pi_i = 1_{\algR}$ the $\proj_{1,i}(\algR)$ is linked
    and, by Theorem~\ref{thm:abs}, $\proj_{1,i}(\algR)= \algB\times\algA_i$.
    The congruence $\beta$, defined as in Fact~\ref{fact:linkness}, on $\proj_{2,\dotsc,k+1}(\algR)$ 
    satisfies the assumption of Corollary~\ref{cor:nocon} and therefore $\beta = 1_{\proj_{2,\dotsc,k+1}(\algR)}$.

    This implies that $\algR$~(still viewed as a subproduct of $\algB$ and $\proj_{2,\dotsc,k+1}(\algR)$)
    is linked i.e. satisfies assumptions of Theorem~\ref{thm:abs}.
    As neither $\algB$~(by assumption) or $\proj_{2,\dotsc,k+1}(\algR)$~(by Corollary~\ref{cor:noabs})
    has a proper absorbing subuniverse we conclude that $\algR =\algB\times\proj_{2,\dotsc,k+1}(\algR)$ as required.
  \end{proof}
  \begin{corollary}\label{cor:newloops}
    Let $\algB_1,\dotsc, \algB_n, \algA_1,\dotsc,\algA_k$ lie in an \SDm variety and let 
    each $\algA_i$ be simple with no absorbing subuniverses.
    Let $\algR\subd\prod_i\algB_i\times\prod_i\algA_i$~(subdirect as a $k+n$-ary product) be such that
    \begin{itemize}
      \item $\proj_{1,\dotsc,n}(\algR)$ has no absorbing subuniverses, and
      \item for each $i,j$ such that $1\leq i \leq n < j \leq n +k$ we have $\pi_i\vee\pi_j = 1_{\algR}$.
    \end{itemize}
    Then $\algR = \proj_{1,\dotsc, n}(\algR) \times \proj_{n+1,\dotsc,n+k}(\algR)$.
  \end{corollary}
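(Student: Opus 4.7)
The plan is to reduce Corollary~\ref{cor:newloops} to an application of Corollary~\ref{cor:subd}, with $\algB := \proj_{1,\dotsc,n}(\algR)$ playing the role of the bundled non-simple factor so that $\algR$ becomes a subdirect product in $\algB \times \algA_{n+1} \times \dotsb \times \algA_{n+k}$. Three absorption-freeness facts are needed: $\algB$ has no proper absorbing subuniverse (hypothesis); $\algA := \proj_{>n}(\algR) \leq \prod_j \algA_j$ has no proper absorbing subuniverse (Corollary~\ref{cor:noabs}); and each $\algB_i$ has no proper absorbing subuniverse. The last is proved by a lifting argument: a proper $\algB_i' \abs \algB_i$ would pull back to the proper absorbing subuniverse $\{\bar{b} \in \algB : b_i \in \algB_i'\}$ of $\algB$, witnessed by the same absorbing term applied in the $i$-th coordinate, contradicting the hypothesis.

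With this, Theorem~\ref{thm:abs} applied to each linked $\proj_{i,j}(\algR) \subd \algB_i \times \algA_j$ (linked by $\pi_i \vee \pi_j = 1_\algR$) upgrades each cross-pair projection to the full product: $\proj_{i,j}(\algR) = \algB_i \times \algA_j$ for every $i \leq n$, $j > n$. To finish by Corollary~\ref{cor:subd} it remains to show $\pi_\algB \vee \pi_j = 1_\algR$ for each $j > n$, equivalently that $\proj_{\algB,j}(\algR)$ is linked in $\algB \times \algA_j$; Theorem~\ref{thm:abs} would then upgrade this to $\proj_{\algB,j}(\algR) = \algB \times \algA_j$ using absorption-freeness on both sides.

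\textbf{The hard part} is establishing this bundled linkedness. Since $\algA_j$ is simple, the linkage congruence $\beta_j$ on $\algA_j$ lies in $\{0_{\algA_j}, 1_{\algA_j}\}$, so the task reduces to ruling out $\beta_j = 0_{\algA_j}$. If $\beta_j = 0_{\algA_j}$, then $\proj_{\algB,j}(\algR)$ is the graph of a surjective homomorphism $f \colon \algB \to \algA_j$, and the cross-pair fullness above forces $\proj_i(\bar{b}/\ker f) = \algB_i$ for every $\bar{b} \in \algB$ and $i \leq n$. If each $\algB_i$ were simple, Corollary~\ref{cor:nocon} would immediately give $\ker f = 1_\algB$, contradicting $|\algA_j| \geq 2$ (the case $|\algA_j|=1$ is trivial). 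In general I pass to maximal simple quotients $\algB_i/\delta_i$ (still absorption-free, since absorbing subuniverses pull back along surjections) and apply Corollary~\ref{cor:nocon} to the induced map on the subdirect image $\tilde{\algB} \subd \prod_i \algB_i/\delta_i$, using simplicity of $\algA_j$ to ensure that the surjection onto $\algA_j$ descends to $\tilde{\algB}$.

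Once $\pi_\algB \vee \pi_j = 1_\algR$ holds for every $j > n$, Corollary~\ref{cor:subd} applied to $\algR \subd \algB \times \prod_{j > n} \algA_j$ concludes $\algR = \algB \times \proj_{>n}(\algR) = \proj_{1,\dotsc,n}(\algR) \times \proj_{n+1,\dotsc,n+k}(\algR)$, as required.
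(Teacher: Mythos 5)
Your overall strategy---bundling the non-simple coordinates into $\algB := \proj_{1,\dots,n}(\algR)$, upgrading the cross-pair projections to full products via Theorem~\ref{thm:abs}, and then trying to invoke Corollary~\ref{cor:subd}---matches the structure the paper uses in its $n=2$ step, and you correctly isolate the crux: showing that $\pi_{\algB}\vee\pi_j = 1_{\algR}$ for $j>n$, i.e.\ that the linkage congruence on $\algB$ induced by $\proj_{\algB,j}(\algR)$ cannot be a proper coatom $\ker f$. Your preparatory observations (that each $\algB_i$ inherits absorption-freeness from $\algB$ by lifting, that $\proj_{i,j}(\algR)=\algB_i\times\algA_j$ for all $i\le n<j$, and that consequently each $\ker f$-block projects onto each $\algB_i$) are all correct.

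The gap is in the last step, the reduction to simple quotients. You propose to fix coatoms $\delta_i$ on $\algB_i$, form $\tilde{\algB}\subd\prod_i\algB_i/\delta_i$, and apply Corollary~\ref{cor:nocon} to a map ``induced'' by $f$, claiming that simplicity of $\algA_j$ ensures $f$ descends to $\tilde{\algB}$. This does not follow. The kernel of $\algB\twoheadrightarrow\tilde{\algB}$ is $\theta:=\bigcap_i\hat{\delta}_i$ (the $\hat{\delta}_i$ being pullbacks of the $\delta_i$ along the coordinate projections), and descent requires $\theta\le\ker f$. Simplicity of $\algA_j$ tells you only that $\ker f$ is a coatom of $\Con(\algB)$, which gives you no control over whether $\theta\le\ker f$; if $\theta\not\le\ker f$, then $\theta\vee\ker f=1_{\algB}$ and the induced relation on $\tilde{\algB}$ is trivial, so Corollary~\ref{cor:nocon} gives nothing. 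Nor is there an obvious way to choose the $\delta_i$ to force $\theta\le\ker f$, since the image of $\ker f$ under $\proj_i\times\proj_i$ is all of $\algB_i^2$ (you proved this), so $\ker f$ does not induce any proper congruence on the coordinates. In other words, you have correctly reduced to the statement ``a coatom $\ker f$ on an absorption-free $\algB\subd\prod_i\algB_i$ whose every class projects onto every $\algB_i$ must be $1_{\algB}$'', but that statement is precisely the content of the corollary being proved when the $\algB_i$ are not simple, and Corollary~\ref{cor:nocon} alone does not deliver it. The paper resolves exactly this point through a considerably more involved argument: it proceeds by induction on $n$ with the key case $n=2$, constructs a zigzag algebra $\algP\le(\algB_1\times\algB_2)^3$ to prove that every $\alpha$-class of $\algB$ defines the same linkness congruences when viewed as a subdirect subalgebra of $\algB_1\times\algB_2$, then uses Corollary~\ref{cor:MAS} and a term-absorption computation to exhibit a proper absorbing subuniverse of $\algA_j$, contradicting the hypothesis. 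Your proof as written does not supply a substitute for that machinery.
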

  \begin{proof}
    For $n=1$ this is exactly Corollary~\ref{cor:subd}. 
    We let $n=2$ and will show that for any $i>2$ the algebra $\proj_{1,2,i}(\algR)$ is
    equal to $\proj_{1,2}(\algR)\times \algA_i$. 
    View $\proj_{1,2,i}(\algR)$ as a subproduct of $\proj_{1,2}(\algR)$ and $\algA_i$, 
    if $\alpha$, defined as in Fact~\ref{fact:linkness}, is equal to $1_{\proj_{1,2}(\algR)}$
    we obtained our goal. 
    In such a case we put $\algB = \proj_{1,2}(\algR)$ and the result follows by Corollary~\ref{cor:subd}.

    All the following paragraphs of the proof, except for the last one,
    are devoted to the proof of a contradiction in the case when $\alpha$ is smaller.
    In this case $\alpha$ is a congruence on $\algB = \proj_{1,2}(\algR)$
    and $\beta$~(provided by Fact~\ref{fact:linkness}) is $0_{\algA_i}$~(since $\algA_i$ is simple).
    Further we have: 
    \begin{itemize}
      \item $\algB/\alpha$ is isomorphic to $\algA_i$ and
      \item each $\alpha$ block projects fully onto $\algB_1$ and onto $\algB_2$~(indeed take 
      $\proj_{1,i}(\algR)\subd\algB_1\times\algA_i$; 
      it satisfies assumptions of Theorem~\ref{thm:abs} and neither $\algA_i$ nor 
      $\algB_1$ has a non-trivial absorbing subuniverses~(the last one since $\algB$ had none); 
      thus $\proj_{1,i}(\algR)=\algB_1\times\algA_i$ and similarly for $2$ instead of $1$).
    \end{itemize}

    Define an algebra $\algP\leq(\algB_1\times\algB_2)^3$:
    \begin{equation*}
      \tuple{a}\in\algP \textrm{ iff } \forall i (a_{2i-1},a_{2i})\in\algB \textrm{ and } (a_{2i+1},a_{2i})\in\algB
    \end{equation*}

    Fix $b\in\algB_1,\, b'\in\algB_2$ s.t. $(b,b')\in\algB$
    and let $\algP'\leq\algP$ consists of $\tuple{a}\in\algP$ such that $a_1=b$ and $a_6 = b'$.
    Let $\algP''$ be obtained from $\algP'$ by taking, for every tuple  $\tuple{a}\in\algP'$,
    tuple $((a_1,a_2)/\alpha,(a_3,a_2)/\alpha,\dotsc,(a_{5},a_{6})/\alpha)\in(\algB/\alpha)^5$.
    The algebra $\algP''$ can be viewed~(using isomorphism between $\algB/\alpha$ and $\algA_i$)
    as a subalgebra of $\algA_i^5$.

    Fix arbitrary $i<j$;
    in this paragraph we show that the projection $\pi_{i,j}(\algP'')$ is full, i.e. equal to $\algA_i^2$.
    Locate the position $i$-th~(we assume wlog that it $i$ is odd)  position and construct the sequence:
    \begin{equation*}
      (b,\dotsc,b_i,b'',b_i,b_{i+1},b_{i+2},\dotsc,b')
    \end{equation*}
    where $b''$ is such that $(b_i,b'')/\alpha$ is mapped, via the isomorphism between $\algB/\alpha$ and $\algA_i$,
    to $a$.
    Such a $b''$ can be found since each $\alpha$ class projects fully onto $\algB_1$.
    Applying similar construction to $j$-th coordinate and $a'$ instead of $a$ we obtain a tuple in $\algP'$ which
    produces a tuple in $\algP''$ which has $a$ on $i$-th an $a'$ on $j$-th coordinate.
    
    By the previous paragraph we can apply Proposition~\ref{prop:old} to $\algP''$
    and conclude that it is a full power of $\algA_i$.
    By construction of $\algP''$ and $\algP'$ this implies that every 
    $\alpha$-class, viewed as a subdirect subalgebra of $\algB_1\times\algB_2$ 
    defines, in Fact~\ref{fact:linkness}, the same linkness congruences $\alpha'$ and $\beta'$.
    And the same biejection between classes of $\alpha'$ and $\beta'$.

    Take a single $\alpha'$ class $\algB_1'$ and corresponding $\beta'$ class $\algB_2'$.
    Take an arbitrary $a\in\algA_i$, denote the corresponding $\alpha$-class in $\algB$ by $\algB'\subd\algB_1\times\algB_2$
    and let $\algB'' = \algB'\cap\algB_1'\times\algB_2'$. 
    The algebra $\algB''$ is subdirect
    in $\algB_1'\times\algB_2'$~(by the choice of $\algB_1',\algB_2'$ and the subdirectness of $\alpha$-classes)
    and $\pi_1\vee\pi_2 = 1_{\algB''}$ by previous paragraph.
    By Corollary~\ref{cor:MAS} we obtain $\algB_1''\abs\algB_1', \algB_2''\abs\algB_2'$ such that
    $\algB''_1\times\algB''_1\leq\algB''$.

    We will show that $\{a\}\abs\algA_i$. 
    Let $t$ be a term witnessing $\algB_2''\abs\algB_2'$.
    Let $a'\in\algA_i$ fix $(b,b')\in\algB_1''\times\algB_2''$ and $(b,b'')\in\algB$ such the $(b,b'')/\alpha$ is a class isomorphic to $a'$.
    Then $t((b,b'),\dots,(b,b'),(b,b''),(b,b'),\dotsc,(b,b'))\in\{b\}\times\algB_2''$ and, quoting via $\alpha$ and using isomorphism
    we get $t(a,\dotsc,a,a',a,\dotsc,a) = a$.
    This contradicts the fact that $\algA_i$ has no proper absorbing subuniverses.
    The contradiction implies that $\alpha$ cannot be smaller than $1_{\proj_{1,2}(\algR)}$ and this concludes the proof for $n=2$.

    The remaining part of the proof is by induction on $n$. 
    We apply the induction step to $\pi_{2,\dotsc,n+k}(\algR)$~%
    (the assumptions are clearly satisfied) 
    and then the step for $n=2$ with the same $B_1$ but 
    new $B_2$ equal to $\pi_{2,\dotsc,n}(\algR)$.
  \end{proof}

\section{There is no absorption in the instance}\label{sect:noabs}
  In this case none of the $\algA_x$'s in the SLAC instance have an absorbing subuniverse.
  Find a variable $x$ such that $\algA_x$ has more than one element 
  and find a congruence $\congr{x}$ so that $\algA_x/\congr{x}$ is simple.
  Fix an arbitrary block of the congruence $\alpha_x$ and denote it by $\algA_x'$. 
  
  Let $p$ be a path-pattern in $\instI$~(via $\cover$) such that $\cover$ maps 
  the beginning vertex of $p$ to $x$ and let $w$ denote the end variable of $p$. 
  The pattern $p$ is {\em non-proper} if $\algA_x' + p = \algA_{\cover(w)}$.
  \begin{claim}
    Let $p$ be a non-proper path-pattern.
    For any $a\in\algA_x$ we have $a/\congr{x} + p = \algA_{\cover(w)}$.
  \end{claim}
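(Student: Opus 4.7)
The plan is to analyze the algebra $\algR\subd\algA_x\times\algA_{\cover(w)}$ consisting of the projections to the beginning and end variables of $p$ of all solutions of $p$, together with its quotient $\algR'\subd(\algA_x/\congr{x})\times\algA_{\cover(w)}$ obtained by collapsing the first coordinate along $\congr{x}$. The desired conclusion $a/\congr{x}+p=\algA_{\cover(w)}$ for every $a\in\algA_x$ is equivalent to $\algR'=(\algA_x/\congr{x})\times\algA_{\cover(w)}$, while non-properness of $p$ says exactly that the fiber of $\algR'$ over the single element $\algA_x'/\congr{x}$ is already all of $\algA_{\cover(w)}$.

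First I would show that $\algR'$ is linked. Suppose otherwise; then the congruences $\alpha,\beta$ produced by Fact~\ref{fact:linkness} for $\algR'$ are proper, and simplicity of $\algA_x/\congr{x}$ forces $\alpha=0_{\algA_x/\congr{x}}$. This turns $\algR'$ into the graph of a surjective homomorphism $g:\algA_{\cover(w)}\to\algA_x/\congr{x}$ with kernel $\beta$, so $\algA_x'+p=g^{-1}(\algA_x'/\congr{x})$ is a single $\beta$-class. By non-properness this class equals $\algA_{\cover(w)}$, hence $\beta=1_{\algA_{\cover(w)}}$, and the induced isomorphism $\algA_{\cover(w)}/\beta\cong\algA_x/\congr{x}$ collapses $\algA_x/\congr{x}$ to a singleton, contradicting the choice of $\congr{x}$ as a maximal proper congruence.

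With linkness in hand I would invoke Theorem~\ref{thm:abs} on $\algR'$. Neither factor admits a proper absorbing subuniverse: by the standing hypothesis of this section $\algA_{\cover(w)}$ has none, and a proper $\algB\abs\algA_x/\congr{x}$ would lift through the quotient map to a proper absorbing subuniverse of $\algA_x$ (take its preimage; the very same term witnesses the absorption after choosing arbitrary preimages of the arguments), again contradicting the hypothesis at $x$. The only remaining alternative from Theorem~\ref{thm:abs} is $\algR'=(\algA_x/\congr{x})\times\algA_{\cover(w)}$, which is precisely the claim.

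The main subtlety I expect is this downward transfer of absorption through the quotient map: it is routine but needs to be stated carefully so that the same term simultaneously witnesses the absorption in $\algA_x/\congr{x}$ and in $\algA_x$. Once that is in place the rest is a direct combination of Fact~\ref{fact:linkness} and Theorem~\ref{thm:abs}, exactly in the spirit of the sketch in section~\ref{sect:result}.
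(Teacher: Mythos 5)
Your proposal is correct and takes essentially the same route as the paper: pass from the solution algebra $\algR$ to its quotient $\algR'$, argue that non-properness forces $\algR'$ to be linked, observe neither factor has a proper absorbing subuniverse, and conclude $\algR' = (\algA_x/\congr{x})\times\algA_{\cover(w)}$ via Theorem~\ref{thm:abs}. The paper states the linkedness of $\algR'$ and the absorption-freeness of $\algA_x/\congr{x}$ without proof, and you have merely supplied the routine verifications (simplicity forces $\alpha = 0_{\algA_x/\congr{x}}$ if $\algR'$ were not linked, and an absorbing subuniverse lifts along the quotient map via the same witnessing term), both of which are correct.
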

  \begin{proof}
    Take an algebra of all the solutions of $p$ and project it on the beggining and end variables to 
    obtain $\algR\subd\algA_x\times\algA_{\cover(w)}$~(subdirectness follows from $1$-consistency of $\instI$).
    Let $R' = \{(a/\congr{x},a'):(a,a')\in\algR\}$ and $\algR$ be the subalgebra of $\algA_x/\congr{x}\times\algA_{\cover(w)}$
    with universe $R'$.

    Since $p$ is non-proper $\algR'$ is linked and since neither $\algA_x/\congr{x}$ nor $\algA_{\cover(w)}$ has an absorbing subuniverse
    $\algR'$ is a full product  by Theorem~\ref{thm:abs} and the claim is proved.
  \end{proof}
  \noindent We call all the other patterns~(with beggining variable mapped to $x$) proper.
  \begin{claim}
    Let $p$ be a proper path-pattern.
    \begin{enumerate}
      \item for any $a\in\algA_x  : a/\congr{x} + p - p = a/\congr{x}$
      \item the relation $\congr{\cover(w)}$ defined by 
        \begin{equation*}
          b\congrrel{\cover(w)} b' \textrm{ iff } \exists\, a,a'\, : a\congrrel{x}a',\, b\in a +p,\, b'\in a' + p
        \end{equation*}
        is a congruence on $\algA_{\cover(w)}$,
      \item $\algA_x/\congr{x}$ is isomorphic to $\algA_{\cover(w)}/\congr{\cover(w)}$ 
        and the isomorphism is given by projecting all the solutions of $p$
        to begging and end variables and
      \item if $\cover(w)=x$ then for every $a'\in\algA_x$ $a'/\alpha_x + p = a'/\alpha_x$.
    \end{enumerate}
  \end{claim}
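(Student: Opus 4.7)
The plan is to mimic the proof of the previous claim, extracting more structural information from $p$ being proper rather than non-proper. First I introduce $\algR\subd\algA_x\times\algA_{\cover(w)}$, the projection onto the beginning and end variables of the algebra of solutions of $p$ (subdirectness follows from $1$-consistency of $\instI$), together with its $\alpha_x$-quotient $\algR'\leq\algA_x/\alpha_x\times\algA_{\cover(w)}$ on the first coordinate, which remains subdirect. Properness of $p$ says exactly that $\algA'_x+p\neq\algA_{\cover(w)}$, so $\algR'$ cannot be linked: otherwise Theorem~\ref{thm:abs} (both factors being absorption-free, as in the previous claim) would force $\algR'=\algA_x/\alpha_x\times\algA_{\cover(w)}$, contradicting properness.

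Since $\algR'$ is unlinked and $\algA_x/\alpha_x$ is simple, the linkness congruence on $\algA_x/\alpha_x$ provided by Fact~\ref{fact:linkness} collapses to $0$. This means that distinct $\alpha_x$-blocks have disjoint sets of $\algR'$-partners, so $\algR'$ is the graph of a surjective homomorphism $f\colon\algA_{\cover(w)}\to\algA_x/\alpha_x$ sending $b$ to the unique $\alpha_x$-class $a/\alpha_x$ with $(a,b)\in\algR$. Surjectivity comes from subdirectness of $\algR'$, and $f$ is a homomorphism because $\algR'$ is a subalgebra.

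From this picture items~(2) and~(3) are immediate. The relation $\alpha_{\cover(w)}$ from the statement is literally $\ker f$, hence a congruence on $\algA_{\cover(w)}$; and the first isomorphism theorem provides the isomorphism $\algA_x/\alpha_x\cong\algA_{\cover(w)}/\alpha_{\cover(w)}$, realised by projecting solutions of $p$ onto their endpoints. For item~(1): any $a''\in a/\alpha_x+p-p$ comes with some $b\in a/\alpha_x+p=f^{-1}(a/\alpha_x)$ and $(a'',b)\in\algR$, which forces $f(b)=a''/\alpha_x=a/\alpha_x$, hence $a''\in a/\alpha_x$; for the reverse inclusion, $1$-consistency provides for every $a''\in a/\alpha_x$ a partner $b\in\algR$, and such $b$ automatically lies in $f^{-1}(a/\alpha_x)$.

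Item~(4) is where I expect the one essential use of SLAC. If $\cover(w)=x$ then $p$ is a cycle at $x$, so SLAC hands us, for every $a\in\algA_x$, a solution of $p$ with both endpoints equal to $a$; this gives $(a,a)\in\algR$ and therefore $f(a)=a/\alpha_x$ for every $a$, so $f$ is literally the canonical projection and $a'/\alpha_x+p=f^{-1}(a'/\alpha_x)=a'/\alpha_x$. The main subtlety I anticipate is (4): without SLAC, $f$ could in principle be a non-identity automorphism of the quotient, or (more mysteriously) involve a different congruence on $\algA_x$, and SLAC is precisely what pins it down to the canonical quotient map.
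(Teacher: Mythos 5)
Your proof is correct and follows essentially the same route as the paper: pass to $\algR'$, use Theorem~\ref{thm:abs} plus absorption-freeness to rule out linkedness, invoke simplicity of $\algA_x/\alpha_x$ to kill the linkness congruence, and read off the desired congruence and isomorphism on the other side. Your packaging of the unlinked situation as an explicit surjective homomorphism $f\colon\algA_{\cover(w)}\to\algA_x/\alpha_x$ with the stated $\alpha_{\cover(w)}$ equal to $\ker f$ is a clean reorganization of the paper's use of the $\beta$-linkness congruence and the graph-of-isomorphism $\algR''$, and your item~(4) argument (using the SLAC diagonal $(a,a)\in\algR$ to pin $f$ down as the canonical quotient map) is in fact a tighter justification than the paper's brief remark that a proper containment would force $\algR'$ to be linked.
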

  \begin{proof}
    Define algebras $\algR, \algR'$ as in the proof of previous claim.
    If $\algR'$ is linked it is a full product by Theorem~\ref{thm:abs}~%
    (as neither $\algA_x/\congr{x}$ nor $\algA_{\cover(w)}$ has an absorbing subuniverse) 
    which contradicts the fact that $p$ is proper. 
    Therefore $\algR'$ is not linked and then $\alpha$, defined by Fact~\ref{fact:linkness} for $\algR'$, 
    is $0_{\algA_x/\congr{x}}$~(as $\algA_x/\congr{x}$ is simple).

    In this case $\beta$, defined by Fact~\ref{fact:linkness} for $\algR'$, is exactly $\congr{{\cover(w)}}$. 
    Let $\algR''\subd\algA_x/\alpha_x\times\algA_{\cover(w)}/\alpha_{\cover(w)}$
    be the algebra with a universe $R'' = \{(a/\congr{x},a'/\congr{\cover(w)}):(a,a')\in\algR\}$.
    The algebra $\algR''$ is a graph of an isomorphism and therefore everything, except item 4. is proved.

    To see item 4. let ${\cover(w)}=x$.
    Since $\instI$ is a SLAC instance, $a'/\alpha_x + p \supseteq a'/\alpha_x$
    and if the two sets are different $\algR'$ is linked and $p$ is non-proper which is a contradiction.
  \end{proof}

  Call a variable $y$ of $\instI$ {\em proper} if there exists a proper path-pattern~(with beginning mapped to $x$)  
  and $\cover(w)=y$.
  Denote by $\algA_y'$ the congruence block of $\alpha_y$ such that $\algA_x' + p =\algA_y'$.
  For every variable $y$ which is not a proper variable put $\algA'_y=\algA_y$;
  and let $\instI'$ be an instance obtained by restricting every variable $x$ of $\instI$ to $\algA'_x$.
  The following claim implies, among other things, that the set $\algA_y'$ does not depend on the choice of a proper pattern.
  \begin{claim}
    Let $y,z$ be proper variables and let $p$ be a path-pattern in~$\instI$ with the begining variable mapped to $y$ and the
    end variable mapped to $z$. 
    Let $p_y, p_z$ be proper patterns that define $\algA_y'$ and $\algA_z'$. 
    Then
    \begin{enumerate}
      \item either, for every $a\in\algA_y$, $a/\congr{y} + p = \algA_z$, and for every $a'\in\algA_z$, $a'/\congr{z} -p = \algA_y$ or
      \item for every $a\in\algA_x$, $a/\congr{x} + p_y + p = a/\congr{x} + p_z$ and $a/\congr{x} +p_z -p =a/\congr{z} + p_y$.
    \end{enumerate}
  \end{claim}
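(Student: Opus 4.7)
The plan is to apply the proper/non-proper dichotomy of the preceding two claims to the composite path-pattern $p_y + p$, a path from $x$ to $z$.  The two cases will yield alternatives~2 and~1 respectively.

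\textit{Proper case.}  If $p_y + p$ is proper, then item~3 of the previous claim applied to both $p_y + p$ and $p_z$ produces two isomorphisms $\psi_1, \psi_2 \colon \algA_x/\congr{x} \to \algA_z/\congr{z}$.  To identify $\psi_1$ with $\psi_2$, I consider the cycle $p_y + p + (-p_z)$ at $x$: since $\instI$ is a SLAC instance, this cycle admits a solution with beginning and end both equal to $a$ for every $a \in \algA_x$, which forces the composition $\psi_2^{-1} \circ \psi_1$ to fix each $\congr{x}$-class.  Hence $a/\congr{x} + p_y + p = \psi_1(a/\congr{x}) = \psi_2(a/\congr{x}) = a/\congr{x} + p_z$, the first equation of alternative~2.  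Applying the same argument to the pattern $p_z + (-p)$ together with the cycle $p_z + (-p) + (-p_y)$ at $x$ yields the second equation.

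\textit{Non-proper case.}  If $p_y + p$ is non-proper, the first claim of this section gives $a/\congr{x} + p_y + p = \algA_z$ for every $a \in \algA_x$.  Since $a/\congr{x} + p_y$ is the $\congr{y}$-class $C_a$ corresponding to $a/\congr{x}$ under the isomorphism from item~3, and every $\congr{y}$-class arises this way, we obtain $C + p = \algA_z$ for every $\congr{y}$-class $C$, which is the first equation of alternative~1.  The second equation follows by the analogous argument applied to $p_z + (-p)$ once that pattern is also known to be non-proper.

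\textit{The key equivalence.}  Both cases above rely on the equivalence ``$p_y + p$ is proper iff $p_z + (-p)$ is proper''.  To prove this, I would pass to the quotient $\widetilde{R}$ of the solution algebra $R \subd \algA_y \times \algA_z$ of $p$ inside $(\algA_y/\congr{y}) \times (\algA_z/\congr{z})$.  Both factors are isomorphic to the simple, absorption-free algebra $\algA_x/\congr{x}$ (via the isomorphisms supplied by $p_y$ and $p_z$), so Theorem~\ref{thm:abs} forces $\widetilde{R}$ to be either a graph of a bijection (making both composites proper) or the full product (making both non-proper).  The main obstacle is the non-proper direction: promoting ``$\widetilde{R}$ is the full product'', which only asserts that every pair of classes is joined by some solution of $p$, to the stronger equality ``$\algA_y' + p = \algA_z$'' for the specific block $\algA_y' = \algA_x' + p_y$ (and dually for $p_z + (-p)$).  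I plan to achieve this by applying Theorem~\ref{thm:abs} together with Corollaries~\ref{cor:noabs} and~\ref{cor:subd} from Section~\ref{sect:toolsnoabs} to the solution algebra of $p_y + p$ viewed inside $(\algA_x/\congr{x}) \times \algA_z$, using absence of absorption in both factors to upgrade linkness in the further quotient by $\congr{z}$ to linkness in $(\algA_x/\congr{x}) \times \algA_z$ itself, whence Theorem~\ref{thm:abs} yields the desired full product.
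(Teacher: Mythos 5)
The overall plan---applying the proper/non-proper dichotomy to the composite pattern $p_y+p$---is equivalent to what the paper does (the paper checks whether $\algR'$, the solution algebra of $p$ quotiented by $\congr{y}$, is linked), so the starting point is sound.

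However, there is a genuine gap hidden in your proper case and it reappears in your ``key equivalence.'' When $p_y+p$ is proper, item~3 of the previous claim gives an isomorphism $\psi_1\colon \algA_x/\congr{x}\to\algA_z/\beta$ where $\beta$ is the congruence induced on $\algA_z$ by the pattern $p_y+p$. Applied to $p_z$, you get $\psi_2\colon\algA_x/\congr{x}\to\algA_z/\congr{z}$. You silently identify $\beta$ with $\congr{z}$, but both are merely maximal congruences of $\algA_z$ with simple quotient isomorphic to $\algA_x/\congr{x}$, and a priori they can differ. Without $\beta=\congr{z}$ the composition $\psi_2^{-1}\circ\psi_1$ is not a well-defined map, and the SLAC cycle $p_y+p+(-p_z)$ \emph{at $x$} only yields $\psi_1(a/\congr{x})\cap\psi_2(a/\congr{x})\neq\emptyset$, which does not force $\beta=\congr{z}$ (a $\beta$-class can intersect an $\congr{z}$-class without the two partitions coinciding). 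In fact, the equality $\beta=\congr{z}$---i.e.\ the very well-definedness of $\congr{z}$---is precisely what this claim is supposed to establish, so assuming it is circular. Your fallback plan to prove the key equivalence by ``upgrading linkness from the quotient $\widetilde R\subd(\algA_y/\congr{y})\times(\algA_z/\congr{z})$ to linkness in $(\algA_x/\congr{x})\times\algA_z$'' fails for the same reason: quotienting the second coordinate by $\congr{z}$ can only \emph{increase} linkness, so $\widetilde R$ being a full product is entirely compatible with $\algR_1'$ being non-linked when $\beta\neq\congr{z}$ (e.g.\ two transversal partitions of a four-element set). Consequently ``$\widetilde R$ is a full product'' does not imply $p_y+p$ is non-proper, and the equivalence is unsupported.

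The paper's proof sidesteps this by \emph{not} quotienting by $\congr{z}$ until the end and by invoking SLAC at $z$ rather than at $x$. It first shows that $\algA'_z-p_z+p_y+p$ is a proper subset of $\algA_z$; then, for the cycle $q=-p_z+p_y+p$ from $z$ to $z$, SLAC at $z$ gives $a/\congr{z}+q\supseteq a/\congr{z}$ for every $a$, and the strict inclusion forces the solution algebra of $q$ modulo $\congr{z}$ on its first coordinate to be non-linked, hence $a/\congr{z}+q=a/\congr{z}$. This single computation delivers alternative~2 \emph{and} establishes $\beta=\congr{z}$ as a by-product. The paper also does not claim the two composites $p_y+p$ and $p_z+(-p)$ are proper/non-proper simultaneously in advance; it treats the three sub-cases (one or the other non-linked, both linked) directly. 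To repair your proof you would need to relocate the SLAC argument to the cycle at $z$ and derive well-definedness of $\congr{z}$ rather than presupposing it.
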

  \begin{proof}
    Consider algebras $\algR,\algR'$ defined for $p$ as in the proof of the previous claim.

    If $\algR'$ is not linked then, as $\algA_y/\congr{y}$ is simple, we got for every $a\in\algA_x$
    $a/\congr{x} + p_y + p - p - p_y = a/\congr{x}$.
    Moreove, since $p_z$ is a proper pattern, we also get $a/\congr{x} + p_z-p_z = a/\congr{x}$.
    This implies that $\algA'_z - p_z + p_y + p$ is smaller than $\algA_z$, and 
    using proof identical as a proof of item 4. from last claim we get for every $a\in\algA_z$,
    $a/\congr{z} - p_z + p_y + p = a/\congr{z}$. 
    This means the $\algR''$ defined for pattern $-p_z+p_y+p$ is a graph of the identity map on $\algA_z/\congr{z}$
    and we are in case 2.

    If $\algR'$ defined for $-p$ is not linked the reasoning is identical and we still end up in case 2.
    If the algebras $\algR'$ defined for $p$ and for $-p$ are both linked 
    then  both are full products and the condition of case 1. follows easily.
  \end{proof}
    The claim implies that, for a proper variable $z$, the congruence $\congr{z}$ and the choice
    of $\algA'_z$ does not depend on the choice of proper pattern.
    Indeed take a proper $z$ and $p_z$ a pattern used to define $\algA'_z$;
    let $p$ be a proper pattern with end mapped to $z$.
    We apply last claim with $x=y$~($p_y$ the 1-step pattern derived from constraint $(x,\algA_x)$) variable $z$ and pattern $p$.
    As $p$ is proper we are not in case 1 and in case 2 the conclusion is obvious.

  \subsection{The instance $\instI'$ is $1$-consistent}
    Let $(\tuple{x},\algC)$ with $\algC\subd\prod_i\algA_{x_i}$ be an arbitrary constraint of $\instI$.
    We fix an arbitrary coordinate, without loss of generality coordinate $1$,

    We denote the following property by $(\dag)$:
    every element in $\algA'_{x_1}$ extends to a tuple~(with this element at the first coordinate) 
    in $\algC$ which projects into
    $\algA'_{x_i}$'s on all the coordinates $i>1$.

    First we project $\algC$ on coordinate one together with all the coordinates $i$ such that $x_i$ is proper. 
    If the property $(\dag)$ holds for such a  projection of  $\algC$  
    it will hold for $\algC$ as well since for every non-proper variable $y$ we have $\algA'_y=\algA_y$.
    We call this projection $\algC$ as well.

    Let $C' = \{(a_1,a_2/\alpha_{x_2},\dotsc,a_k/\alpha_{x_k}):\tuple{a} \in \algC\}$ and  
    $\algC'$ be the induced algebra over universe $C'$. 
    If $x_1$ is not proper we have $\pi_j\vee\pi_1 = 1_{\algC'}$ for every $j$.
    If $x_1$ is proper then, by the last claim of previous section, we have two options
    \begin{itemize}
      \item either $\proj_{1,j}(\algC')$ is linked and therefore, by Theorem~\ref{thm:abs}, full product,
      \item or the projection $\proj_{1,j}(\algC')$ establishes a bijection between $\algA_{x_1}/\congr{x_1}$ 
        and $\algA_{x_j}/\congr{x_j}$ which maps $\algA'_{x_1}$ to $\algA'_{x_j}$;
        this means that every tuple with an element of $\algA'_{x_1}$ at first coordinate has an element of $\algA'_{x_j}$ on coordinate $j$
        and therefore, from the point of view of property $(\dag)$,  coordinate $j$ is not-important:
        property $(\dag)$ holds for $\algC$ if and only if it holds for the projection of $\algC$ 
        to coordinates different than $j$; in this case we substitute $\algC$ with such a projection.
    \end{itemize}

    We obtained an algebra which satisfies assumptions of Corollary~\ref{cor:subd}
    and thus is a full product of $\algA_{x_1}$ and $\proj_{2,\dotsc,k}(\algC')$.
    It remains to show that $\proj_{2,\dotsc,k}(\algC')$ contains a tuple fully in $\algA_{x_i}'/\alpha_{x_i}$'s.

    Again we consider all $\proj_{i,j}(\algC')$ for $i,j>1$. 
    If such a projection is a graph of bijection this bijection, by the last claim of previous section,
    maps $\algA'_{x_i}$ to $\algA'_{x_j}$ and it suffices to prove property $(\dag)$ for a projection of $\algC$ to
    the coordinates different than $j$.

    After projecting all possible coordinates we obtain $\algC$ such that every projection on two coordinates is 
    linked and therefore by Theorem~\ref{thm:abs} full.
    Such an algebra is, by Proposition~\ref{prop:old}, a full power and property $(\dag)$ trivially holds.
    This implies that every element of $\algA_{x_1}$ extends to a tuple fully in $\algA_x'$'s 
    which concludes the proof of $1$-consistency of $\instI'$.

  \subsection{$\instI'$ is a SLAC instance}
    Let $y$ be any variable and $p$ any pattern in $\instI$ from $y$ to $y$.
    Let $\algC\subd\algA_y^2\times\prod_i\algA_{x_i}$ be the algebra of all the solutions of this pattern.

    Since $\instI$ is SLAC
    $\proj_{1,2}(\algC)\supseteq \{(a,a):a\in\algA_y\}$.
    Our goal is to show that for any $a'\in\algA'_y$ there is a tuple $(a',a',a'_1,\dotsc,a'_k)\in\algC$ such that
    $a'_j\in\algA'_{x_j}$ for every $j$.
    Again we denote this property by $(\dag)$.
    Let $\algC'$ be $\algC$ quoted by $\congr{x_j}$ for any coordinate $j > 2$ such that $x_j$ is proper.

    Consider $i$ equal to $1$ or $2$ and $j>2$ such that $x_j$ is proper.
    If $y$ is non-proper the projection of $\pi_{i,j}(\algC')$ needs to be linked and therefore full.
    If $y$ is proper the projection $\pi_{i,j}(\algC')$ can be equivalently defined by a pattern~(since $\instI$ is $1$-consistent)
    and, similarly like in the previous section, it is either full product or defines a graph of bijection
    mapping $\algA_y'/\congr{y}$ to $\algA_{x_j}'/\congr{x_j}$. 
    In the latter case, similarly as in the previous subsection, it suffices to prove $(\dag)$
    for the projection of $\algC$ on coordinates different than $j$ -- we substitute $\algC$ with such a projection.
    
    Now let $\algC''$ be a projection of  $\algC'$ on the coordinates $1,2$ and all the coordinates $j$ such that $x_j$ is proper.
    Clearly it suffices to prove $(\dag)$ for $\algC''$.
    We take a minimal absorbing subuniverse $\algG$ of $\proj_{1,2}(\algC')$ and put
    $\algC''' = \{\tuple{a}\in\algC'' : \proj_{1,2}(\tuple{a})\in\algG\}$.
    As $\algC'''\abs\algC''$ the assumption of 
    Corollary~\ref{cor:newloops} are satisfied and we conclude that
    $\algC'''$ is a full product of $\algG$ and $\proj_{3,\dotsc,n}(\algC'')$.

    Note that $\algG\subd\algA_y^2$~(as $\algA_y$ has no absorbing subalgebras)
    and $\algG\abs\proj_{1,2}(\algC')$ which contains all the constant tuples.
    By Proposition~\ref{prop:absconst} $\algG$ contains constant tuples and,
    as the set of constant tuples in $\algG$ absorbs the set of constant tuples in $\algC'$
    and therefore defines an absorbing subuniverse of $\algA_y$,
    further all the constant tuples are in $\algG$.

    It remains to show that $\proj_{3,\dotsc,n}(\algC'')$ contains a tuple equal to $\algA'_{x_j}/\congr{x_j}$ for every $j$.
    Note that tuples in $\proj_{3,\dotsc,n}(\algC'')$ are given by solutions of a path-pattern.
    The $1$-consistency of $\instI'$~(which was proved in the previous subsection) provides required tuple.
    This shows that $\instI'$ is SLAC and the reduction is done in the case when there
    is no absorption in $\algA_x$'s.

\section{There is absorption in the instance}\label{sect:abs}

  In this case we assume that at least one $\algA_x$ has a non-trivial absorbing subuniverse.
  \subsection{There exists absorbing $1$-consistent subinstance of $\instI$}
    The first goal is to find a system of $\algA'_x$'s such that 
    \begin{itemize}
      \item $\algA'_x\abs\algA_x$ for every $x$ and
      \item $\UCT(\instI)$ can be solved with value of every $v$ in $\algA'_{\cover(v)}$~(again we call these solutions prime).
    \end{itemize}

    In order to find such $\algA'_x$ we define a preorder
    by putting $(x,\algB)\prop(x',\algB')$~($\algB\leq\algA_x$ and $\emptyset\neq\algB\neq\algA_x$ and the same for $x',\algB'$) if
    \begin{itemize}
      \item there is a tree pattern $p$ in $\instI$ with a homomorphism 
        mapping all the leaves to $x$ and the root to $x'$
      \item $\algB + p = \algB'$. 
    \end{itemize}
    We make the relation $\prop$ reflexive by adding required pairs and note that 
    and it is transitive by the ``addition'' of tree patterns which substitutes every leaf with a copy of a tree.
    
    Fix $\algA'$ and $x'$ such that $\algA'$ is a non-trivial absorbing subuniverse of $\algA_{x'}$,
    and restrict $\prop$ to the elements above the pair $(\algA',x')$.
    Note that, by $1$-consistency of instance $\instI$, for every $(y,\algB)$ in the restricted preorder $\algB\abs\algA_y$.
    Take the largest equivalence relation which is included in the restricted $\prop$, 
    and denote a maximal~(under $\prop$) equivalence class of this relation by $\lastcomponent$.

    We call a variable of $\instI$ {\em proper} if it appears in a pair in $\lastcomponent$,
    and postpone a proof of the following claim until later.
    \begin{claim}
      For every proper $x$, variable of $\instI$, there is an algebra $\algB$ such that:
      \begin{itemize}
        \item $(x,\algB)\in\lastcomponent$, and 
        \item for every $\algB'$ if $(x,\algB')\in\lastcomponent$ then $\algB\subseteq\algB'$.
      \end{itemize}
    \end{claim}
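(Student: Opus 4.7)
The plan is to show that the family $\mathcal{F}_x = \{\algB : (x, \algB) \in \lastcomponent\}$ is closed under binary intersection; since $\algA_x$ is finite, this yields the required minimum $\algB$.

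Fix $\algB_1, \algB_2 \in \mathcal{F}_x$ and pick witnessing tree patterns $p_1, p_2$ on disjoint variable sets for $(x', \algA') \prop (x, \algB_1)$ and $(x', \algA') \prop (x, \algB_2)$, so that $\algA' + p_i = \algB_i$. Form a new tree pattern $p$ by identifying the roots of $p_1$ and $p_2$ into a single root variable mapped to $x$; since the underlying adjacency multigraphs are two disjoint trees joined at a single new vertex, $p$ is itself a tree pattern in $\instI$. A solution of $p$ sending all leaves into $\algA'$ is, by the disjointness of the variable sets, a pair of solutions of $p_1$ and $p_2$ which agree at the common root, so $\algA' + p = (\algA' + p_1) \cap (\algA' + p_2) = \algB_1 \cap \algB_2$.

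Non-emptiness of $\algB_1 \cap \algB_2$ is needed for $p$ to admit a solution. Each $\algB_i$ absorbs $\algA_x$ (as noted immediately before the claim); by the common-witness principle of Section~\ref{sect:absdef}, together with the standard iteration argument for absorbing subuniverses in an idempotent algebra lying in an \SDm variety, $\algB_1 \cap \algB_2$ is itself a non-empty subuniverse absorbing $\algA_x$. Consequently $(x', \algA') \prop (x, \algB_1 \cap \algB_2)$ is witnessed by $p$, placing $(x, \algB_1 \cap \algB_2)$ into the restricted preorder.

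The main obstacle lies in promoting this pair into $\lastcomponent$ itself. By the maximality of $\lastcomponent$ as an equivalence class, it suffices to exhibit $(x, \algB_i) \prop (x, \algB_1 \cap \algB_2)$ for some $i$, since then the equivalence class of $(x, \algB_1 \cap \algB_2)$ sits $\prop$-above the maximal class $\lastcomponent$ and the two must coincide. The plan for the required tree pattern (leaves and root at $x$) is to graft onto each $x'$-labelled leaf of $p$ an $x$-to-$x'$ tree pattern harvested from the cyclic structure of $\lastcomponent$; achieving the precise equality $\algB_i + q = \algB_1 \cap \algB_2$ rather than mere containment is the delicate point, and would rely on absorption together with Theorem~\ref{thm:abs} to rule out any proper shrinkage past the intended target.
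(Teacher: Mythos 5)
There is a genuine gap, and actually two of the three steps in your plan fail. First, the non‑emptiness argument is wrong: two proper absorbing subuniverses of the same algebra can be disjoint even in an \SDm variety (in the two‑element lattice $(\{0,1\},\wedge,\vee)$ both $\{0\}$ and $\{1\}$ absorb, witnessed by $\wedge$ and $\vee$ respectively), so no ``common-witness plus iteration'' argument yields $\algB_1\cap\algB_2\neq\emptyset$. Indeed, the non‑emptiness of the relevant intersection is essentially the whole content of the claim: the paper's proof spends almost the entire subsection on exactly this, and it needs the full machinery (Theorem~\ref{thm:abs}, Proposition~\ref{prop:absconst}, a long absorption chain) to rule out disjoint minimal elements of $\lastsets$. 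Second, the ``promotion'' step you flag as delicate is in fact broken as sketched: to graft an $x$-to-$x'$ tree onto the leaves of $p$ you would need a pattern $q$ with $\algB_i+q\subseteq\algA'$, i.e.\ a witness for $(x,\algB_i)\prop(x',\algC)$ for some $\algC\subseteq\algA'$. But $(x',\algA')$ is at the \emph{bottom} of the restricted preorder, not in $\lastcomponent$ in general, so no such pattern need exist; the cyclic structure of $\lastcomponent$ lives among pairs $(x,\algB)$ with $\algB\in\lastsets$ and does not reach back down to $(x',\algA')$.

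The paper's route is quite different from trying to prove closure of $\lastsets$ under intersection. It argues by contradiction assuming two distinct minimal elements in $\lastsets$, and its first step is a dichotomy (not a closure statement): for $\algB\in\lastsets$ and $\algB'$ minimal, either $\algB'\subseteq\algB$ or $\algB'\cap\algB=\emptyset$, proved by a pattern that \emph{does} stay within $\lastcomponent$ (join copies of the $\algB'\!\to\!\algB$ and $\algB\!\to\!\algB'$ witnesses at the root and re-attach the first one at the leaves of the second --- this needs the minimality of $\algB'$ to ensure the roots are not leaves). This dichotomy forces distinct minimal elements to be disjoint, and the rest of the proof picks a maximal $\algB$ disjoint from some minimal $\algB'$, builds a comb‑shaped tree pattern cycling through the disjoint minimal elements, and reaches an absorption contradiction via the algebras $\algC$, $\algC'$ and an explicit $t$-chain. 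If you want to pursue a closure‑under‑intersection plan, you would at minimum need to reproduce that dichotomy and the comb/absorption argument for non‑emptiness; the two steps you sketch cannot be repaired locally.
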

    For proper $x$'s we put $\algA'_x$ to be the unique minimal algebra provided by the previous claim,
    and for other $x$'s we put $\algA'_x = \algA_x$.

    Take an arbitrary constraint $(\tuple{x},\algC)$.
    By symmetry it suffices to show that it is subdirect with respect to the last coordinate. 
    Let $n$ be the largest number such that, 
    \begin{equation*}
      \proj_n\big(\proj_{1,\dotsc,n}(\algC)\cap\prod_{i\in \{1,\dotsc,n\}}\algA'_{x_i}\big) = \algA'_{x_n}
    \end{equation*}
    Clearly $n>1$ and if $n$ is smaller than the arity of $\algC$ define~(for contradiction) the following tree-pattern:
    \begin{enumerate}
      \item take the constraint $(\tuple{x},\algC)$~(make all variables of $\tuple{x}$ different) 
        and choose $n+1$'st variable of this constraint to be a root;
      \item fix any proper variable $x$ and, for each $i \leq n$, 
      \begin{enumerate}
        \item if $x_i$ is proper and $x_i\neq x$ 
          identify a root of new copy of the tree pattern witnessing $(x,\algA'_x)\prop(x_i,\algA'_{x_i})$
          with $x_i$;
        \item otherwise do nothing;
      \end{enumerate}
      \item define leaves of the tree pattern to consist of all the leaves introduced in 2.a~%
        (note that, as $x\neq x_i$, they are leaves of the new pattern),
        together with all the variables $x_i, i\leq n$ from $(\tuple{x},\algC)$ which are equal to $x$.
    \end{enumerate}
    Such a tree pattern is in $\instI$ and defines $(x,\algA'_x)\prop(x_{n+1},\algB)$ for $\algB$ such
    that $\algA'_{x_{n+1}}\not\subseteq\algB$ which implies that $x_{n+1}$ is proper
    and, at the same time, contradicts the claim.
    This clearly implies that $\UCT(\instI)$ has a prime solution and therefore all that remains is to prove the claim.
    
    The remaining part of this subsection contains a proof of the claim.
    Suppose, for a contradiction, that for some fixed $x$,
    there is more than one minimal~(under inclusion) set $\algB$ s.t.  $(x,\algB)\in\lastcomponent$.
    Let $\lastsets$ consists of $\algB$'s such that $(x,\algB)\in\lastcomponent$
    and let $\lastsets'$ consists of minimal under inclusion elements of $\lastsets$.

    Note that, for any $\algB\in\lastsets, \algB'\in\lastsets'$ either $\algB'\subseteq\algB$ or
    $\algB'\cap\algB = \emptyset$. 
    Indeed, suppose otherwise and let $p,q$ be tree-patterns in $\instI$ :
    $\algB'+p = \algB, \algB+q = \algB'$.
    Note that, by the assumption, roots of $p,q$ cannot be their leaves; therefore we can
    create a pattern by joining disjoint copies of $p$ an $q$ at the roots and
    adding  a new disjoint copy of $p$ at every leaf of $q$.
    This creates a tree-pattern in $\instI$ from $x$ to $x$ which defines $\algB'\cap\algB$ from $\algB$ --
    this contradicts the definition of $\algB'\in\lastsets'$.

    Now fix $\algB\in\lastsets$ such that:
    \begin{enumerate}
      \item there exists $\algB'\in\lastsets'$ such that $\algB\cap\algB'=\emptyset$
      \item $\algB$ is maximal, under inclusion, among elements of $\lastsets$ satisfying condition 1.
    \end{enumerate}
    Let $\algB'_1,\dotsc,\algB'_n$ be the elements of $\lastsets'$ which intersect empty with $\algB$.
    Fix tree patterns~(in $\instI$ from $x$ to $x$) $p_i$ such that $\algB'_i+p_i = \algB'_{i+1}$
    and patterns $q,q': \algB+q = \algB'_1, \algB'_n+q' =\algB$.
    Note that none of the above pattern patterns can have a root which is a leaf.

    Create a new pattern, call it $p$, iteratively:
    \begin{enumerate}
      \item start with $q'$;
      \item for $i$ from $n-1$ to $1$:
        for every leaf of the current pattern add a new copy of $p_i$ 
        and identify its root with this leaf;
      \item for every leaf of the current pattern add a copy of $q$ and identify its root with this leaf.
    \end{enumerate}
    Note that $\algB+p=\algB$, and after the iteration of the loop in step 2.
    for $i$ we got a pattern producing $\algB$ from $\algB'_i$.
    
    Next we iteratively modify $p$: take a leaf of $p$ and let $p'$
    be the same pattern as $p$ with the only difference that the fixed leaf of $p$
    is no longer tagged as a leaf in $p'$.
    If $\algB+p'=\algB$ we substitute $p$ for $p'$ and repeat the procedure.

    Fix any leaf of $p$, call it $v$, 
    and define $\algC\leq\algA_x^2$ to be a projection of all the solutions to $p$~%
    (without any restriction as to how the leaves are evaluated) to $v$ and the root.

    \begin{subclaim}\label{subclaim:c}
      The following hold:
      \begin{enumerate}
        \item $\algC$ is subdirect in $\algA_x^2$;
        \item for every $a\in\algA_x$ we have $(a,a)\in\algC$;
        \item for any $i$ and any $a\in\algB'_i$ we have $b\in\algB: (b,a)\in\algC$.
      \end{enumerate}
    \end{subclaim}
    \begin{proof}
      Item 1. holds by $1$-consistency of $\instI$. 
      For item 2. let $p'$ be a subpattern of $p$ which is a path-pattern connecting $v$ to the root of $p$. 
      We got $a\in\{a\}+p'$ by the fact that $\instI$ is SLAC. 
      As any solution to $p'$ extends to a solution of $p$ by $1$-consistency 2. is proved.

      For item 3. let $v'$ be the root of the pattern $p_{i-1}$~(or $q$ if $i=1$)
      on the path from $v$ to the root. 
      The path $p'$ from $v'$ to the root is a path pattern,
      and, similarly to item 2., we have $a\in \{a\}+p'$.
      The tree below $v'$ can be solved with root sent to $a$ and all its leaves in $\algB$~(by the construction of $p$).
      The solution to this tree can be glued with the solution to $p'$ and extended~(again by $1$-consistency of $\instI$)
      to a solution of $p$. 
      This shows fact 3.
    \end{proof}
    Let $\algC'\leq\algA_x^2$ be the same projection of $p$, but we require the solution
    of $p$ to send all the leaves different than $v$ to $\algB$.
    \begin{subclaim}\label{subclaim:cprime}
      The following hold:
      \begin{enumerate}
        \item $\algC'\abs\algC$;        
        \item $\algB + \algC' = \algB$;
        \item every $\algB'_i$ is a subset of $\proj_2(\algC')$;
        \item for every $i$ and every $a\in\algB'_i$ there exists $a'\in\bigcup_j\algB'_j$ such that $(a',a)\in\algC'$.
      \end{enumerate}
    \end{subclaim}
    \begin{proof}
      Item 1. follows directly from the fact that the set of solutions which defines $\algC'$ absorbs the set of all solutions to $p$.
      Item 2. is a straightforwrd consequence of the definition of $p$.
        
      For item 3. let $p'$ be a pattern identical to $p$, but with $v$ no longer a leaf in $p'$. 
      Either $\algB +p' = \algA_x$ or $\algB+p'\in\lastsets$. 
      In the first case item 3. is obvious, and in the second it follows from the maximality of $\algB$.
          
      For item 4. let $p'$ be as in the previous paragraph and 
      let $p''$ be obtained by identifying roots of disjoint copies of $p'$ 
      and a pattern witnessing $(x,\algB)\prop(x,\algB'_i)$.
      Note that by the previous item $\algB+p'' = \algB'_i$ and
      take all the solutions of $p''$ involved~(i.e. solutions which send all the leaves of $p''$ into $\algB$) 
      and set $\algB'$ to be the set of all the values these solutions take on $v$.
      
      Now, as $p''$ can be redefined with $v$ as a root, $\algB'$ is in $\lastsets$ or equal to $\algA_x$.
      Actually, by the construction, $\algB'\cap\algB= \emptyset$ and thus 
      we can take $\algB''\subseteq\algB'$ such that  $\algB''\in\lastsets'$. 
      Let $p'''$ the pattern obtained by identifying the root of the pattern witnessing $(x,\algB)\prop(x,\algB'')$
      with $v$ in $p''$.
      Now $\algB+p'''$ is a non-empty subset of $\algB'_i$ therefore it is in $\lastsets$, and by the minimality of $\algB'_i$
      it has to be equal to $\algB'_i$.
      This clearly implies a solution to $p'$ sending $v$ to $\algB''$ and root of $p'$ to $a$ 
      for any $a\in\algB'_i$.
      This 
      finishes the proof of item 4.
    \end{proof}
    Finally, putting all the things together, we choose an arbtrary $a\in\bigcup_i\algB'_i$ 
    and, using Subclaim~\ref{subclaim:cprime}.4, find $a'\in\bigcup_i\algB'_i$ such that $(a',a)\in \algC'$.
    By repeating this procedure we get $a\in\bigcup_i\algB'_i$ such that $(a,a)$ is included in $(\algC')^{(k)}$ 
    for some $k$.
    
    By Subclaim~\ref{subclaim:c}.3 we get $a'\in\algB$ such that $(a',a)\in\algC$,
    and, by a procedure similar to the one in previous paragraph, but with $\algB$ in place of $\bigcup_i\algB'_i$, 
    $a'$ in place of  $a$ and using Subclaim~\ref{subclaim:cprime}.2, we get 
    $a''\in \algB$ such that both $(a'',a'')$ is included in $(\algC')^{(k')}$
    and $(a'',a)$ in $(\algC)^{(k')}$.

    Let $\algD'$ be $(\algC')^{(kk')}$, and
    $\algD$ be $(\algC)^{(kk')}$.
    Clearly $(a,a),(a'',a'')\in\algD'$ while $(a'',a)\in\algD$;
    $\algB + \algD' = \algB$ and $\algD'\abs\algD$ and let 
    $t$ be the term witnessing this absorption.
    Consider the sequence 
    \begin{multline*}
      t(a'',\dotsc,a''), t(a,a'',\dotsc,a''),t(a,a,a'',\dotsc,a'')\dotsc,\\ \dotsc,t(a,\dotsc,a,a''),t(a,\dotsc,a).
    \end{multline*}
    Since $\algD'\abs\algD$ every two consecutive elements of this sequence create a pair in $\algD'$,
    but the whole sequence starts with $a''\in\algB$ and ends with $a\notin\algB$ which 
    contradicts the fact that $\algB+\algD'=\algB$.
    This finishes the proof of the claim which was the only missing part of this subsection.
  \subsection{There is a SLAC instance within $\algA'_x$'s}
    The following procedure produces a sequence of  instances and their homomorphisms into $\instI$.
    We denote the instances by 
    $(\instJ_1,\cover_1),\dotsc$:
    \begin{enumerate}
      \item take any path pattern from $x$ to $x$ in $\instI$ and obtain an instance $\instJ_1$
        by identifying the start and end variables of the pattern, 
        let $\cover_1$ be a homomorphism from $\instJ_1$ to $\instI$
      \item to obtain $(\instJ_{i+1},\cover_{i+1})$ from $(\instJ_i,\cover_i)$ 
        take any variable $x$ from $\instI$ and any path pattern $p$ from $x$ to $x$ in $\instI$ with homomorphism $\cover'$
        \begin{enumerate}
          \item if $x$ is not in the image $\cover$ let $\instJ_{i+1}$ be a disjoint union of $\instJ_i$
            and $p$ after identification of beginning and end variables, and $\cover_{i+1}$
            a disjoint union of $\cover_i$ and $\cover'$~(after the identification).
          \item if $x$ appears in the image of $\cover$, $\instJ_{i+1}$ is obtained from the instance obtained 
            in the previous case by identifying the beginning~(and end) variable from $p$ with one of the variables of $\instJ_i$
            mapped to $x$.
        \end{enumerate} 
    \end{enumerate}
    The variables $x$ in 2. and the identifications in 2.b can be chosen in such a way that
    \begin{itemize}
      \item eventually $\cover_i$ maps $\instJ_i$ onto $\instI$, 
      \item for every $i$, every $v$ in $\instJ_i$ and every path-pattern $p$ from $\cover_i(v)$ to $\cover_i(v)$ 
        the $p$ with begging and end identified with each other and with $v$ is a subpattern of some $\instJ_{i'}$.
    \end{itemize}

    Consider any $(\instJ_i,\cover_i)$ with $\algA_v$ defined to be $\algA_{\cover_i(v)}$
    and $\algA'_v = \algA'_{\cover(v)}$. 
    For every variable $v$ of $\instJ_i$ and every $a\in\algA_v$ the map $v\mapsto a$ extends to a solution of $\instJ_i$
    as $\instI$ is SLAC.
    Therefore, by Theorem~\ref{thm:propagatetoabs}, $\instJ_i$ has a prime solution i.e. a solution mapping 
    each $v$ to $\algA'_{\cover(v)}$.

    Every $\instJ_i$ has a prime solution and every $\instJ_j, j>i$ is an extension of $\instJ_i$.
    Therefore the set of values assigned to a variable $v$ by prime solutions of  $\instJ_i$ in the sequence 
    $\instJ_i,\instJ_{i+1},\dotsc$
    will eventually stabilise to some subset  of $\algA'_{\cover(v)}$.
    Moreover if $\cover(v)=\cover(v')$ the variables $v$ and $v'$ will stabilise, by the construction, at the same set.
    These sets can be taken for new $\algA_x$'s to create an instance smaller than $\instI$
    and satisfying singleton linear arc consistency.

\section{Acknowledgments}

  We thank Libor Barto and Jakub Bulin for many fruitful discussions and simplifications of some of the proofs in the paper.
  We thank anonymous reviewers for pointing out some inconsistencies in the presentation and other comments on improving 
  readability of the paper.

\bibliographystyle{plain}

\bibliography{2016}

\appendix

\section{Weaker notion of consistency}
  In this section we will show that that
  a consistency notion strictly weaker then SLAC works for all the CSPs of bounded width.
  This weaker notion has an advantage of working well with SDP relaxations of CSPs.
  
  \begin{definition}
    An arc-consistent instance is a \pq instance if, 
    for every variable $x$, element $a\in A_x$ and every two circles $p,q$ at $x$,
    there exists $j$ such that $a\in a+ j(p+q) +p$.
  \end{definition}
  \noindent Using this definition we can state an analogue of Theorem~\ref{thm:main}:
  \begin{theorem}\label{main:pq}
    Every \pq instance over an \SDm template has a solution.    
  \end{theorem}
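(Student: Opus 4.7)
The plan is to mimic the proof of Theorem~\ref{thm:main}: given a \pq instance $\instI$, find pp-definable proper subalgebras $\algA'_x \leq \algA_x$ such that the restriction of $\instI$ to the $\algA'_x$'s is again a \pq instance. Iterating until every $\algA_x$ is a singleton produces a solution. As in the main theorem, we split into two cases depending on whether some $\algA_x$ has a non-trivial absorbing subuniverse.

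For the absorption case, the tree-pattern construction of $\algA'_x$ and the application of Theorem~\ref{thm:propagatetoabs} in the first subsection of Section~\ref{sect:abs} transfer essentially verbatim, using the hypothesis that every $a \in \algA_x$ extends to a solution of $\instI$, which remains available in a \pq instance. The iterative construction of $\instJ_i$ in the second subsection requires more care: where the SLAC proof attaches a single cycle at each step, we must attach a \pq-pattern $j(p+q)+p$ so that the property ``every $v \mapsto a$ extends to a solution of $\instJ_i$'' is preserved as $i$ grows. Once this modification is in place, Theorem~\ref{thm:propagatetoabs} produces the prime solutions whose stabilised values define the new $\algA'_x$, and Proposition~\ref{prop:absconst} supplies the necessary constant tuples exactly as in the SLAC argument.

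For the no-absorption case, fix $x$ with $|\algA_x| > 1$ and a congruence $\congr{x}$ making $\algA_x/\congr{x}$ simple. For any cycle $p$ at $x$, the algebra obtained by projecting the solutions of $p$ to its two endpoints and quotienting by $\congr{x}$ on both coordinates is, by Theorem~\ref{thm:abs} and the absence of absorption, either the full product~(when linked) or the graph of a permutation $\sigma_p$ of $\algA_x/\congr{x}$~(when not linked); call $p$ proper in the latter case. The key technical point is to replace the SLAC-immediate identity $\sigma_p = \mathrm{id}$ by a \pq-derived argument: taking $q = p$ in the defining condition gives $\sigma_p(\sigma_p \sigma_p)^j(a/\congr{x}) = \sigma_p^{2j+1}(a/\congr{x}) = a/\congr{x}$ for some $j$, so every orbit of $\sigma_p$ has odd length; varying $q$ over all proper cycles constrains the group $G = \langle \sigma_p : p \text{ proper} \rangle$ via the relations $\sigma_p(\sigma_q \sigma_p)^j \cdot a/\congr{x} = a/\congr{x}$, and combined with an \SDm argument paralleling Corollary~\ref{cor:newloops} this forces $G$ to act trivially. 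Once $\sigma_p = \mathrm{id}$ is in hand, the propagation to other variables, the $1$-consistency of $\instI'$, and the verification that $\instI'$ is a \pq instance all follow the template of Section~\ref{sect:noabs} with Corollary~\ref{cor:newloops} reapplied to $j(p+q)+p$ wherever SLAC was previously invoked on a bare cycle.

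The main obstacle is this permutation-triviality step in the no-absorption case: deriving $\sigma_p = \mathrm{id}$ from the \pq condition rather than from the direct inclusion $a \in a+p$. If the group-theoretic route proves awkward, a safer fallback is to define $\algA'_x$ as the union of a single $G$-orbit of $\congr{x}$-blocks, which is $\sigma_p$-invariant for every proper $p$ by construction; pp-definability of $\algA'_x$ then reduces to realising the orbit as the pp-closure of a single block under a finite witnessing family of proper cycles, after which the remaining arguments of Section~\ref{sect:noabs} go through unchanged.
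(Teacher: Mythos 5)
Your high-level strategy matches the paper's: reduce to a proper subinstance, split on absorption, and replace every place where SLAC gave you a free inclusion $a \in \{a\}+p$ with the \pq condition applied to a suitably amplified cycle $j(p+q)+p$. But two steps need repair.

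In the no-absorption case, your derivation of $\sigma_p = \mathrm{id}$ is not yet a proof. Taking $q=p$ gives only $\sigma_p^{2j+1} = \mathrm{id}$ for some $j$, i.e. that all orbits have odd length, and the subsequent appeal to ``varying $q$ over all proper cycles'' together with ``an \SDm argument paralleling Corollary~\ref{cor:newloops}'' is left unspecified --- it is not clear what relation in the group those quantifications actually produce, nor how \SDm closes the gap from odd orbits to triviality. The fallback (take $\algA'_x$ to be a whole $G$-orbit of $\congr{x}$-blocks) does not help: if $G$ acted transitively, that orbit would be all of $\algA_x$ and you would get no proper reduction, which is precisely the situation you must exclude. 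The clean choice, which the paper uses, is $q = -p$. Since $\sigma_{-p} = \sigma_p^{-1}$, the pattern $j(p+(-p))+p$ induces the permutation $\sigma_p$ on $\algA_x/\congr{x}$ for every $j$, so the \pq condition forces $a \in \sigma_p(a/\congr{x})$ for all $a$, i.e. $\sigma_p = \mathrm{id}$ directly, with no group theory and no extra \SDm reasoning.

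In the absorption case, the first subsection of Section~\ref{sect:abs} does \emph{not} transfer verbatim: the proof of Subclaim~\ref{subclaim:c}, item 2, uses that $\instI$ is SLAC to conclude $a \in \{a\}+p'$ for a path $p'$ from a leaf $v$ to the root, and a \pq instance does not give you this for a bare path. One has to replace the tree pattern $p$ by a chain of $j+1$ copies of $p$ glued leaf-to-root (with $j$ chosen via Claim~\ref{claim:commonj} to stabilise the relevant initial fragments of the root-to-leaf path), and then re-verify the subclaims for the modified pattern. Your write-up of the second subsection (replacing a bare cycle by $j(p+q)+p$ before invoking Theorem~\ref{thm:propagatetoabs}) is the right move; you just need the analogous patch in the first subsection as well, plus the uniform-$j$ lemma (Claim~\ref{claim:commonj}) stated and proved once up front, since both cases lean on it.
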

  \noindent A proof of Theorem~\ref{main:pq} is very similar to the proof of Theorem~\ref{thm:main}.
  This section contains only a list of changes which are necessary to make the former proof work for \pq consistency.
  
  In the remaining part of the section we work with a \pq instance $\inst I$.
  We will find, for every variable $x$, an algebra $\alg A_x'\leq\alg A_x$
  and show that the instance $\inst I$ restricted to the new sets is a \pq instance as well.
  Before we proceed with the proof we need to verify the following easy claim:
  \begin{claim}\label{claim:commonj}
    In the definition of \pq instance the phrase ``there exists $j$'' can be substituted with 
    ``there exists $k$ such that for every $j$ greater than $k$''. 
    In particular we can find a single $j$ which works for all the $a\in A_x$.
  \end{claim}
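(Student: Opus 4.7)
The plan is to reduce the \pq condition to a claim about binary relations on the finite set $A_x$, use eventual periodicity of the induced powers, and apply the \pq hypothesis itself to a family of shifted cycles to cover every residue class of the period.

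Fix $a\in A_x$ and cycles $p,q$ at $x$ in $\instI$, and set $r=p+q$. Let $S$, $T$, and $R=S\circ T$ denote the binary relations on $A_x$ induced by $p$, $q$, and $r$ (so that for any path-pattern $P$ at $x$, the set $a+P$ is the image of $\{a\}$ under the corresponding relation). Then
\begin{equation*}
V_a \;=\; \{\,j\ge 0 : a\in a+jr+p\,\} \;=\; \{\,j\ge 0 : (a,a)\in R^j\circ S\,\}
\end{equation*}
is the set of witnesses of the \pq condition for this $a,p,q$, and the \pq hypothesis guarantees $V_a\ne\emptyset$. Since $A_x$ is finite, the monoid of binary relations on $A_x$ is finite, so there exist $m\ge 0$ and $n\ge 1$ with $R^{j+n}=R^j$ for every $j\ge m$; consequently $V_a\cap[m,\infty)$ is periodic modulo $n$. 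It suffices to show that $V_a$ meets every residue class $c\in\{0,1,\dotsc,n-1\}$ at some index $\ge m$, because periodicity then forces $V_a$ to contain every sufficiently large integer.

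The heart of the argument is to invoke \pq on modified cycles. For each $c\in\{0,\dotsc,n-1\}$ and any $L\ge m/n$, set
\begin{equation*}
p_c' \;=\; (Ln+c)\,r + p, \qquad q_c' \;=\; q + (n-1-c)\,r;
\end{equation*}
both are cycles at $x$ in $\instI$, and a direct expansion of pattern concatenation yields $p_c'+q_c' = (L+1)n\,r$ together with $j(p_c'+q_c')+p_c' = \bigl(j(L+1)n + Ln + c\bigr)r + p$. Applying the \pq hypothesis of $\instI$ to the pair $(p_c',q_c')$ at $a$ produces an index $j$ with $j(L+1)n+Ln+c\in V_a$; since $L\ge m/n$, this witness lies in the periodic range, and periodicity then hands us every larger element of $V_a$ in residue class $c$. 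Taking the maximum over the $n$ residue classes yields a threshold $k_a$ with $(k_a,\infty)\subseteq V_a$, which is the first part of the claim. Finally, finiteness of $A_x$ lets us set $k=\max_{a\in A_x}k_a$, and any single $j>k$ then witnesses \pq simultaneously for every $a\in A_x$.

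The main technical care lies in constructing the modified cycles $p_c',q_c'$ and verifying the concatenation identity $p_c'+q_c'=(L+1)n\,r$: patterns are concatenated strictly left-to-right (so $p+q\ne q+p$ in general), and the identification of $p+q$ with $r$ must be done in the correct order. Everything else---the passage to binary relations, eventual periodicity of $R^j$, and the uniformization over $a\in A_x$---is essentially routine finite-semigroup bookkeeping.
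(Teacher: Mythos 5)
Your proof is correct, but it takes a genuinely different route from the paper's. The paper analyses the sequence of sets $A_k = \{a\} + k(p+q)$ directly and shows it eventually becomes \emph{constant}: it rules out any eventual cycle of period $>1$ by taking $a'$ in the symmetric difference of two distinct sets $A', A''$ in the cycle and applying \pq to the circles $l(p+q)$ and $l'(p+q)$; it then shows $a$ lies in the stable set $A'$ by applying \pq to $(p+q)$ and the empty circle, and finally that $A'+p=A'$ by applying \pq to $p$ and $q$. You instead stay at the level of the transformation monoid: you accept an arbitrary eventual period $n$ of $R^j$ and, rather than forcing $n=1$, you fill in every residue class $c$ modulo $n$ by invoking \pq on the tailored circles $p_c'=(Ln+c)r+p$ and $q_c'=q+(n-1-c)r$. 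Both arguments lean on the same finiteness, but the paper's reduction to period one gives a cleaner invariant (the single stable set $A'$) and needs only three applications of \pq, while yours needs one per residue class and is more bookkeeping-heavy; on the other hand your version makes the role of the monoid structure explicit, and the "shifted circle" trick is a nice illustration of how flexible the \pq hypothesis is. One small polish: take $L=\lceil m/n\rceil$ rather than ``any $L\ge m/n$'' so that $Ln\ge m$ holds for integers, and note that periodicity of $V_a\cap[m,\infty)$ is two-sided (since $R^{j+n}=R^j$ is an equality), which is what lets a single witness in a residue class propagate to all indices $\ge m$ in that class.
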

  \begin{proof}
    Consider the sequence $A_0=\{a\}, A_1 = A_0 +(p+q), A_2= A_1 +(p+q),\dotsc$. 
    First we show that this sequence needs to stabilise. 
    Suppose, for a contradiction, that we have two sets $A',A''$ such that 
    $A' + l(p+q)= A''$ and $A'' + l'(p+q) = A'$.
    Without loss of generality we assume that there is $a'\in A'\setminus A''$.
    By definition of \pq instance, applied for $a'$ and two patterns $l(p+q)$ and $l'(p+q)$, 
    we get $a'\in A''$ which is a contradiction.

    Let $A'$ be the set that stabilises the previous sequence.
    The fact that  $a\in A'$ follows from the fact that $a$ belongs to infinitely many sets in the sequence,
    which is consequence of the definition of \pq instance for $a$,  the  pattern $(p+q)$ and the empty pattern.
    
    It remains to prove that $A'+p=A'$:
    let $A'' = A'+p$
    so that $A'' +q =A'$. 
    If $A'\neq A''$ then an element in $A'\setminus A''$ directly contradicts the definition
    of a \pq instance and an element in $A''\setminus A'$ contradicts the definition with roles of $p$ and $q$ exchanged.
  \end{proof}

  In the remaining part of these section we freely use results from 
  section 7 and section 8 as they do not require a context of a CSP instance.
  The remaining two section deal with two cases of the which can appear: 
  either some  $\alg A_x$'s has a proper absorbing subuniverse none of the do.

  \subsection{There is no absorption in the instance~(analogue of section 9)}
    In the main part of section 9 all the claims hold
    and the only difference is in the proof of item 4 of claim 2: 
    we cannot assume that $a'/\alpha_x +p \supseteq a'/\alpha_x$.
    Instead w use item 3 and note that if the isomorphism sends $a'/\alpha_x$ to a class that does not contain $a'$
    then $a'$ with the patterns $p$ and $-p$ contradict the definition of a \pq instance.

    The subsection 9.1 remains unchanged. 
    In the subsection 9.2. in it is shown that the smaller instance in a SLAC instance,
    in here we will show that it is a \pq instance:
    we fix an element $a\in A_y$ and two patterns $p',q'$ from $y$ to $y$.
    By claim~\ref{claim:commonj} we find $j$ such that for all $a'\in A_y$ we have $a'\in \{a'\} + j(p'+q') +p'$ and follow 
    the same proof as in section 9.2 with $j(p'+q')+p'$ taken in place of $p$.

  \subsection{There is absorption in the instance~(analogue of section 10)}\label{sect:pqabs}
    The modification in this case are more subtle than in the previous one.
    Section 10.1 contains a single claim and two subclaims.
    In order to prove that single claim we need to work with a different pattern $p$%
    ~(which is defined right before subclaim 1).

    Fix $p$ and $v$~(a leaf of $p$) to be like in the last paragraph before subclaim 1.
    Denote by $r$ the path pattern from $v$ to the root of $p$.
    The path pattern $r$ has a number of initial fragments that start and end in $x$~%
    (the root of $q$ and the roots of $p_i$'s provide such initial fragment).
    Choose $j$ to be such that for every $a\in A_x$ and for every such initial fragment $r'$ of $r$
    $a\in\{a\}+jr +r'$~%
    (this can be done by Claim~\ref{claim:commonj}).
    The proof in section 10.1 continues with a new pattern taken for $p$.
    The new pattern is obtained by joining $j+1$ copies 
    of $p$ at each step identifying leaf $v$ with the root of $p$~%
    (to obtain a slender tree/comb of trees corresponding to $p$).
    The new $v$ of such pattern is the single $v$ not joined to a root of $p$ 
    and the root of this pattern is the single root of $p$ not joined to any $v$.

    Subclaim 1 follows, by the reasoning identical to the one in section 10.1, 
    from the construction of the new pattern $p$.
    In subclaim 2 items 1 and 2 hold directly from definitions.
    Items 3 and 4 hold for the original $p$ as their proofs do not depend on the SLAC assumption.
    Fortunately item 4 for the original $p$ immediately implies item 4 for the modified $p$ as the new $\alg C'$ 
    is essentially the old $\alg C'$ composed with itself $j+1$ times.
    This immediately implies item 3 for the new $p$ and the subclaim 2 holds.
    The rest of the subsection 10.1 requires no modifications.

    In subsection 10.2 an instance is constructed by adding path-patterns which are circles at $x$.
    The original reasoning stands because of two important properties of such patterns in any SLAC instance:
    \begin{enumerate}
      \item if all the circle patterns at $x$ have solutions sending $x\mapsto a$~%
        (in the original 10.2 the solutions were in the smaller instance)
        then the part of the definition of a SLAC holds for $x$ and $a$;
      \item for any variable $y$ in such a circle pattern and any $b\in A_y$
        the circle pattern can be solved~%
        (in the original 10.2 the solution was in the bigger instance)
        in such a way that $y\mapsto b$.
    \end{enumerate}

    In our modification we start with $a\in A_x$ and two patterns $p$ and $q$.
    Instead of a plain $p$ circle~(as in the original proof) we will consider 
    a $j(p+q)+p$ circle for some $j$~(to be determined later).
    Note that no matter what $j$ we choose the analogue of item 1. holds:
    finding a solution to our circle pattern guarantees no problems with $a$ in 
    the definition of \pq instance.

    In the remaining part of this section we will find $j$ such that $j(p+q) + p$
    satisfied item 2.
    By claim~\ref{claim:commonj} we can find $j$ such that for every $a\in A_x$
    $a\in \{a\} + j(p+q) +p$, but for other variables in the pattern we need to tweak it.

    Pick a variable $y$ in $p+q$ and say that it splits $p$ in two parts i.e. $p'+p'' = p$ 
    and $p'' + q + p'$ is a path pattern from $y$ to $y$.
    Find $j'$ such that, for every $b\in A_{y}$ the set $\{b\}+j'(p''+q+p')$ contains $b$ and is stabilised
    and, at the same time $\{b\} -j'(p''+q+p')$ contains $b$ and is stabilised~%
    (this can be done by applying claim~\ref{claim:commonj} twice).
    
    Now if $j$~(which we are attempting to choose) is greater than $2j'$ then every copy of $y$
    in the pattern $j(p+q)+p$ is either further than $j'$ steps of the form $(p''+q+p')$
    from the end or from the beginning.
    Say it is $j''\geq j'$ steps from the end~(in the other case we would use $-(p''+q+p')=-p'-q-p''$)
    then, for every $b$ we have $\{b\} + j''(p''+q+p')$ stabilised.
    But, by the same reasoning as in claim~\ref{claim:commonj}, this implies that $A$ defined as $\{b\} + j''(p''+q+p') +p''$ 
    is a subset of $A_x$ such that $A+p = A$ and $A+q=A$. 
    Thus we have
    $A = \{b\} + j''(p''+q+p') +p'' + p' + p''$.
    If this particular copy of $y$ was $j'''$ steps of the form $(p''+q+p')$ from the beginning the 
    set $\{b\} + j''(p''+q+p') +p'' + p' + j'''(p''+q+p')$ is still the stabilised set i.e. it contains $b$.
    Thus the  item 2 was proved for all the copies of $y$.
    To finish the proof of item 2. in  general we take $j$ large enough so it would work for every variable in $p+q$.
    The remaining part of the proof is identical as the original.

  \section{A weak analogue of Theorem~\ref{thm:propagatetoabs}}
    \noindent In this section we state and prove a slightly weaker version of Theorem~\ref{thm:propagatetoabs}:
    \begin{theorem}\label{thm:propagatetoabsmore}
      Let $\instI$ be an instance such that any $a\in\algA_x$ extends to a solution of $\instI$.
      Let $\instI'$ be an {\em arc-consistent} instance on the same set of variables as $\inst I$ such that:
      \begin{enumerate}
        \item for every variable $x$ we have $\algA'_x\abs\algA_x$~(where $\algA'_x$ is from $\instI'$ and $\algA_x$ from $\instI$),
        \item for every constraint $((x_1,\dotsc,x_n),R')$ in $\instI'$ there is a corresponding constraint $((x_1,\dotsc,x_n),R)$ in
          $\instI$ with $\algR'\abs\algR$.
      \end{enumerate}
      Then $\inst I'$ has a solution.
    \end{theorem}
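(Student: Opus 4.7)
Plan. My plan is to adapt the proof of Theorem~\ref{thm:propagatetoabs}, with $\instI'$ providing the ``prime'' structure in place of an assumed prime solution to $\UCT(\instI)$. First, under the natural reading of condition~(2)---that the constraint scopes of $\instI'$ match those of $\instI$, with $\algR' \abs \algR$ on each corresponding pair---$\UCT(\instI)$ and $\UCT(\instI')$ sit on the same underlying tree. Arc-consistency of $\instI'$ produces a solution of $\UCT(\instI')$, and I would lift it to a solution of $\UCT(\instI)$ sending every $v$ into $\algA'_{\cover(v)}$ and every constraint tuple into the absorbing subuniverse $\algR'$ of the corresponding $\algR$.

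Next I would set $\algE$ to be the algebra of all solutions of $\UCT(\instI)$ and $\algF \leq \algE$ to be the subalgebra of those solutions $e$ that additionally satisfy $e(v) \in \algA'_{\cover(v)}$ at every $v$ and $(e(v_1),\dots,e(v_n)) \in \algR'$ at every constraint. The extension hypothesis on $\instI$ makes $\algE$ subdirect in $\prod_v \algA_{\cover(v)}$. The common-term bullet of the absorption toolkit (section~\ref{sect:absdef}) supplies a single term $t$ simultaneously witnessing $\algA'_x \abs \algA_x$ for every $x$ and $\algR' \abs \algR$ for every constraint; applied coordinate-wise on $\UCT(\instI)$, it establishes $\algF \abs \algE$, and the lift from the previous paragraph ensures $\algF \neq \emptyset$.

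From this point the proof of Theorem~\ref{thm:propagatetoabs} should transfer almost verbatim. I would redefine each $\algA'_x$ as the projection of $\algF$ onto any $v \in \cover^{-1}(x)$ (well-defined by the transitive $\UCT$-automorphisms on each fiber, and still absorbing $\algA_x$); enumerate the variables of $\instI$ as $x_1, x_2, \dotsc$ and, by induction on $k$, produce an element of $\algF$ constant on each of $\cover^{-1}(x_1),\dotsc,\cover^{-1}(x_k)$. The inductive step picks a finite $W \subseteq \cover^{-1}(x_{k+1})$, uses fiber-preserving automorphisms of $\UCT(\instI)$ to force all projections $\proj_w(\algF')$, $w \in W$, to coincide with a common $\algB$, then combines $\proj_W(\algF') \abs \proj_W(\algE') \cap \algB^W$ with Proposition~\ref{prop:absconst} to extract a constant tuple in $\algF'$. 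A compactness argument then glues the family into a global solution of $\UCT(\instI)$ constant on every $\cover^{-1}(x)$ and lying in $\algF$; this descends via $\cover$ to an evaluation of the variables of $\instI$ that respects every $\algA'_x$ and every $\algR'$, i.e.\ a solution of $\instI'$.

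The main obstacle I anticipate is the initial lifting step---verifying cleanly that a solution of $\UCT(\instI')$ really does yield a solution of $\UCT(\instI)$ landing in $\algF$, which hinges on the precise scope-matching hidden in condition~(2) (any constraint of $\instI$ without a counterpart in $\instI'$ would force additional branches of $\UCT(\instI)$ for which the $\algR'$-restriction is vacuous and $\algF$-membership would need to be certified separately). Once this identification and the resulting $\algF \abs \algE$, $\algF \neq \emptyset$ are in place, the rest is a direct replay of the mechanism of Theorem~\ref{thm:propagatetoabs}, with the common-term absorption doing the decisive work.
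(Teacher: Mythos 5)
Your proposal is correct and follows exactly the route the paper intends: the paper declares the proof of Theorem~\ref{thm:propagatetoabsmore} to be ``identical to the proof of Theorem~\ref{thm:propagatetoabs},'' and you carry out precisely that adaptation, replacing the hypothesis of a prime solution to $\UCT(\instI)$ by arc-consistency of $\instI'$ (which supplies a solution of $\UCT(\instI')$ and hence a nonempty $\algF$), and strengthening the definition of $\algF$ to require membership in the $\algR'$'s so that the final compactness step descends to a solution of $\instI'$ rather than merely a solution of $\instI$ landing in the potatoes $\algA'_x$. Your closing caveat about the scope-matching in condition~(2) is a fair observation but matches the intended reading (a one-to-one correspondence of constraints making $\UCT(\instI)$ and $\UCT(\instI')$ share the same underlying tree), so no gap remains.
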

    \noindent A proof of Theorem~\ref{thm:propagatetoabsmore} is identical to the proof of Theorem~\ref{thm:propagatetoabs} and
    we leave it to the reader.
    One consequences of this theorem is that we can slightly generalize the reasoning from section 10.2 to obtain:
    \begin{corollary}
      Let $\instI$ be a \pq instance .
      Let $\instI'$ be an {\em arc-consistent} instance on the same set of variables as $\inst I$ such that:
      \begin{enumerate}
        \item for every variable $x$ we have $\algA'_x\abs\algA_x$~(where $\algA'_x$ is from $\instI'$ and $\algA_x$ from $\instI$),
        \item for every constraint $((x_1,\dotsc,x_n),R')$ in $\instI'$ there is a corresponding constraint $((x_1,\dotsc,x_n),R)$ in
          $\instI$ with $\algR'\abs\algR$.
      \end{enumerate}
      Then $\inst I'$ has a \pq subinstance.
    \end{corollary}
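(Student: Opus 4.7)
The plan is to mirror the construction from Subsection~\ref{sect:pqabs}, but using Theorem~\ref{thm:propagatetoabsmore} instead of Theorem~\ref{thm:propagatetoabs} at the place where solutions are extracted. This way the absorbing subinstance $\instI'$ is given to us upfront rather than reconstructed from $\UCT(\instI)$, but the rest of the bookkeeping that made \pq consistency transfer through circle patterns in the SLAC argument still applies verbatim.

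First I would reuse the inductive construction of auxiliary instances $(\instJ_i,\cover_i)$ from Subsection~10.2, where each step glues in a fresh copy of a circle pattern from some variable $x$ of $\instI$ to itself, identifying beginning and end with each other and---when $x$ is already hit by $\cover_i$---with a chosen pre-image. As in Subsection~\ref{sect:pqabs}, when the circle pattern is of the form $p$ from the original argument I would instead use a circle $j(p+q)+p$, choosing $j$ via Claim~\ref{claim:commonj} large enough to also handle every intermediate variable (the $j>2j'$ bound from the last paragraph of Subsection~\ref{sect:pqabs}). The enumeration is arranged so that every variable of $\instI$ eventually appears in the image of some $\cover_i$ and every circle pattern at every variable is a subpattern of some $\instJ_{i'}$.

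Next, fix $i$, view $\instJ_i$ as an instance over $\tempA$ with $\algA_v:=\algA_{\cover_i(v)}$, and form the arc-consistent sub-instance $\instJ_i'$ with $\algA'_v:=\algA'_{\cover_i(v)}$ and constraints pulled back from $\instI'$. Hypothesis~1 and~2 of the corollary transfer directly: $\algA'_v\abs\algA_v$ and each $\algR'\abs\algR$ on $\instJ_i'$. Because $\instI$ is \pq, every $a\in\algA_v$ extends to a solution of $\instJ_i$ (the circle patterns used in the construction are by hypothesis solvable with both endpoints sent to any prescribed $a$). Theorem~\ref{thm:propagatetoabsmore} then yields a solution of $\instJ_i'$.

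Finally, for each variable $x$ of $\instI$, let $\algA''_x\subseteq\algA'_x$ be the set of values that can be assigned to any fixed pre-image $v\in\cover_i^{-1}(x)$ by some solution of $\instJ_i'$; by the symmetry of the construction this is independent of the chosen $v$, and as $i$ grows the sets decrease and stabilise. Restricting $\instI'$ to the $\algA''_x$ yields the candidate \pq subinstance. The main obstacle, and the reason Claim~\ref{claim:commonj} and the $j>2j'$ argument from Subsection~\ref{sect:pqabs} are needed, is the verification that for every $x$, every $a\in\algA''_x$, and every pair of circles $p,q$ at $x$ there is a single $j$ such that $a\in a+j(p+q)+p$ witnessed by a solution whose intermediate values at every variable $y$ along the circle land in $\algA''_y$; once this is established, the stabilised sets define a subinstance of $\instI'$ satisfying the definition of \pq consistency, as required.
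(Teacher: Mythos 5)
Your proposal is correct and follows essentially the same path the paper has in mind: you reuse the $\instJ_i$ construction of Section~10.2, adapted with the $j(p+q)+p$ circles and the $j>2j'$ argument of Subsection~\ref{sect:pqabs}, and replace the appeal to Theorem~\ref{thm:propagatetoabs} with Theorem~\ref{thm:propagatetoabsmore}, which lets the given absorbing arc-consistent $\instI'$ take over the role previously played by the constructed prime subinstance. The paper itself only states that the corollary is a ``slight generalization'' of Section~10.2 via Theorem~\ref{thm:propagatetoabsmore}, and your write-up fills in exactly that outline.
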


\end{document}